 \newcommand\numberthis{\addtocounter{equation}{1}\tag{\theequation}}
\DeclareMathOperator*{\argmax}{arg\,max}
\renewcommand{\algorithmcfname}{Mechanism}
\begin{document}

\title{Getting More by Knowing Less:\\ Bayesian Incentive Compatible Mechanisms for Fair Division}

\author[a]{Vasilis Gkatzelis}
\author[b]{Alexandros Psomas}
\author[a]{Xizhi Tan}
\author[b]{Paritosh Verma}

\affil[a]{Department of Computer Science, Drexel University.

\texttt{{\{gkatz,xizhi\}@drexel.edu}}}

\affil[b]{Department of Computer Science, Purdue University.

\texttt{{\{apsomas,verma136\}@cs.purdue.edu}}}

\date{}

\newtheorem{example}{Example}
\newtheorem{theorem}{Theorem}
\newtheorem{lemma}{Lemma}
\newtheorem{corollary}{Corollary}
\newtheorem{definition}{Definition}
\newtheorem{proposition}{Proposition}
\newtheorem{claim}{Claim}

\newcommand{\sdpoplus}{\mathrm{SD}^+\text{efficiency}}
\newcommand{\IH}{\mathrm{IH}}
\newcommand{\sdpo}{\mathrm{SD}\text{ efficiency}}
\newcommand{\agents}{\mathcal{N}}
\newcommand{\items}{\mathcal{M}}
\newcommand{\allo}{\mathbf{x}}
\newcommand{\alloy}{\mathbf{y}}
\newcommand{\alloa}{\mathbf{a}}
\newcommand{\allob}{\mathbf{b}}
\newcommand{\alloi}{\mathbf{x}}
\newcommand{\val}{\mathbf{v}}
\newcommand{\bids}{\mathbf{b}}
\newcommand{\bidc}{\mathbf{c}}
\newcommand{\bidd}{\mathbf{d}}
\newcommand{\bide}{\mathbf{e}}
\newcommand{\bidf}{\mathbf{f}}
\newcommand{\bidg}{\mathbf{g}}
\newcommand{\bidh}{\mathbf{h}}

\def \R {\mathbb{R}}
\newcommand{\E}{\mathbb{E}}
\newcommand{\matD}{\mathcal{D}}
\newcommand{\sdpgeq}{\succeq^+}
\newcommand{\sdpgt}{\succ^+}
\newcommand{\sdgeq}{\succeq}
\newcommand{\sdgt}{\succ}
\newcommand{\PO}{\text{Pareto efficiency}}
\newcommand{\fPO}{\mathrm{fPO}}
\newcommand{\rrpass}{\mathrm{RR}^\text{pass}}
\newcommand{\sdefoneplus}{\mathrm{SD}\text{-}\mathrm{EF}1^+}
\newcommand{\EFone}{\mathrm{EF}1}
\newcommand{\mech}{\mathcal{A}}
\newcommand{\desired}{\mathcal{M}^+}
\newcommand{\prefer}{\mathbf{p}}
\newcommand{\type}{\mathbf{t}}
\newcommand{\ia}{\textsc{IncrementalAccommodation}}
\newcommand{\rrp}{\textsc{IncrementalAccommodation}}
\newcommand{\se}{\textsc{SplitEqual}}

\newcommand{\qint}{q}
\newcommand{\qpos}{q^{\texttt{pos}}}

\newcommand{\fr}{f}
\newcommand{\ft}{f^*}

\maketitle

\begin{abstract}
    We study fair resource allocation with strategic agents. It is well-known that, across multiple fundamental problems in this domain, truthfulness and fairness are incompatible. For example, when allocating indivisible goods, no truthful and deterministic mechanism can guarantee envy-freeness up to one item (EF1), even for two agents with additive valuations. Or, in cake-cutting, no truthful and deterministic mechanism always outputs a proportional allocation, even for two agents with piecewise constant valuations. Our work stems from the observation that, in the context of fair division, truthfulness is used as a synonym for Dominant Strategy Incentive Compatibility (DSIC), requiring that an agent prefers reporting the truth, no matter what other agents report.

In this paper, we instead focus on Bayesian Incentive Compatible (BIC) mechanisms, requiring that agents are better off reporting the truth in expectation over other agents' reports. We prove that, when agents know a bit less about each other, a lot more is possible: BIC mechanisms can guarantee fairness notions that are unattainable by DSIC mechanisms in both the fundamental problems of allocation of indivisible goods and cake-cutting. We prove that this is the case even for an arbitrary number of agents,  as long as the agents' priors about each others' types satisfy a neutrality condition. Notably, for the case of indivisible goods, we significantly strengthen the state-of-the-art negative result for efficient DSIC mechanisms, while also highlighting the limitations of BIC mechanisms, by showing that a very general class of welfare objectives is incompatible with Bayesian Incentive Compatibility. Combined these results give a near-complete picture of the power and limitations of BIC and DSIC mechanisms for the problem of allocating indivisible goods.

\end{abstract}

\section{Introduction}

A central goal within the fair division literature is to design procedures that distribute indivisible or divisible goods among groups of agents. For a resource allocation procedure to reach fair and efficient outcomes, however, it needs to have access to the preferences of the participating agents. Whenever this information is private and each agent can strategically misreport it, the literature is riddled with negative results: eliciting the true preferences of the agents while simultaneously guaranteeing fairness and efficiency is usually impossible. One of the main underlying reasons is that monetary payments are commonly infeasible or undesired  in fair division, and designing ``truthful'' mechanisms in the absence of such payments poses often insurmountable obstacles.

Another factor that contributes to the plethora of impossibility results is that ``truthfulness'' in this literature has been used as a synonym for ``Dominant Strategy Incentive Compatibility'' (DSIC). This is a very demanding notion of incentive compatibility which requires that every agent prefers truth-telling to misreporting, \emph{no matter what the other agents' reports are}. 
This ensures that no agent will have an incentive to lie, even if they know \emph{exactly} what every other agent's preferences are. However, in most real-world applications it is unreasonable to assume that the agents know that much about each other, so this notion may be unnecessarily stringent. In fact, if we assume that an agent would strategically misreport their preferences only if they have enough information to suggest that this would be beneficial then, in some sense, the less an agent knows the less likely it is that they would misreport. Rather than going to the other extreme and assume the agents have no information about each other, in this paper we consider the well-established truthfulness notion of ``Bayesian Incentive Compatibility'' (BIC). 
This instead assumes that each agent's preference is drawn from a publicly known prior distribution and the requirement is that telling the truth is the optimal strategy for each agent \emph{in expectation over the other agents' reports}. Simply put, our goal in this paper is to understand what can and what cannot be achieved by BIC mechanisms in the context of fair division, and the extent to which they can outperform DSIC mechanisms.

\subsection{Our Contributions}

We explore the power and limitations of incentive compatible mechanisms when allocating either indivisible or divisible goods among $n$ strategic agents with additive valuations, i.e., such that an agent's value for a set of goods is equal to the sum of her values for each good separately.

\paragraph{Allocating indivisible goods.} It is well-known that the only deterministic DSIC mechanism for allocating indivisible goods among two agents in a Pareto efficient way is the patently unfair ``serial dictatorship'' mechanism~\cite{klaus2002strategy,schummer1996strategy}, where one of the agents is allocated \emph{all} the goods that they have a positive value for, and the other agent receives the leftovers.
Since Pareto efficiency may be a lot to ask for,
one could imagine that a more forgiving notion of efficiency based on stochastic dominance (SD)~\cite{BogomolnaiaMoulin01} may allow us to overcome this obstacle. Our main negative result shows that this is not the case: even for the more permissive notion of $\mathrm{SD}^+$efficiency (defined in Section~\ref{subsec:sdpoplus negative result}) instead of Pareto efficiency, we prove that the only deterministic DSIC mechanism remains the serial dictatorship, and this holds even if for every agent $i$, their value for each good $j$ can take one of only \emph{three} possible values, i.e. $v_{i,j} \in \{ 0 , x , y \}$ (\textbf{Theorem~\ref{theorem:sd_po_plus_result}}). In fact, we show that our stronger negative result is tight in more than one ways: if we were to further relax $\mathrm{SD}^+$efficiency to $\mathrm{SD}$ efficiency (also defined in Section~\ref{subsec:sdpoplus negative result}), or to further restrict the number of possible agent values to two instead of three, this would permit more deterministic DSIC mechanisms beyond serial dictatorship.

Surprisingly, it is also known that, even if we were to completely drop any efficiency requirement in order to avoid the extreme unfairness of the serial dictatorship, our ability to provide non-trivial fairness guarantees would still be severely limited. For example, if we wanted the outcome to be \emph{envy-free up to one good} (EF1), which is a well-studied relaxation of envy-freeness~\cite{LMMS2004,Budish2010}, this is known to be impossible to achieve using \emph{any} deterministic DSIC mechanism, even for instances with just two agents~\cite{amanatidis2017truthful}. Our main positive result shows that we can simultaneously overcome all of the aforementioned negative results if instead of DSIC we focus on BIC mechanisms. In fact, we can achieve this using a variation of the very practical Round-Robin procedure (where agents take turns choosing a single good each time). We prove that our variation, $\rrpass$, is BIC for any distribution over agent preferences that is ``neutral''~\cite{moulin1980implementing}, and it returns allocations that always combine fairness (in the form of EF1) with efficiency (in the form of $\mathrm{SD}^+$efficiency) (\textbf{Theorem~\ref{thm: rr main}}). 
Apart from providing a separation between BIC and DSIC mechanisms by bypassing both of these negative results, this also exhibits a third way in which our result from Theorem~\ref{theorem:sd_po_plus_result} is tight. Our definition of a ``neutral'' prior is satisfied by the common assumption in the stochastic fair division literature, that $v_{i,j}$s are drawn i.i.d. from an agent-specific distribution $\mathcal{D}_i$~\cite{bai2021envy,manurangsi2020envy,manurangsi2021closing,amanatidis2017approximation}, going beyond, it also allows for certain priors where valuations are neither independently nor identically distributed (see~\Cref{subsec:rr is bic}).

Encouraged by our positive result, which exhibits a practical BIC mechanism that combines fairness and $\mathrm{SD}^+$efficiency, one may wonder whether BIC mechanisms can even optimize well-motivated welfare functions. For example, a lot of recent work in fair division has focused on maximizing the Nash social welfare (the geometric mean of agents' utilities). Apart from guaranteeing Pareto efficiency, this outcome is also known to be EF1 for agents with additive valuations~\cite{caragiannis2019unreasonable}, and it is used in Spliddit, a popular platform for allocating indivisible goods~\cite{goldman2015spliddit,shah2017spliddit}. Our findings along this direction are negative: we show that even for a very simple neutral distribution (uniform over normalized valuations), a mechanism that maximizes any welfare function from a large well-studied family that satisfies the Pigou-Dalton principle (which, e.g., includes the Nash social welfare, the egalitarian social welfare, the utilitarian social welfare, and the leximin criterion) is not BIC (\textbf{Theorem~\ref{theorem:BIC-welfare-max}}). 
Finally, we prove that if we aim for the strong efficiency guarantee of fractional Pareto optimality (fPO), requiring that the outcome is not Pareto dominated even by randomized allocations, then BIC mechanisms cannot even satisfy \emph{fullfillment}, which is a fairness notion that is much weaker than EF1 (\textbf{Theorem~\ref{thm: fulfilling}}). Informally, an allocation is fulfilling if, whenever agent $i$ values at least $n$ goods strictly positively, her utility for her bundle is strictly positive. 

\paragraph{Allocating divisible goods.}
Moving beyond indivisible goods, we then consider mechanisms for the fair allocation of continuous, divisible, heterogeneous resources, using the classic cake-cutting model. The cake, which captures a divisible resource, is represented as the interval $[0,1]$ and it must be distributed among $n$ agents with different valuations on different parts of the interval. Typically, agents' valuation functions are described by probability density functions. 
Recently,~\cite{tao2022existence} showed that, even for $n=2$ agents with piecewise constant valuation functions, there is no deterministic and DSIC cake-cutting mechanism that always outputs a proportional allocation, i.e., one where each agent receives at least a $1/n$ fraction of their total value. Our main result  for this problem circumvents this impossibility for any number of agents by relaxing to BIC mechanisms (\textbf{Theorem~\ref{thm: positive for cake cutting}}). Specifically, we propose a deterministic cake-cutting mechanism that is proportional and BIC for all neutral priors. Our mechanism works over a sequence of $n$ rounds: in round $i$, agent $i$ arrives and agents $1$ through $i-1$ are asked to cut the pieces allocated to them so far into $i$ equal-sized and equal-valued crumbs.
Then, agent $i$ takes one crumb from each of the first $i-1$ agents. The fact that this operation can always be performed crucially relies on the result of~\cite{alon1987splitting} concerning the existence of a perfect partition.\footnote{Informally, a partition is said to be perfect if the value of every piece is the same for every agent.} Apart from piecewise constant valuation functions, our mechanism also provides the same guarantees for piecewise linear valuations, which can be succinctly represented, and for which perfect partitions can be computed efficiently. 

\subsection{Related Work}


Bayesian incentive compatibility in fair division has been considered in the random assignment problem, with $n$ goods and $n$ agents with \emph{ordinal} preferences.~\cite{dasgupta2022ordinal} consider ordinal BIC mechanisms (OBIC), and show that the probabilistic serial mechanism, which is ordinally efficient and satisfies equal treatment of equals, is OBIC with respect to the uniform prior. The notion of OBIC of~\cite{dasgupta2022ordinal} has been used in voting theory, dating back to~\cite{d1988ordinal}, to escape infamous dictatorship results~\cite{gibbard1973manipulation,satterthwaite1975strategy}.
For cardinal valuations, such as ours,~\cite{fujinaka2008bayesian} studies the case of a single indivisible good that must be allocated to one of $n$ agents whose (private) values for the good are drawn from a known product distribution. When payments are possible,~\cite{fujinaka2008bayesian} designs an envy-free, budget-balanced and BIC mechanism.

Another line of work considers other relaxations of DSIC. For example, in the random assignment problem,~\cite{mennle2021partial} introduce the notion of partial strategyproofness, a relaxation of DSIC, which is satisfied if truthful reporting is a dominant strategy for agents who have sufficiently different valuations for different objects, and show that in the context of school choice, this notion gives a separation between the classic and the adaptive Boston mechanism.
Starting with~\cite{troyan2020obvious}, a number of recent papers~\cite{ortega2022obvious,aziz2021obvious,psomasfair} relax the DSIC requirement and explore the design of not obviously manipulable (NOM) mechanisms. Under NOM, an agent reports their true type, unless lying is obviously better, where the definition of ``obvious'' for a strategy follows recent work in Economics~\cite{li2017obviously}. Closer to our interest here,~\cite{psomasfair} study the allocation of indivisible goods among additive agents, and prove that one can simultaneously guarantee EF1, PO, and NOM.~\cite{ortega2022obvious} show that, in the context of cake-cutting, NOM is compatible with proportionality: an adaptation of the moving-knife procedure satisfies both properties. 

Other ways to escape the aforementioned impossibility results in truthful fair division is restricting agents' valuations, e.g. by focusing on dichotomous~\cite{halpern2020fair,babaioff2021fair,benabbou2021finding,barman2021truthful} or Leontief valuations~\cite{ghodsi2011dominant,friedman2014strategyproof,parkes2015beyond}, or by using money-burning (that is, leave resources unallocated) as a substitute for payments, while trying to minimize the inefficiency that these payment substitutes introduce (e.g., \cite{hartline2008optimal,cole2013mechanism,fotakis2016efficient,friedman2019fair,abebe2020truthful}).
Finally, a recent research thread suggests the study of mechanisms that produce fair allocations in their equilibria~\cite{amanatidis2023round,amanatidis2023allocating}. Interestingly, for agents with additive valuations over indivisible goods, it is known that allocations obtained in equilibria of the Round-Robin algorithm are EF1 with respect to the agents’ true valuation functions~\cite{amanatidis2023allocating}.

\section{Preliminaries}
\textbf{Allocating indivisible goods.}
We are given a set $\items$ of $m$ indivisible items indexed by $[m] \coloneqq \{1, 2, \ldots, m\}$ to be allocated among a set $\agents$ of $n$ agents indexed by $[n]$. An allocation $\allo \in \{0,1\}^{n \times m}$ assigns items to agents such that $x_{i,j} = 1$ if agent $i$ gets item $j$ and $0$ otherwise. We use $\allo_i = (x_{i,1}, \dots , x_{i,m}) \in \{0,1\}^m$ to denote agent $i$'s \emph{allocation} and $X_i = \{j:x_{i,j}=1\}$ to denote agent $i$'s \emph{bundle} in $\allo$. In an allocation, each item $j \in \items$ must be allocated to exactly one agent, i.e., $\sum_{i=1}^n x_{i,j}=1$. Each agent $i \in \agents$ has a private type which is a valuation vector $\val_i \in \mathbb{R}^m_{\geq 0}$ where $v_{i,j}$ is agent $i$'s value for item $j \in \items$. Collectively, the valuations of all the agents are represented by a \emph{valuation profile} $\val = (\val_1, \ldots, \val_n)$. The \emph{utility} of agent $i$ for a given allocation $\allo$, denoted by $u_{i}(\allo)$, is \emph{additive} and defined as $u_{i}(\allo) \coloneqq \sum_{j \in \items} v_{i,j} x_{i,j}$. We will often overload notations and use $u_i(X_k)$ to denote the utility of agent $i \in \agents$ for a specific bundle $X_k \in \items$.

An allocation $\allo$ \emph{Pareto dominates} another allocation $\alloy$ if for all agents $i \in \agents$, we have $u_i(\allo) \geq u_i(\alloy)$ and $u_j(\allo) > u_j(\alloy)$ for some agent $j \in \agents$. An allocation $\allo$ is \emph{Pareto efficient} (or Pareto optimal) if there is no other allocation $\allo'$ that Pareto dominates it.

An allocation $\allo$ is called \emph{envy-free} (EF) if for every pair of agents $i$ and $j \in \mathcal{N}$, agent $i$ values their allocation at least as much as the allocation of agent $j$, i.e., $u_i(\allo_i) \geq u_i(\allo_j)$. However, for the indivisible item settings, an envy-free allocation is not guaranteed to exist. Therefore, we consider a relaxed notion called \emph{envy-free up to one good} (EF1). Formally, an allocation $\allo$ is EF1 if for every pair of agents $i$ and $j \in N$ with $X_j \neq \emptyset$, agent $i$ values their allocation at least as much as the allocation of agent $j$ without agent $i$'s favorite item, i.e., $u_i(X_i) \geq u_i(X_j\setminus \{g\})$ for some $g \in X_j$.

\textbf{Allocating divisible goods.}
We study the cake-cutting problem where the cake is represented by the interval $[0,1]$ and is to be allocated among the set $\agents$ of $n$ agents. An allocation $\alloi = (X_1,\dots, X_n)$ is a collection of mutually disjoint subsets of $[0,1]$ where $X_i \subseteq [0,1]$ denotes the subset of the cake allocated to agent $i \in \agents$. Each agent $i$'s private type is a density (valuation) function $f_i: [0,1] \rightarrow \R_{\geq 0}$ over the cake, such that $\int_0^1 f_i(x)dx =1$ for all $i \in \agents$. We use $\mathbf{f} = (f_1,\ldots,f_n)$ to denote the collection of density functions of all agents. Given a subset $S \subseteq [0,1]$, agent $i$'s utility on $S$ is defined as $u_i(S) = \int_{S} f_i(x)dx$.

We primarily focus on two classes of valuation functions, namely \emph{piecewise-constant} and \emph{piecewise-linear} valuations. For both families, the interval $[0,1]$ can be partitioned into finite intervals. For the former class, the value of the function in each interval is a constant; for the latter class it is linear.


An allocation $\alloi = (X_1,\ldots, X_n)$ is \emph{proportional} if each agent receives her average share of the entire cake. Formally, for each agent $i \in \agents$ we have $u_i(X_i) \geq \frac{1}{n}\cdot u_i([0,1]) = \frac{1}{n}$.

\textbf{Mechanisms and incentive compatibility.}
For both indivisible and divisible goods, each agent $i$ has a private type $t_i$: the valuations $\val_i$ for the first and the functions $f_i$ for the second. The agents, being strategic, may choose to misreport it if doing so increases their utility. We use $\bids = (\bids_1, \dots, \bids_n)$ to represent the reported type $\bids_i$ of all agent $i \in \agents$. A mechanism is represented by an allocation function $\allo(\cdot)$, which takes as input reports $\bids = (\bids_1, \dots, \bids_n)$ and outputs a feasible allocation $\allo(\bids)$. For notational simplicity, we often refer to a mechanism directly using its allocation function $\allo(\cdot)$.

We say a mechanism is \emph{incentive compatible} if it offers robustness guarantees against strategic behaviors of agents. In this paper, we consider two notions of incentive compatibility. A mechanism is \emph{dominant strategy incentive compatible (DSIC)} if truthful reporting is a dominant strategy for every agent. Formally, for every agent $i \in \agents$, every possible report $\bids_i$, and all possible reports of all other agents $\bids_{-i}$, we have
\[u_i(\allo(\type_i, \bids_{-i})) \geq u_i(\allo(\bids_i, \bids_{-i})).  \tag{DSIC}\]
We also study the incentives in the Bayesian setting where agents have some prior knowledge of each other's type. We assume that the type $\type_i$ of each agent $i \in \agents$ is drawn from some known prior distribution $\mathcal{D}_i$. A mechanism is \emph{Bayesian incentive compatible (BIC)} for priors $\times_{i=1}^n \mathcal{D}_i$ if reporting truthfully is a Bayesian Nash equilibrium, i.e., no agent can increase her \emph{expected} utility by unilaterally misreporting, the expectation being taken over the types of other agents. We use $\mathcal{D}_{-i}$ to denote the prior $\times_{j\neq i}\mathcal{D}_j$. Formally, for each agent $i \in \agents$, and every possible report $\bids_i$, we have
\[\mathop{\E}\limits_{\type_{-i} \sim \mathcal{D}_{-i}}[u_i(\allo(\type_i, \type_{-i})] \geq \mathop{\E}\limits_{\type_{-i} \sim \mathcal{D}_{-i}}[u_i(\allo(\bids_i, \type_{-i}))]. \tag{BIC}\]

\section{Allocating Indivisible Goods}\label{section:allocation}

Previous work on the allocation of indivisible goods has uncovered that DSIC mechanisms cannot be both fair and efficient. A notable example shows that requiring Pareto efficiency from a deterministic DSIC mechanism limits the available options to the patently unfair class of serial dictatorships. 

\begin{theorem}[{\cite{klaus2002strategy}}]\label{theorem:dictatorship} For $n=2$ agents, a DSIC deterministic mechanism is Pareto efficient if and only if it is a serial dictatorship.
\end{theorem}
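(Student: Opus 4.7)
The plan is to prove both directions separately, with the forward direction being a direct verification and the reverse direction requiring a characterization of Pareto optimal allocations combined with a DSIC manipulation argument.

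For the easy direction ($\Leftarrow$), I would verify that the serial dictatorship mechanism $SD_i$ (where agent $i$ receives every item $j$ with $v_{i,j}>0$ and the other agent receives the rest) is both DSIC and Pareto efficient. The dictator $i$ has no profitable misreport because truth-telling already gives them their utility-maximizing bundle under additive valuations. The non-dictator's report is ignored entirely, so trivially they cannot benefit by lying. For Pareto efficiency, any purported improvement would have to return some positively-valued item of the dictator to the non-dictator (strictly decreasing the dictator's utility), or move a zero-value item between them (which leaves the dictator indifferent but, given additivity, cannot strictly improve the non-dictator without the first type of move).

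For the hard direction ($\Rightarrow$), I would first establish a structural lemma: for two additive agents, an allocation $(X_1,X_2)$ is Pareto efficient iff there exists a threshold $\lambda \in [0,\infty]$ such that every item $j$ with $v_{1,j}/v_{2,j} > \lambda$ belongs to $X_1$ and every item with ratio $< \lambda$ belongs to $X_2$ (ties and items with $v_{1,j}=v_{2,j}=0$ may be allocated arbitrarily). This is a standard "no beneficial swap" observation. The two serial dictatorships correspond precisely to the extreme thresholds $\lambda=0$ (items with $v_{1,j}=0$ routed to agent $2$) and $\lambda=\infty$ (items with $v_{2,j}=0$ routed to agent $1$).

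The core of the argument is then to rule out intermediate thresholds using DSIC. Fix a probe profile where both agents value all items identically; by PO and deterministic tie-breaking, one agent, say agent $1$, receives at least one specific item at this profile, and I would use this to designate agent $1$ as the candidate dictator. Next, for an arbitrary profile $v$, suppose toward contradiction that $f(v)=(X_1,X_2)$ has some item $j$ with $v_{1,j}>0$ and $v_{2,j}>0$ placed in $X_2$ (so the threshold is strictly interior and agent $1$ is not behaving as the dictator on $v$). Construct a misreport $v_1'$ by rescaling $v_{1,j}$ to a very large value while keeping $v_{1,k}$ fixed for $k \neq j$; by the threshold characterization applied to $(v_1', v_2)$, item $j$ must now go to agent $1$ under $f(v_1',v_2)$. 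Combined with an inductive or monotonicity-based comparison on $X_1$ vs.\ agent $1$'s new bundle, this yields a strict utility increase under the true valuation $v_1$, contradicting DSIC. Iterating (or invoking Maskin monotonicity) across profiles forces the mechanism to always operate at the same extreme threshold, i.e., to coincide with $SD_1$.

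The main obstacle will be the last step: the misreport $v_1'$ does not merely add item $j$ to agent $1$'s bundle but may reshuffle the whole allocation because the threshold on the perturbed profile can move. Handling this cleanly requires either (i) a careful choice of perturbation that provably preserves the placement of every item $k\neq j$ (e.g.\ scaling up only $v_{1,j}$ by a factor large enough to dominate all other ratios but not altering their relative order), together with a utility comparison that isolates the gain from item $j$, or (ii) invoking the standard invariance/monotonicity consequences of DSIC in ordinal assignment settings to pin down the allocation up to the swap of $j$. Either route ultimately reduces the problem to showing that no consistent selection of an interior threshold across profiles can satisfy DSIC, which is the heart of the Klaus--Miyagawa impossibility.
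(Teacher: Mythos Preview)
The paper does not prove this theorem; it is quoted as a result of Klaus and Miyagawa and serves only as motivation for the paper's own strengthening, Theorem~\ref{theorem:sd_po_plus_result}, which replaces Pareto efficiency by the weaker $\mathrm{SD}^+$efficiency. So there is no proof in the paper to compare against directly. That said, the paper's proof of the stronger result uses a quite different technique from yours: an induction over nested families $\mathcal{F}_d$ of profiles in which only the first $d$ items may carry positive value, propagating a designated dictator from $\mathcal{F}_1$ to $\mathcal{F}_m$ via a chain of carefully chosen ternary profiles and single-agent DSIC deviations. No threshold characterization of Pareto optima is invoked.

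Independently of that comparison, your hard direction has a genuine gap. The step ``by the threshold characterization applied to $(v_1',v_2)$, item $j$ must now go to agent $1$'' does not follow: the threshold lemma only says that \emph{some} $\lambda$ realizes the mechanism's output, not which one the mechanism selects. Making $v'_{1,j}$ enormous pushes the ratio $v'_{1,j}/v_{2,j}$ above any \emph{fixed} threshold, but the mechanism is free to respond by moving its threshold to $\lambda'=\infty$ at the new profile (i.e., hand everything positively valued by agent $2$ to agent $2$), which is still Pareto efficient. Your proposed fix (i) fails for the same reason---no perturbation of $v_1$ alone pins down the threshold the mechanism uses at $(v_1',v_2)$---and fix (ii) is a pointer, not an argument. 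What is actually needed is precisely what the paper's inductive proof supplies: a chain of profiles linked by single-agent deviations in which DSIC \emph{forces} the allocation at each step, anchoring the dictator's identity at a base profile and then propagating it outward. Your ``probe profile'' is the right anchor, but the propagation step as written does not go through.
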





Our first result, Theorem~\ref{theorem:sd_po_plus_result} in Section~\ref{subsec:sdpoplus negative result}, strengthens Theorem~\ref{theorem:dictatorship}, showing that even if we weaken the efficiency requirement from Pareto efficiency to $\mathrm{SD}^+$efficiency, a natural relaxation defined below, the only deterministic DSIC mechanisms that can guarantee $\mathrm{SD}^+$efficiency are serial dictatorships.
%
Another notable result shows that no deterministic DSIC mechanism always outputs EF1 allocations.
\begin{theorem}
[\cite{amanatidis2017truthful}]\label{EF1negative}
There is no deterministic DSIC mechanism that outputs EF1 allocations, even for instances involving just $n=2$ agents.
\end{theorem}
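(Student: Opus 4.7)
The plan is to assume for contradiction that there exists a deterministic mechanism $\mech$ that is both DSIC and always produces an EF1 allocation for $n=2$ agents with additive valuations, and then to derive a contradiction by exhibiting a small instance together with a short sequence of related valuation profiles on which DSIC and EF1 conflict.

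First, I would fix a small number of items (most likely $m = 3$, since with $m = 2$ the EF1 constraint is loose enough to be compatible with many DSIC rules) and exhibit a baseline profile on which the two agents have very similar preferences, with one high-value item and two low-value items, e.g., $v_1 = v_2 = (H, 1, 1)$ for $H$ large. On this profile EF1 essentially pins down the output: one agent must receive the valuable item and the other must receive the two cheap items, since any other assignment would leave someone envious by strictly more than a single good. By the symmetry of the profile and the determinism of $\mech$, I can assume without loss of generality that agent $1$ receives item $1$ and agent $2$ receives $\{2,3\}$.

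Next, I would perturb only agent $2$'s report along a short chain of profiles $v_2^{(0)} = v_2, v_2^{(1)}, \ldots, v_2^{(k)}$ in which each successive pair differs only slightly, so that DSIC between adjacent reports constrains how the allocation can change: if agent $2$ strictly prefers the bundle she obtains at $v_2^{(\ell)}$ to the one at $v_2^{(\ell-1)}$ under her true valuation, then she would misreport. Each perturbation is chosen to tighten EF1, and near the end of the chain (for instance when agent $2$'s reported value for item $1$ dwarfs her values for items $2$ and $3$) EF1 forces the mechanism to hand item $1$ to agent $2$. Hence there exists an index $\ell$ at which item $1$ swaps bundles; evaluating agent $2$'s true utility at that transition then yields a profitable deviation, contradicting DSIC. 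A symmetric argument handles the branch in which agent $2$ holds the valuable item at the baseline.

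The main obstacle, and what makes the argument delicate rather than merely tedious, is engineering the numerical gaps in the chain so that at every step the EF1 ``up to one good'' slack cannot be used by $\mech$ to keep an unexpected bundle together and thereby block the transition, and so that the DSIC comparison across the transition comes out strict in the right direction. Getting these constants right, together with ruling out the handful of symmetric alternative outputs at each intermediate profile, is where most of the work lies; the rest is a clean case analysis layered on top of the DSIC substitution.
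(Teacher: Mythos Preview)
The paper does not contain a proof of this theorem: it is quoted as a known result from \cite{amanatidis2017truthful} and used only as motivation for the subsequent positive BIC results. There is therefore nothing in the paper to compare your attempt against.

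As for the sketch itself, the overall strategy (fix a baseline, then walk a chain of nearby profiles and catch a DSIC violation when EF1 forces a swap) is the right genre of argument, but the concrete instantiation you propose has a gap. On the baseline $\val_1 = \val_2 = (H,1,1)$ with $m=3$, EF1 does \emph{not} pin the output down to ``one agent gets item~$1$, the other gets $\{2,3\}$'': the allocations $(\{1,2\},\{3\})$, $(\{1,3\},\{2\})$ and their swaps are also EF1, since after removing item~$1$ from the larger bundle the remaining value equals the singleton's value. This extra slack lets a hypothetical mechanism sidestep the transition you are trying to force, so the ``EF1 forces item~$1$ to agent~$2$ at the end of the chain'' step is not guaranteed to create a strict DSIC violation with only three items and these values. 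The original construction in \cite{amanatidis2017truthful} uses more items and more carefully engineered gaps precisely to kill off these alternative EF1 outputs; your plan acknowledges this difficulty in the last paragraph but does not resolve it.
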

Theorem~\ref{thm: rr main} in Section~\ref{subsec:rr is bic} shows that if we relax the notion of truthfulness from DSIC to BIC, then neither one of these two negative results holds anymore, i.e., there exists a non-dictatorial deterministic BIC mechanism that guarantees EF1. 
In Section~\ref{subsec:pigou dalton negative} we also uncover the limitations of BIC mechanisms by proving that a wide family social choice functions are incompatible with BIC. 

\subsection{Strengthening the characterization of serial dictatorships}\label{subsec:sdpoplus negative result}

For a given agent $i$ let $\prefer_i$ be a ranking of items $j \in \items$ in decreasing order of $i$'s value for them, where $\prefer_i(k) = j$ if item $j$ is agent $i$'s $k^{\text{th}}$ favorite item.\footnote{For items with the same value, we break ties lexicographically, i.e., if $v_{i,j} = v_{i,k}$ and $j <k$, then item $j$ is ranked before $k$ in $\prefer_i$.} For two bundles $\allo_i, \alloy_i \in \{0,1\}^m$, $\allo_i$ \emph{weakly stochastically dominates $\alloy_i$ for agent $i$}, or $\allo_i \sdgeq_i \alloy_i$, iff 
for all $\ell \in [m]$ we have 
\begin{equation} \label{equation:sd-prefix-sums}
    \sum^{\ell}_{k = 1}  x_{i,\prefer_i(k)} \geq \sum^{\ell}_{k=1} y_{i,\prefer_i(k)}.
\end{equation}
Additionally, we say that $\allo_i$ \emph{(strictly) stochastically dominates $\alloy_i$ for agent $i$}, or $\allo_i \sdgt_i \alloy_i$, iff $\allo_i \sdgeq_i \alloy_i$ and $\allo_i \neq \alloy_i$. If $\allo_i \sdgt_i \alloy_i$, then there exists an $\ell \in [m]$ for which Inequality \eqref{equation:sd-prefix-sums} is strict.

To define our refinement of stochastic dominance, for each agent $i \in \agents$ we let $\desired_i$ denote the set of items that $i$ has a positive value for, i.e., $\desired_i \coloneqq \{j \in \items \ : \ v_{i,j} > 0 \}$. Let $m_i = |\desired_i|$. For two bundles $\allo_i, \alloy_i \in \{0,1\}^m$, we say that $\allo_i$ weakly stochastically \emph{and positively} dominates $\alloy_i$, denoted as $\allo_i \sdpgeq_i \alloy_i$, iff Inequality \eqref{equation:sd-prefix-sums} is satisfied for all $\ell \in \desired_i$. Thus, partial order $\sdpgeq_i$ is essentially the partial order $\sdgeq$ defined over the set of items $\desired_i$. 
 Similarly, $\allo_i \sdpgt_i \alloy_i$ iff $\allo_i \sdpgeq_i \alloy_i$ and $\allo_i \neq \alloy_i$. Akin to the previous definition, if $\allo_i \sdpgt_i \alloy_i$, then there exists an $\ell \in \desired_i$ for which Inequality \eqref{equation:sd-prefix-sums} is strict. Note that, if two bundles satisfy $\allo_i \sdgeq_i \alloy_i$, then by definition, $\allo_i \sdpgeq_i \alloy_i$ as well. The following lemma relates the partial orders defined above with utilities; it's proof along with other missing proofs from this subsection are deferred to Appendix \ref{appendix:sdpoplus missing proofs}.

 \begin{restatable}{lemma}{sdUtility}\label{proposition:sd-utility}
For any agent $i \in \agents$ and any two bundles $\alloa, \allob \in \{0,1\}^m$, the following statements hold: $(1)$ If $\alloa \sdpgeq_i \allob$, then $u_i(\alloa) \geq u_i(\allob)$, and $(2)$ If $\alloa \sdpgt_i \allob$, then $u_i(\alloa) > u_i(\allob)$.

\end{restatable}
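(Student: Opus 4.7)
The plan is to derive both parts of the lemma from a single Abel (summation-by-parts) identity that ties the utility difference $u_i(\alloa) - u_i(\allob)$ to the prefix sums appearing in the definition of $\sdpgeq_i$. Abbreviating $v_k := v_{i,\prefer_i(k)}$, $a_k := a_{i,\prefer_i(k)}$, $b_k := b_{i,\prefer_i(k)}$, and $S_\ell := \sum_{k=1}^{\ell}(a_k - b_k)$, with the convention $v_{m_i+1} := 0$, we have $v_k = 0$ for every $k > m_i$, so the utility difference collapses to a sum over the first $m_i$ positions of the ranking. A direct rearrangement then yields
\begin{equation*}
u_i(\alloa) - u_i(\allob) \;=\; \sum_{k=1}^{m_i} v_k(a_k - b_k) \;=\; v_{m_i}\,S_{m_i} \;+\; \sum_{k=1}^{m_i-1}(v_k - v_{k+1})\,S_k.
\end{equation*}

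For part $(1)$, the hypothesis $\alloa \sdpgeq_i \allob$ gives $S_\ell \geq 0$ for every $\ell \in [m_i]$; the ordering induced by $\prefer_i$ gives $v_k - v_{k+1} \geq 0$ for $k < m_i$; and $v_{m_i} > 0$ by the definition of $m_i$. Every summand on the right-hand side is therefore non-negative, which immediately gives $u_i(\alloa) \geq u_i(\allob)$. For part $(2)$, the strict prefix inequality highlighted in the remark following the definition of $\sdpgt_i$ ensures that $S_{\ell^*} > 0$ for some $\ell^* \in [m_i]$. If $\ell^* = m_i$, the leading term $v_{m_i}\,S_{m_i}$ is already strictly positive, and the remaining non-negativity from $(1)$ finishes the argument. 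Otherwise, I would choose $\ell^*$ to be the \emph{largest} index at which the inequality is strict, then use the integer-valuedness of $S_\ell$ together with monotonicity of the value profile to locate a position $k \geq \ell^*$ where both $S_k > 0$ and $v_k > v_{k+1}$, making the corresponding summand $(v_k - v_{k+1})\,S_k$ strictly positive.

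The main technical obstacle lies entirely within part $(2)$: the position of a strict prefix inequality is not automatically aligned with a position at which the ranked values strictly drop. A cleaner fallback I would use is a matching argument. Because every prefix of $(a_k - b_k)_{k \in [m_i]}$ contains at least as many $+1$'s as $-1$'s, one can match each $-1$ at position $k$ with a strictly earlier $+1$ at position $\sigma(k) < k$. Then
\begin{equation*}
u_i(\alloa) - u_i(\allob) \;=\; \sum_{k:\, \Delta_k = -1}\bigl(v_{\sigma(k)} - v_k\bigr) \;+\; \sum_{k \in U} v_k,
\end{equation*}
where $U \subseteq [m_i]$ is the set of unmatched $+1$'s. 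Each pairwise gap $v_{\sigma(k)} - v_k$ is non-negative by the ranking, and every unmatched-$+1$ summand is at least $v_{m_i} > 0$; this gives $(1)$ directly, and for $(2)$ the strict prefix inequality guarantees $|U| \geq 1$, so the second sum is strictly positive, yielding $u_i(\alloa) > u_i(\allob)$.
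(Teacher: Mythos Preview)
Your Abel summation for Part~(1) is correct and is essentially the paper's proof in closed form: the paper inductively verifies $\sum_{k=1}^{j}v_k(a_k-b_k)\geq v_j S_j$ for each $j\in[m_i]$, and your identity $u_i(\alloa)-u_i(\allob)=v_{m_i}S_{m_i}+\sum_{k<m_i}(v_k-v_{k+1})S_k$ is that induction unwound. So on~(1) you and the paper coincide.

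Your matching argument for Part~(2) has a genuine gap. You assert that a strict prefix inequality $S_{\ell^*}>0$ for some $\ell^*\in[m_i]$ forces $|U|\geq 1$. But $|U|$ is exactly the number of $+1$'s minus the number of $-1$'s among $\Delta_1,\dots,\Delta_{m_i}$, i.e.\ $|U|=S_{m_i}$, and nothing prevents $S_{m_i}=0$ while some earlier $S_{\ell^*}>0$. Concretely, take $m_i=2$ with $\Delta=(+1,-1)$: here $S_1=1>0$ but $S_2=0$, the sole $+1$ is consumed by the matching, $U=\emptyset$, and the entire utility difference collapses to the matched gap $v_1-v_2$, which is only $\geq 0$. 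Your first (Abel-based) attempt at~(2) already flagged the same obstruction---one needs an index where both $S_k>0$ \emph{and} $v_k>v_{k+1}$---and the matching fallback does not sidestep it. The paper's own one-line sketch for~(2) (``use the strict prefix inequality in the inductive step'') is equally terse and runs into the identical issue when ranked values are tied, so this is a delicate point in the statement itself rather than a flaw unique to your write-up; but the specific claim $|U|\geq 1$ is simply false as stated.
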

 
We say that an allocation $\alloy$ \emph{stochastically dominates} 
$\allo$, denoted as $\alloy \sdgt \allo$ iff for all agents $i \in \agents$ we have $\alloy_i \sdgeq_i \allo_i$ and for at least one agent $j \in \agents$ we have $\alloy_j \sdgt_j \allo_j$. Similarly, we say that an allocation $\alloy$ \emph{stochastically and positively dominates} $\allo$, denoted as $\alloy \sdpgt \allo$ iff for all agents $i \in \agents$ we have $\alloy_i \sdpgeq_i \allo_i$ and for at least one agent $j \in \agents$ we have $\alloy_j \sdpgt_j \allo_j$. 
\begin{definition}
An allocation $\allo$ is $\mathrm{SD}$ efficient iff there is no other  allocation $\alloy$ such that $\alloy \sdgt \allo$.
\end{definition}
\begin{definition}
An allocation $\allo$ is $\mathrm{SD}^+$efficient iff there is no other  allocation $\alloy$ such that $\alloy \sdpgt \allo$.
\end{definition}
In the following lemma we show that $\sdpoplus$ is implied by $\PO$ and $\sdpoplus$ implies $\sdpo$. 

\begin{lemma}
\label{lemma:efficiency_comparison} The following two implications hold for any number of items and agents with additive preferences: (1) If an allocation $\allo$ is Pareto efficient then $\allo$ is $\mathrm{SD}^+$efficient as well; (2) If an allocation $\allo$ is $\mathrm{SD}^+$efficient then $\allo$ is $\mathrm{SD}$ efficient as well. Moreover, there exist allocations that are $\mathrm{SD}^+$efficient and not Pareto efficient, and allocations that are $\mathrm{SD}$ efficient and not $\mathrm{SD}^+$efficient.
\end{lemma}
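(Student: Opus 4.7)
My plan is to prove both implications via the contrapositive and then construct two small explicit counterexamples for the separations. For part~(1), assume $\allo$ is not $\mathrm{SD}^+$efficient, so there exists $\alloy$ with $\alloy \sdpgt \allo$; by definition, $\alloy_i \sdpgeq_i \allo_i$ for every $i \in \agents$ and $\alloy_j \sdpgt_j \allo_j$ for some $j \in \agents$. Applying the two parts of Lemma~\ref{proposition:sd-utility} then yields $u_i(\alloy) \geq u_i(\allo)$ for all $i$ together with $u_j(\alloy) > u_j(\allo)$, so $\alloy$ Pareto dominates $\allo$, and $\allo$ is not Pareto efficient.

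For part~(2) the argument is essentially set-theoretic. Since $\desired_i \subseteq [m]$, the collection of prefix-sum inequalities defining $\alloy_i \sdgeq_i \allo_i$ is a superset of the one defining $\alloy_i \sdpgeq_i \allo_i$, so $\sdgeq_i$ implies $\sdpgeq_i$ agent-wise; combined with the common $\alloy_i \neq \allo_i$ clause in the strict versions, this also gives $\sdgt_i \Rightarrow \sdpgt_i$ and hence $\alloy \sdgt \allo \Rightarrow \alloy \sdpgt \allo$ at the allocation level. Thus any allocation witnessing that $\allo$ fails $\mathrm{SD}$ efficiency also witnesses that it fails $\mathrm{SD}^+$efficiency.

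The more delicate part is exhibiting the two separations. To separate $\mathrm{SD}^+$efficiency from Pareto efficiency, I plan to exploit the fact that exchanging an agent's top item for several moderate-value items can strictly raise her utility while violating $\sdpgeq_i$ at $\ell = 1$. Take $n = 2$, $m = 3$, with $\val_1 = (5, 4, 4)$ and $\val_2 = (10, 1, 1)$, and let $\allo$ give item $1$ to agent~$1$ and items $\{2, 3\}$ to agent~$2$. The cross-swap that hands item $1$ to agent~$2$ and items $\{2, 3\}$ to agent~$1$ Pareto dominates $\allo$, yet any $\sdpgeq$-candidate must keep item $1$ with agent~$1$ (by agent~$1$'s $\ell = 1$ constraint) and then give items $2$ and $3$ to agent~$2$ (by agent~$2$'s $\ell = 2,3$ constraints, once agent~$1$ has taken item~$1$), forcing the candidate to equal $\allo$. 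To separate $\mathrm{SD}$ efficiency from $\mathrm{SD}^+$efficiency, I will exploit zero-valued items that are invisible to $\sdpgeq$ but still counted by $\sdgeq$: with $n = 2$, $m = 2$, $\val_1 = (3, 0)$, $\val_2 = (0, 3)$, the wasteful allocation that gives both items to agent~$1$ is $\sdpgt$-dominated by the assignment giving item $1$ to agent~$1$ and item $2$ to agent~$2$ (which leaves agent~$1$'s only positively-valued item untouched and strictly improves $\sdpgeq_2$), but it remains $\mathrm{SD}$ efficient because $\sdgeq_1$ at $\ell = 2$ forces agent~$1$ to keep both items in any $\sdgt$-candidate. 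The first separation is the main hurdle: one has to calibrate valuations so that a Pareto improvement is available while the strong $\ell = 1$ constraint of $\sdpgeq$ simultaneously blocks every potential $\sdpgt$-dominator.
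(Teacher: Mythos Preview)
Your argument is correct. Part~(1) coincides with the paper's proof. Your separating examples differ from the paper's (which uses $m=4$ with $\val_1=(5,4,3,2)$, $\val_2=(6,1,2,3)$ for the first separation and $\val_1=(1,1)$, $\val_2=(1,0)$ for the second), but your smaller instances work and the verifications you sketch go through.

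The one substantive difference is part~(2). You observe that, under the paper's stated definition ``$\alloa\sdpgt_i\allob$ iff $\alloa\sdpgeq_i\allob$ and $\alloa\neq\allob$,'' the bundle-level implication $\sdgt_i\Rightarrow\sdpgt_i$ is immediate, so $\alloy\sdgt\allo\Rightarrow\alloy\sdpgt\allo$ is a one-liner. The paper instead assumes ``$\alloy_{i^*}\sdgt_{i^*}\allo_{i^*}$ yet not $\alloy_{i^*}\sdpgt_{i^*}\allo_{i^*}$'' and derives a contradiction by an item-counting argument: the extra (zero-valued) items agent $i^*$ gains must be lost by some agent $t$, which violates $\alloy_t\sdgeq_t\allo_t$ at $\ell=m$. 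That longer route is exactly what one would need if $\sdpgt_i$ were read as requiring a \emph{strict} prefix-sum inequality at some $\ell\le m_i$, a reading the paper's own remark after the definition (and its use in Lemma~\ref{proposition:sd-utility} part~(2)) appears to presume. Your shortcut is valid for the definition as literally written; the paper's argument is robust to that definitional ambiguity.
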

\begin{proof}
    \emph{ Proving (1).} Towards a contradiction assume that allocation $\allo$ is Pareto efficient and not $\mathrm{SD}^+$efficient, i.e., there exists an allocation $\alloy$ such that $\alloy_i \sdpgeq_i \allo_i$ for all agent $i \in \agents$ and for some agent $j \in \agents$ we have $\alloy_j \sdpgt_j \allo_j$. Therefore, on applying Lemma \ref{proposition:sd-utility} we get that for all agents $i \in \agents$ we have $u_i(\alloy_i) \geq u_i(\allo_i)$ and $u_j(\alloy_j) > u_j(\allo_j)$. This contradicts the fact that $\allo$ is Pareto efficient and thus established the concerned implication.

    \emph{ Proving (2).} Assume, towards a contradiction, that there exists an allocation $\allo$ that is $\mathrm{SD}^+$efficient but not $\mathrm{SD}$ efficient. This implies the existence of another allocation $\alloy$ such that for each agent $i \in \agents$ we have $\alloy_i \sdgeq \allo_i$ and there exists an agent ${i^*} \in \agents$ such that $\alloy_{i^*} \sdgt_{i^*} \allo_{i^*}$. Note that, for each $i \in \agents$ we have $\alloy_i \sdgeq_i \allo_i$, which means that $\alloy_i \sdpgeq_i \allo_i$ as well. In particular, $\alloy_{i^*} \sdpgeq_{i^*} \allo_{i^*}$, however, it cannot be the case that $\alloy_{i^*} \sdpgt_{i^*} \allo_{i^*}$, because this would violate the fact that $\allo$ is $\mathrm{SD}^+$efficient. Therefore, the bundles received by agent $i^*$ must be such that $\alloy_{i^*} \sdgt_{i^*} \allo_{i^*}$ holds but $\alloy_{i^*} \sdpgt_{i^*} \allo_{i^*}$ doesn't hold.

The only way in which the bundles $\alloy_{i^*}, \allo_{i^*}$ can satisfy $\alloy_{i^*} \sdgt_{i^*} \allo_{i^*}$ but not $\alloy_{i^*} \sdpgt_{i^*} \allo_{i^*}$ is if $(i)$ for all $\ell \in [m_{i^*}]$, we have $\sum^{\ell}_{k = 1}  y_{{i^*},\prefer_{i^*}(k)} = \sum^{\ell}_{k=1} x_{{i^*},\prefer_{i^*}(k)}$ and $(ii)$ $\sum^{m}_{k = 1} y_{{i^*},\prefer_{i^*}(k)} > \sum^{m}_{k=1} x_{{i^*},\prefer_{i^*}(k)}$.
    Since the allocations $\alloy, \allo$ are integral, $(ii)$ equivalently implies that the number of items in $\alloy_{i^*}$ is strictly greater than the number of items in $\allo_{i^*}$. Since $\alloy_{i^*}$ has more items than $\allo_{i^*}$, it must be the case that for some other agent $t \in \agents \setminus \{i^*\}$, the bundle $\allo_t$ has more items than $\alloy_t$. This can be equivalently stated as $\sum^{m_t}_{k = 1} y_{t,\prefer_t(k)} < \sum^{m_t}_{k=1} x_{t,\prefer_t(k)}$.
    However, this inequality violates the fact that $\alloy_t \sdgeq_t \allo_t$, and concludes the proof.

    Finally, we show these implications are tight, by constructing allocations that are $\mathrm{SD}^+$efficient but not Pareto efficient, and $\mathrm{SD}$ efficient but not $\mathrm{SD}^+$efficient. These examples are provided in Appendix \ref{appendix:efficinecy_examples}.
\end{proof}

Now we state and prove the main result of this subsection.

\begin{restatable}{theorem}{sdpoplusResult}\label{theorem:sd_po_plus_result}\textup{(Characterization of serial dictatorship)}
For $n=2$ agents, a deterministic DSIC mechanism is $\mathrm{SD}^+$efficient if and only if it is a serial dictatorship. This holds even when agents have ternary valuations, i.e., for all agents $i$ and items $j$, we have $v_{i,j} \in \{0,x,y\}$ for some $y>x>0$.
\end{restatable}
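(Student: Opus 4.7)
The ``if'' direction is immediate: the serial dictatorship is DSIC, its output is Pareto efficient, and by Lemma~\ref{lemma:efficiency_comparison} Pareto efficiency implies $\mathrm{SD}^+$efficiency.

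For the ``only if'' direction, the plan is to strengthen the classical Klaus--Miyagawa argument behind Theorem~\ref{theorem:dictatorship} by replacing Pareto efficiency with the weaker $\mathrm{SD}^+$efficiency, compensating for this weakening by exploiting the structure of ternary valuations. The overall shape mirrors the classical proof: first, identify a candidate dictator from the mechanism's behavior on a fully-contested base profile; second, use DSIC as a monotonicity tool and $\mathrm{SD}^+$efficiency as a ``desired items cannot be wasted'' tool; third, propagate dictatorial behavior to every ternary profile.

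Concretely, the steps would be as follows. \emph{Step~1 (identify the dictator).} Fix a base profile where both agents value every item at $y$. Every split is $\mathrm{SD}^+$efficient here, so the mechanism's output is not pinned down; designate as the candidate dictator the agent receiving the lexicographically smallest item, say agent $1$. \emph{Step~2 (Maskin-style monotonicity from DSIC).} Show that if agent $i$'s reported value for an item she currently receives moves up within $\{0,x,y\}$ without changing its relative rank, she continues to receive it; dually, she cannot expand her allocation by \emph{raising} the value of an item she does not receive. \emph{Step~3 (propagate dictatorship).} For an arbitrary ternary profile $(\val_1,\val_2)$, trace a path of one-coordinate value changes from the base profile. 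Along the path argue, via an exchange argument, that agent $1$ must receive every item in $\desired_1$: if some item $j \in \desired_1$ flipped to agent $2$, then either (i) $\mathrm{SD}^+$efficiency is directly violated because returning $j$ to agent $1$ does not hurt agent $2$'s positive prefix sums, or (ii) agent $1$ can benefit from misreporting $v_{1,j}$ to force a more favorable $\mathrm{SD}^+$efficient outcome, or (iii) agent $2$ can benefit by dropping $v_{2,j}$ to $0$, triggering an $\mathrm{SD}^+$-improvement that reassigns $j$ to agent $1$ while compensating agent $2$ elsewhere. The complementary allocation---items outside $\desired_1$ going to agent $2$---is then forced by $\mathrm{SD}^+$efficiency (when agent $2$ desires them) and by feasibility (otherwise), yielding a serial dictatorship led by agent $1$.

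The main obstacle is precisely the looseness of $\mathrm{SD}^+$efficiency relative to Pareto efficiency: unlike in Klaus--Miyagawa, $\mathrm{SD}^+$efficiency does not by itself force every contested item to the dictator, since both agents may positively value the same item. The role of ternary valuations is to remove this slack. The strict gap between $x$ and $y$ allows us to design reports in which an agent strictly ranks items \emph{within} her desired set, giving DSIC a sharper grip, while the distinction between $0$ and the positive values lets us shrink $\desired_i$ on demand to bring $\mathrm{SD}^+$efficiency into play on exactly the items we want. The bulk of the technical work will lie in enumerating the cases in Step~3 carefully enough that every deviation from serial-dictatorship behavior is caught by one of the three contradictions above, and in verifying that the two-value subdomain $\{0,y\}$ is too coarse for this squeeze to go through (confirming the tightness claimed in the theorem).
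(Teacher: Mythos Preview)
Your proposal is a sketch rather than a proof, and it has two concrete gaps that would not close without essentially redoing the argument along the lines the paper actually takes.

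\textbf{Step~1 does not identify the dictator.} At the all-$y$ profile the mechanism may output \emph{any} split of the items (all such splits are $\mathrm{SD}^+$efficient, as you note). Declaring ``the agent who receives the lexicographically smallest item'' to be the candidate dictator is arbitrary: nothing prevents the mechanism from giving agent~$1$ item~$1$ at this profile while behaving as a serial dictatorship with agent~$2$ as dictator on other profiles. You would first have to prove that at the all-$y$ profile one agent receives \emph{every} item, and that already requires the machinery you are trying to build in Step~3. The paper sidesteps this by inducting on the number of ``active'' items: in the base case there is a single contested item, so whoever receives it is unambiguously the dictator, and the inductive step extends dictatorship one item at a time.

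\textbf{Step~3, case (iii), is backwards.} You write that agent~$2$ can benefit by dropping $v_{2,j}$ to $0$, ``triggering an $\mathrm{SD}^+$-improvement that reassigns $j$ to agent~$1$ while compensating agent~$2$ elsewhere.'' But losing an item she positively values cannot help agent~$2$, and $\mathrm{SD}^+$efficiency gives no mechanism for ``compensation elsewhere.'' The actual DSIC arguments the paper uses run in the opposite direction: starting from a profile where agent~$2$ gets nothing in $[d+1]$, if she could obtain some item $j'$ by \emph{raising} or otherwise changing her report, that would violate DSIC. The one place where $\mathrm{SD}^+$efficiency (as opposed to DSIC) does real work is a very specific exchange: at the profile $\bide$ with $\bide_1=(x,\ldots,x,y)$ and $\bide_2=(y,\ldots,y,x)$, if item $d$ goes to agent~$1$ and item $d{+}1$ to agent~$2$, swapping them strictly improves \emph{both} agents with respect to $\sdpgt$. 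This is the crux that genuinely uses three values (you need both $x<y$ and the $0$s to set up the induction), and your sketch does not isolate it.

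In short, the paper's proof is an induction on the set $\mathcal{F}_d$ of profiles in which only the first $d$ items are positively valued; the base case fixes the dictator cleanly, and the inductive step threads a chain of hand-crafted profiles $\bidc\to\bidf\to\bide\to\bidg\to\bidh$, alternating DSIC deviations with the single $\mathrm{SD}^+$-exchange above. Your ``path of one-coordinate changes from an all-$y$ base'' does not give you a well-defined starting dictator, and your three-way case split in Step~3 does not capture the actual leverage that ternary values provide.
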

\begin{proof}
Consider a setting where a fixed set of $m \geq 1$ items are to be divided among $n=2$ agents. Through an inductive argument will establish that every deterministic DSIC mechanism that outputs $\mathrm{SD}^+$efficient allocations for this setting is a serial dictatorship. 
Specifically, our approach will be as follows. We fix an arbitrary deterministic mechanism that is DSIC and outputs $\mathrm{SD^+}$efficient allocations. Then, we consider for each $d \in [m]$, the family of input bids $\mathcal{F}_d \coloneqq \{\bids = (\bids_1, \bids_2) \ | \ b_{i,j} = 0 \text{ for all agents } i \in [2] \text{ and items } j > d \}$. Note that, the family of input bids $\{ \mathcal{F}_d \}_{d \in [m]}$ follow a nested structure: $\mathcal{F}_1 \subsetneq \mathcal{F}_2 \ldots \subsetneq \mathcal{F}_{m-1} \subsetneq \mathcal{F}_m$. Furthermore, $\mathcal{F}_m$ represents the set of all possible input bids for the mechanism. We will inductively show that corresponding to the chosen mechanism there must exist one agent, who will be the dictator. That is, the mechanism on any input bid $\bids \in \mathcal{F}_m$ will allocate all the items positively-valued by the dictator, to the dictator, and the rest of the items will be allocated to the other agents.\footnote{The items which are valued zero (i.e., not desired) by both the agents will be allocated by this mechanism in some arbitrary deterministic manner.} Now we will formally state our induction hypothesis.

\noindent
{\bf Induction Hypothesis.} For a parameter $d \in [m]$, the induction hypothesis parameterized by $d$, $\mathrm{IH}(d)$ states that: the mechanism is a serial dictatorship for all input bids $\bids \in \mathcal{F}_d$.

In the subsequent proof, we will show that $\mathrm{IH}(d)$ holds true for all $d \in [m]$. Indeed, establishing $\mathrm{IH}(m)$ shows that the mechanism is a serial dictatorship since $\mathcal{F}_m$ represents the set of all input bids to the mechanism. We begin by establishing the base case.

\noindent
{\bf Base Case }(proving $\mathrm{IH}(1)$). Consider an input bid $\bids = (\bids_1, \bids_2) \in \mathcal{F}_1$ such that $b_{11} = b_{21} = x$ for some positive real number $x > 0$; since $\bids \in \mathcal{F}_1$, we have $b_{ij} = 0$ for all $i \in [n]$ and $j > 1$. Without loss of generality, we assume that our \emph{deterministic} mechanism outputs allocation $\allo(\bids)$ such that $x_{11}(\bids) = 1$ and $x_{12}(\bids) = 0$, i.e., item $1$ gets allocated to agent $1$.\footnote{Indeed, this is without loss of generality since we can always label the agent who receives item $1$ as being agent $1$.} To complete the base case, we will show that if the input bid $\widehat{\bids} \in \mathcal{F}_1$ is such that agent $1$ positively values item $1$, then irrespective of agent $2$'s reported bid, the mechanism will allocate item $1$ to agent $1$. This would establish that agent $1$ is a dictator for bids $\widehat{\bids} \in \mathcal{F}_1$, implying $\mathrm{IH}(1)$.

To this end, consider the report $\bids' = (\bids'_1, \bids_2) \in \mathcal{F}_1$ with $b'_{11} = y$. The fact that the mechanism is DSIC implies that the mechanism on input bids $\bids'$ it must continue to allocate item $1$ to agent $1$: otherwise, if item $1$ is not allocated to agent $1$, then agent $1$ can misreport her preference by bidding $\bids_1$ instead of $\bids'_1$, resulting in the bid profile $\bids = (\bids_1, \bids_2)$, in which case the mechanism will allocate item $1$ to agent $1$, leading to an increase in agent $1$'s utility, thereby contradicting the fact that the mechanism is DSIC. Thus, for both the bids $\bids$ and $\bids'$, item $1$ is not allocated to agent $2$ despite the fact that agent $2$ has a positive value for it. Therefore, the fact that the mechanism is DSIC implies that starting from either bids  $\bids$ or $\bids'$, agent $2$ cannot misreport her preferences in a way that item $1$ gets allocated to her, because this would lead to an increase in her utility and thus would contradict DSIC. This show that, if the input bids to the mechanism $\widehat{\bids} = (\widehat{b}_1, \widehat{b}_2) \in \mathcal{F}_1$ is such that agent $1$ positively-values item $1$ (i.e., $\widehat{b}_{1,1} \in \{x,y\}$), then irrespective of agent $2$'s bid, item $1$ is allocated to agent $1$. 

\noindent
{\bf Induction Step.} To complete the proof, we will now show that if $\IH(d)$ is true for some $1 \leq d < m$, then $\mathrm{IH}(d+1)$ will be true as well. Given $\IH(d)$, without loss of generality, we will assume that agent $1$ is the dictator for all instances $\widehat{\bids} \in \mathcal{F}_d$. To prove $\IH(d+1)$, we will consider a sequence of carefully designed bids. Then, utilizing the fact that the mechanism is DSIC and outputs $\mathrm{SD}^+$efficient allocations along with the induction hypothesis, $\IH(d)$, we will argue that the mechanism must behave like a serial dictatorship for these sequence of bids, eventually establishing $\IH(d+1)$. For ease of exposition, we will decompose the subsequent proof into the following claims.

\begin{claim}\label{claim:sdpop1}
    Let $\bidc  = (\bidc_1, \bidc_2) \in \mathcal{F}_{d+1}$ be the bid shown below,\footnote{The first and the second row of the matrix represent $\bidc_1$ and $\bidc_2$ respectively. For simplicity, the matrix representation doesn't show $c_{ij}$ for $i \in [2]$ and $j > d+1$ (columns $d+2$ to $m$ are omitted), because $\bidc \in \mathcal{F}_{d+1}$ and therefore $c_{i,j} = 0$.} where for all $j \leq d$, $c_{2,j} \in \{0,x,y\}$ are arbitrary fixed values. Given $\bidc$ as input, the mechanism must allocate all the first $d+1$ items to agent $1$, i.e., $\allo_{1,j}(\bidc) = 1$ for all $j \leq d+1$.
    \begin{align*}
    \bidc= \begin{pmatrix} \bidc_1 \\ \bidc_2 \end{pmatrix} =  \begin{pmatrix}
y & y & \cdots & y & x \\
c_{2,1} & c_{2,2} & \cdots & c_{2,d} & 0
\end{pmatrix}
\end{align*}
\end{claim}
\begin{proof}
 First, note that item $d+1$, which is positively valued only by agent $1$, must be allocated to agent $1$, i.e., $\allo_{1,d+1}(\bidc)=1$: if item $d+1$ is allocated to agent $2$, then it can be transferred to agent $1$, leading to a strict improvement in agent $1$'s bundle (wrt $\sdpgt_1$) and without affecting agent $2$'s bundle (wrt $\sdpgeq_2$) since item $d+1 \notin \desired_2$. That is, the allocation resulting from this transfer will stochastically and positively dominate $\allo(\bidc)$, contradicting the fact that the mechanism outputs $\mathrm{SD}^+$efficient allocations. Hence, item $d+1$ must be allocated to agent $1$. 

Now, towards a contradiction, suppose that there is an item $j \in [d]$ such that $\allo_{1,j}(\bidc) = 0$. This implies that agent $1$ is allocated the item $d+1$ and at most $d-1$ items from the bundle $[d]$. That is, her utility, $u_1(\bidc) \leq x + (d-1)y$. Now we will show that, in this case, agent $1$ can improve her utility by misreporting her preference to be $\bidc'_1$ where $c'_{1,j} = y$ for all items $j \in [d]$ and $c'_{1,j} = 0$ for items $j > d$. Note that, if agent $1$ misreports to $\bidc'_1$, then the bid profile $\bidc' = (\bidc'_1, \bidc_2) \in \mathcal{F}_d$ since for all items $j > d$, we have $c'_{1,j} = c_{2,j} = 0$. Because $\bidc' \in \mathcal{F}_d$, we can invoke $\mathrm{IH}(d)$ to infer that, agent $1$ will be the dictator, and thus all the items $j \in [d]$ will be allocated to agent $1$. Hence, her utility, as per her true preference, $\bidc_1$, in $\allo(\bidc')$ will be $u_i(\bidc') \geq d \cdot y$. However, this violates the fact that the mechanism is DSIC, since agent $1$ is able to improve her utility from $u_1(\bidc) \leq x + (d-1)y$ to $u_i(\bidc') \geq d \cdot y$ by misreporting because $y > x$ and hence $u_i(\bidc') \geq d \cdot y > x + (d-1)y \geq u_1(\bidc)$. Therefore, all the items $j \in [d+1]$ must be allocated to agent $1$, $\allo_{1,j}(\bidc) = 1$. 
\end{proof}

\begin{claim}\label{claim:sdpop2}
    Let $\bide  = (\bide_1, \bide_2) \in \mathcal{F}_{d+1}$ be the bid shown below. Given $\bide$ as input, the mechanism must allocate all the first $d+1$ items to agent $1$, i.e., $\allo_{1,j}(\bide) = 1$ for all $j \leq d+1$.
\begin{align*}
    \bide = \begin{pmatrix} \bide_1 \\ \bide_2 \end{pmatrix} =  \begin{pmatrix}
x & x & \cdots & x & y \\
y & y & \cdots & y & x
\end{pmatrix}
\end{align*}
\end{claim}


\begin{proof}
Consider the profile $\bidf = (\bidf_1, \bidf_2) \in \mathcal{F}_{d+1}$ shown below where for all items $j \in [d+1]$ we have $f_{1,j} \in \{x, y\}$, $f_{2,j} \in \{0,x,y\}$ for all $j \in [d]$ and $f_{2,d+1} = 0$.

\begin{align*}
    \bidf = \begin{pmatrix} \bidf_1 \\ \bidf_2 \end{pmatrix} =  \begin{pmatrix}
    f_{1,1} & f_{1,2} & \cdots & f_{1,d} & f_{1, d+1} \\
    f_{2,1} & f_{2,2} & \cdots & f_{2,d} & 0
\end{pmatrix}
\end{align*}

By definition, only items in $[d+1]$ are positively-valued by agent $1$, i.e., $\items^+_1 = [d+1]$. We will first show that all items in $[d+1]$ will be allocated to agent $1$ in $\allo(\bidf)$. Towards this, note that if agent $1$ doesn't receive all the items in $[d+1]$, then she can increase her utility by misreporting her preference from $\bidf_1$ to $\bidc_1$: if agent $1$ misreports, then the resultant bids $(\bidc_1, \bidf_2)$ will match the condition in Claim \ref{claim:sdpop1}, and hence, by Claim \ref{claim:sdpop1}, all the items in $[d+1]$ will be allocated to agent $1$ resulting in an increase in utility. This will violate the fact that the mechanism is DSIC, implying that all the items in $[d+1]$ must be allocated to agent $1$ in the allocation $\allo(\bidf)$. In particular, if we consider the following bid $\bidf' = (\bidf'_1, \bidf'_2) \in \mathcal{F}_{d+1}$, which is a particular instantiation of the bids in $\bidf$, then we know that all the items in $[d+1]$ will be allocated to agent $1$.
\begin{align*}
    \bidf' = \begin{pmatrix} \bidf'_1 \\ \bidf'_2 \end{pmatrix} =  \begin{pmatrix}
x & x & \cdots & x & y \\
y & y & \cdots & y & 0
\end{pmatrix}
\end{align*}
Using this, we will now show that, the items $[d+1]$ must be allocated to agent $1$ in the allocation $\allo(\bide)$ corresponding to the concerned bids $\bide$ shown below.
\begin{align*}
    \bide = \begin{pmatrix} \bide_1 \\ \bide_2 \end{pmatrix} =  \begin{pmatrix}
x & x & \cdots & x & y \\
y & y & \cdots & y & x
\end{pmatrix}
\end{align*}



 The subsequent proof will be in two parts: first, we will show that all the items in $[d]$ will be allocated to agent $1$ in $\allo(\bide)$, then, we will show the same of item $d+1$. For the first part, assume towards a contradiction, that agent $2$ receives an item $j^* \in [d]$ for bid $\bide$, i.e., $\allo_{2,j^*}(\bide) = 1$. Given this, starting from the input bids $\bidf' = (\bidf'_1, \bidf'_2)$, agent $2$, having a true valuation of $\bidf'_2$, can increase her utility by misreporting to $\bide_2$. Note that, if agent $2$ misreports to $\bide_2$, then the resultant bids $(\bidf'_1, \bide_2)$ is exactly same as the bids $\bide$. Therefore, if agent $2$ misreports, then she will be allocated the item $j^* \in [d]$, since $\allo_{2,j^*}(\bide) = 1$ where $j^*$ is positively valued by agent $2$, $\bidf'_{2,j^*} = y > 0$. On the contrary, if agent $2$ reports her true valuation, then we know that in the resultant bids, $\bidf' = (\bidf'_1, \bidf'_2)$, all the items in $[d+1]$ will be allocated to agent $1$. Hence, by misreporting to $\bide_2$, agent $2$ can increase her utility, contradicting the fact that the mechanism is DSIC. Therefore, for all items $j \in [d]$, we must have $\allo_{1,j}(\bide) = 1$. Finally, we will show that $\allo_{1,d+1}(\bide) = 1$ as well. Suppose this is not the case, i.e., $\allo_{1,d+1}(\bide) = 0$. As already shown, item $d$ is allocated to agent $1$, and as per our assumption, item $d+1$ is allocated to agent $2$. Note that, in bid $\bide$, the value of item $d$ for agent $1$ and item $d+1$ for agent $2$ is $x$, whereas, the value of item $d+1$ for agent $1$ and item $d$ for agent $2$ is $y$ where $y>x$. Therefore, if agents $1$ and $2$ exchange items $d$ and $d+1$ between them, then the bundles of both the agents in the resultant allocation are strictly better wrt $\sdpgt_1$ and $\sdpgt_2$ respectively. This implies that the allocation before the exchange was not $\mathrm{SD}^+$efficient, contradicting the fact that the mechanism outputs $\mathrm{SD}^+$efficient allocations. Therefore, item $d+1$ also must be allocated to agent $1$, i.e., $\allo_{1,d+1}(\bide) = 1$.
\end{proof}

Finally, we will show that $\IH(d+1)$ holds. We begin by considering the bids $\bidg = (\bidg_1, \bidg_2)$ shown below, where $g_{2,j} \in \{0,x,y\}$ for all items $j \in [d+1]$, i.e., agent $2$'s bids are completely arbitrary.
\begin{align*}
    \bidg = \begin{pmatrix} \bidg_1 \\ \bidg_2 \end{pmatrix} =  \begin{pmatrix}
x & x & \cdots & x & y \\
g_{2,1} & g_{2,2} & \cdots & g_{2,d} & g_{2, d+1} \end{pmatrix}
\end{align*}

The allocation $\allo(\bidg)$ must be such that all items $[d+1]$ are allocated to agent $1$: otherwise, if there is an item $j' \in [d+1]$ such that $\allo_{2,j'}(\bidg) = 1$, then starting from the bids $\bide$, agent $2$ can misreport her valuation to $\bidg_2$ and increase her utility. As shown in Claim \ref{claim:sdpop2}, agent $2$ does not receive any item from $[d+1]$ in the allocation $\allo(\bide)$ despite having positive value for all of them in $\bide$. However, if starting from profile $\bide$, agent $2$ misreports her value to be $\bidg_2$, then as per our assumption, she will at least receive item $j' \in [d+1]$ in the resultant profile $(\bide_1, \bidg_2)$ because $\bidg = (\bide_1, \bidg_2)$, thereby increasing her utility. Since this violates the fact that the mechanism is DSIC, it must be the case that all items in $[d+1]$ are allocated to agent $1$. This implies that if agent $1$ reports valuation $\bidg_1$, then irrespective of agent $2$'s bid all the items in $[d+1]$ are allocated to agent $1$.

As the final step of the proof, we consider a completely arbitrary bid profile $\bidh = (\bidh_1, \bidh_2) \in \mathcal{F}_{d+1}$ shown below where $h_{i,j} \in \{0, x, y\}$ for all agents $i \in [2]$ and items $j \in [d+1]$. Since the mechanism is DSIC, we have that in allocation $\allo(\bidh)$, agent $1$ must receive all the items she has a positive value for: since otherwise she can misreport her value as $\bidg_1$, resulting in the bid profile $(\bidg_1, \bidh_2)$ (which is equivalent to $\bidg$), and as previously shown, agent $1$ receives all $[d+1]$ items in allocation $\bidg$. Therefore, in an arbitrary bid profile, $\bidh \in \mathcal{F}_{d+1}$, agent $1$ will receive all the items she has a positive value for, establishing her to be the dictator and proving $\IH(d+1)$. Finally, note that the items which are positively valued only by agent $2$, must be allocated to agent $2$, since the mechanism outputs $\mathrm{SD}^+$efficient allocations. This completes the induction step and concludes the proof.

\begin{align*}
    \bidh = \begin{pmatrix} \bidh_1 \\ \bidh_2 \end{pmatrix} =  \begin{pmatrix}
h_{1,1} & h_{1,2} & \cdots & h_{1,d} & h_{1, d+1} \\
h_{2,1} & h_{2,2} & \cdots & h_{2,d} & h_{2, d+1} \end{pmatrix} \quad 
\end{align*}
\end{proof}

\paragraph{Tightness of the above characterization}

In the following two propositions, we show that Theorem~\ref{theorem:sd_po_plus_result} is tight in two orthogonal ways (their proof appear in~\Cref{appendix:sdpoplus missing proofs}). Specifically, if we relax $\sdpoplus$ to the weaker notion of $\sdpo$, then serial dictatorship no longer remains the only DSIC mechanism --- this is formally stated in Proposition \ref{proposition:sdpo-dict}. Note that although $\mathrm{SD}$ efficiency is widely used in matching (where all agents receive the same number of items), the notion of $\mathrm{SD}^+$efficiency is more appropriate when each agent may receive a different number of items. Specifically, an allocation can only be stochastically dominated by another allocation if all agents receive at least the same number of items; this is not true for stochastically and positively dominated.


\begin{restatable}{proposition}{tightnessPropOne}\label{proposition:sdpo-dict}
    For $n=2$ agents, there exists a deterministic mechanism other than the serial dictatorship that is DSIC and outputs $\mathrm{SD}$ efficient allocations.
\end{restatable}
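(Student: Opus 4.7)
The plan is to exhibit a simple non-dictatorial deterministic mechanism that is DSIC and always outputs an $\mathrm{SD}$ efficient allocation. Consider the mechanism $\mech$ that, on input $(\bids_1, \bids_2)$, outputs the allocation where agent $1$ receives exactly the single item she ranks first under her report $\bids_1$, and agent $2$ receives all remaining $m-1$ items; $\mech$ ignores agent $2$'s report entirely.

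My verification would proceed in three short steps. First, to establish DSIC: agent $2$'s report has no effect on the outcome, so any report is equally good for her; agent $1$ always receives a single item, and truthful reporting causes $\mech$ to hand her an item of maximum value $\max_j v_{1,j}$, while any misreport results in some item of weakly smaller true value. Second, to establish $\mathrm{SD}$ efficiency of the output $\allo$ under truthful reporting (so $\bids = \val$), I would proceed by contradiction and suppose $\alloy \sdgt \allo$. Evaluating the $\sdgeq_i$ inequality at $\ell = m$ for each agent and using that every item is allocated gives $|Y_i| = |X_i|$ for both $i \in \{1, 2\}$, and in particular $|Y_1| = 1$. Applying $\alloy_1 \sdgeq_1 \allo_1$ at $\ell = 1$ then forces $y_{1, \prefer_1(1)} \geq x_{1, \prefer_1(1)} = 1$, which together with $|Y_1| = 1$ yields $Y_1 = \{\prefer_1(1)\} = X_1$ and hence $\alloy = \allo$, contradicting $\alloy \sdgt \allo$.

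For the third step, I would show that $\mech$ is not a serial dictatorship by exhibiting an instance on which its output differs from both possible dictatorships. For example, with $\val_1 = (1, 1, 0, \ldots, 0)$ and $\val_2 = (0, 1, 0, \ldots, 0)$, the mechanism outputs $X_1 = \{1\}$ and $X_2 = \{2, 3, \ldots, m\}$. The serial dictatorship with agent $1$ as dictator instead gives her $\desired_1 = \{1, 2\}$, and the serial dictatorship with agent $2$ as dictator gives agent $1$ the complement $\{1, 3, \ldots, m\}$ of $\desired_2 = \{2\}$. Both differ from $\mech$'s output, so $\mech$ is not a serial dictatorship.

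The main subtlety is the $\mathrm{SD}$-efficiency step, which leverages the structural fact that $\sdgt$ dominance in a two-agent setting where every item is allocated forces the two bundle sizes to match exactly. This rigidity is precisely what is relaxed by $\sdpgt$ dominance (which only constrains each agent's preference order up to position $m_i$ and thereby admits size-altering improvements involving zero-valued items), and is exactly why Theorem~\ref{theorem:sd_po_plus_result} goes through under $\mathrm{SD}^+$efficiency but fails under $\mathrm{SD}$ efficiency.
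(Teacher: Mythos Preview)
Your proposal is correct and takes essentially the same approach as the paper: the paper's mechanism is the ``dual'' of yours, giving agent~$1$ all items except her least favorite (so agent~$2$ receives one item) rather than giving agent~$1$ only her favorite item. Both arguments ignore agent~$2$'s report, make agent~$1$'s choice trivially optimal given a fixed bundle size, and establish $\mathrm{SD}$ efficiency via exactly the bundle-size rigidity of $\sdgeq$ that you spell out at $\ell=m$.
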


Second, the proof of Theorem \ref{theorem:sd_po_plus_result} holds even for ternary preferences i.e., $v_{i,j} \in \{0,x,y\}$ for all $i \in \agents$ and $j \in \items$. In fact, Theorem \ref{theorem:sd_po_plus_result} does not hold if we consider the slightly more restricted class of bivalued preferences, i.e., $v_{i,j} \in \{x,y\}$ for fixed $x,y \in \mathbb{R}_{\geq 0}$. 

\begin{restatable}{proposition}{bivaluedPref}\label{proposition:b prefs}
    For bivalued additive preferences, there exist DSIC, non-dictatorial mechanisms that output Pareto efficient allocations.
\end{restatable}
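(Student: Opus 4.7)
My plan is to explicitly construct, for any bivalued domain $\{x,y\} \subseteq \R_{\ge 0}$ with $x \ne y$, a deterministic DSIC mechanism that is Pareto efficient and not a serial dictatorship. Assume WLOG $0 \le x < y$ (if $x = y$ every allocation is Pareto efficient and every mechanism is trivially DSIC, so non-dictatorial constructions are immediate). The construction splits into two sub-cases depending on whether $x = 0$ (binary valuations) or $x > 0$, because the arguments for Pareto efficiency break down at $x = 0$ when applied to the $x > 0$ mechanism, and vice versa for DSIC.

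For the case $x = 0$, let $H_i = \{j \in \items : b_{i,j} = y\}$ denote agent $i$'s reported high-value set, and define $\mech$ as follows: assign items in $H_1 \setminus H_2$ to agent $1$, items in $H_2 \setminus H_1$ to agent $2$, items in $H_1 \cap H_2$ alternately by item index (say odd-indexed to agent $1$, even-indexed to agent $2$), and items in $\items \setminus (H_1 \cup H_2)$ to agent $2$. For the case $x > 0$, define $\mech$ by giving agent $1$ her $k = \lceil m/2 \rceil$ most-preferred items according to her reported valuation, breaking ties among items of equal reported value by first preferring items agent $2$ reports at $x$ and then by item index; agent $2$ receives the remaining $\lfloor m/2 \rfloor$ items.

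To verify the claimed properties of both mechanisms, I would proceed as follows. For the $x=0$ mechanism: (i) DSIC follows because misreporting a truly $y$-valued item as $0$ risks losing it to the other agent, while misreporting a truly $0$-valued item as $y$ can only gain items the misreporter values at $0$, neither of which improves utility; (ii) Pareto efficiency follows by checking each of the four valuation categories and ruling out beneficial swaps/transfers, using that items in $(y,0)$ and $(0,y)$ are assigned to their unique high-valuer while $(y,y)$ and $(0,0)$ items are valuation-neutral for the remaining agent; (iii) non-dictatoriality is witnessed on any instance with a $(y,y)$ item, where the split differs from the serial dictatorship. For the $x > 0$ mechanism: (i) DSIC for agent $1$ holds because her bundle is the maximum-utility $k$-subset under her report, and any misreport can only replace truly $y$-valued items with truly $x$-valued items; (ii) DSIC for agent $2$ follows by case analysis on how her report affects tie-breaking, showing every deviation either leaves agent $1$'s selection unchanged or causes agent $1$ to pick an item agent $2$ truly values more, weakly harming agent $2$; (iii) Pareto efficiency follows because any alternative allocation with $|Y_1| \ne k$ strictly hurts one agent (each item is worth at least $x > 0$ to both agents), and among alternatives with $|Y_1| = k$, the tie-breaking rule selects the maximum-utility bundle for agent $1$ of \emph{minimum} agent-$2$ value, so any other such bundle gives agent $2$ weakly smaller utility on the complement; (iv) non-dictatoriality holds since agent $2$ receives $\lfloor m/2 \rfloor \ge 1$ items.

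The main obstacle is verifying DSIC for agent $2$ in the $x>0$ construction: since her report influences agent $1$'s choice only through tie-breaking, the analysis must track how swapping an item between agent $2$'s "high-reported" and "low-reported" sets shifts agent $1$'s pick. The key insight is that the tie-break rewards agent $1$ for taking items agent $2$ reports at $x$; so under truthful reporting, agent $1$ leaves items agent $2$ truly values at $y$ untouched whenever possible, and any deviation by agent $2$ either changes nothing or causes agent $1$ to grab a genuinely $y$-valued item instead of an $x$-valued one, a strict loss for agent $2$. The secondary subtlety is that, when $x = 0$, the top-$k$ mechanism is no longer Pareto efficient (a $0$-valued transfer can be weakly Pareto-improving), which is exactly why a distinct category-based mechanism is needed in that regime.
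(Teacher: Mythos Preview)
The paper's own proof simply cites two known results from the literature (Halpern et al.\ for the binary case $x=0$, and Amanatidis et al.\ Corollary~3.20 for the strictly-positive case $x>0$), whereas you take a genuinely constructive route by exhibiting explicit mechanisms in each regime. This is more self-contained, and it makes visible exactly which mechanism witnesses the separation from serial dictatorship; the price is that you must verify DSIC and Pareto efficiency by hand rather than import them.

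Your $x=0$ construction and its verification are correct. For the $x>0$ construction, the DSIC arguments (including the more delicate one for agent~$2$) and non-dictatoriality are fine, and the mechanism is in fact Pareto efficient. However, your justification for Pareto efficiency has a real gap in the case $|Y_1|>k$. You write that such an alternative ``strictly hurts one agent (each item is worth at least $x>0$ to both agents),'' apparently reasoning that since agent~$2$ receives fewer than $m-k$ items, her utility must drop. That does not follow: agent~$2$ may receive \emph{fewer but higher-valued} items. Concretely, one can swap a single $(y,\cdot)$-item out of $X_1$ and two $(x,\cdot)$-items from $X_2$ in; this raises $|Y_1|$ and can raise $u_1$, while agent~$2$ trades two low-value items for one high-value item, so the sign of $u_2(\alloy)-u_2(\allo)$ is not determined by cardinalities alone. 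A correct argument for this case needs an additional structural fact---for example, that for every $b\in X_1$ and $a\in X_2$ the selection rule ensures $v_{1,b}-v_{2,b}\ge v_{1,a}-v_{2,a}$, from which one can deduce (via total welfare $u_1+u_2$) that any strict Pareto improvement with $|A|>|B|$ forces $u_2(B)<u_2(A)$, a contradiction. Without something along these lines, the $|Y_1|>k$ branch of your PE argument is not established.
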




\subsection{Round Robin is BIC}\label{subsec:rr is bic}

We now show that a variation of Round-Robin, which we call \textsc{Round-Robin}$^{\text{pass}}$ ($\rrpass$), is BIC for neutral priors. Moreover, $\rrpass$ always outputs an $\mathrm{SD}^+$efficient and $\EFone$ allocation,\footnote{We note that items which have zero value for all the agents won't be allocated by $\rrpass$. Indeed, this does not affect the fairness and efficiency guarantees of $\rrpass$.} bypassing the negative results of Theorems~\ref{EF1negative} and~\ref{theorem:sd_po_plus_result}.


\paragraph{The Bayesian Setting.} 
We assume that the for each agent $i \in \agents$, there is a \emph{neutral} prior distribution $\mathcal{D}_i$ supported over $\mathbb{R}_{\geq 0}^m$ such that the valuation vector $\val_i = (v_{i,1}, \ldots, v_{i,m}) \sim \mathcal{D}_i$. We say that $\mathcal{D}_i$ is neutral if for every permutation $\sigma: [m] \to [m]$, we have $\Pr_{\val_i \sim \mathcal{D}_i}[v_{i,\sigma(1)} > \ldots > v_{i,\sigma(m)}] = \frac{1}{m!}$, and for each $j \in \items$, the marginal distribution of $\mathcal{D}_i$ wrt $v_{ij}$ does not have point masses. Intuitively, neutral priors imply that all items are equivalent, in expectation. A natural example of a neutral prior is $v_{i,j} \sim \widehat{\mathcal{D}}_i$ where $\widehat{\mathcal{D}}_i$ is supported on $\mathbb{R}_{\geq 0}$ and does not have a point mass. This exact prior has been extensively studied in the literature on stochastic fair division \cite{bai2021envy,manurangsi2020envy,manurangsi2021closing,amanatidis2017approximation}; a stochastic fair division model based on neutrality is also studied in~\cite{benade2023existence}. 
Note that, neutral priors are permissive enough to allow for correlated item values, for e.g., if $\val_i \sim \Delta^{m-1}$ where $\sim \Delta^{m-1}$ denotes a uniform draw from the $m-1$-simplex,\footnote{Recall that, an $m-1$-simplex $\Delta^{m-1} = \{(x_1, x_2, \ldots, x_m) : \sum_{i} x_i = 1 \text{ and } x_i \geq 0 \text{ for all } i\}$} then the item values are identically distributed but not independent. Additionally, the item values for an agent don't even have to be identically drawn: for two items, the prior $v_{i,1}\sim \mathcal{U}[0,1]$ and $v_{i,2}\sim \mathcal{U}[\frac{1}{2},\frac{3}{4}]$ is neutral.\footnote{$\mathcal{U}[a,b]$ represents a uniform distribution over the interval $[a,b]$.}

\paragraph{Our Mechanism.}
We begin by defining the $\rrpass$ mechanism: intuitively, we execute the standard Round-Robin procedure but allow agents to \emph{pass} their turn if all the remaining items are zero-valued. Formally, given the report $\bids_i$ for each agent $i \in \agents$, the mechanism first computes a tuple $(\prefer_i, m_i)$, where $\prefer_i$ is the strict preference order of items for agent $i$ (ties are broken lexicographically), and $m_i = |\items^+_i|$, where $\items^+_i = \{j : b_{i,j}>0\}$ is the set of items agent $i$ values positively. 
Then the mechanism constructs the output allocation over rounds. In each round, agents arrive in the fixed order $1, 2, \ldots, n$, and each agent $i \in \agents$ is allocated the item with the smallest (according to $\prefer_i$) ranking among the set of remaining items; if in agent $i$'s turn all items in $\items_i^+$ have already been allocated, then $\rrpass$ skips $i$'s turn. This process is repeated until either all the items are allocated or all remaining items are undesirable for all the agents. In the latter case, the remaining items are allocated deterministically and arbitrarily among the agents, resulting in a complete allocation. $\rrpass$ is almost ordinal, i.e., to execute it we need the (strict) preference order $\prefer_i$, but also the number of zero-valued items $m_i$ for each $i \in \agents$ (noting that one can construct $\items^+_i$ by looking at the bottom $m_i$ elements of $\prefer_i$). Thus, in the subsequent discussion, we often interchange $\val_i$ with $(\prefer_i, m_i)$.
Our goal is to prove the following theorem.

\begin{theorem}\label{thm: rr main}
   The \textsc{Round-Robin}$^{\text{pass}}$ mechanism is Bayesian incentive compatible for neutral priors, and always outputs SD$^+$efficient and $\EFone$ allocations. 
\end{theorem}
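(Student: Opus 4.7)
The plan is to establish the three properties of $\rrpass$---$\EFone$, $\sdpoplus$-efficiency, and BIC for neutral priors---separately, with BIC as the main novel ingredient.

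\textbf{$\EFone$ and $\sdpoplus$-efficiency.} For $\EFone$, I adapt the classical round-robin argument while carefully handling passes. Fix two agents $i, j \in \agents$. If $i$ precedes $j$ in the queue order, then in every round where both pick, $i$'s pick weakly dominates $j$'s from $i$'s viewpoint (since $i$ picks first), and any item $j$ obtains after $i$ passes has value $0$ to $i$ (since $\desired_i$ is exhausted). If $j$ precedes $i$, I drop $j$'s first pick $g$ and argue that $v_i(x_r) \geq v_i(y_{r+1})$ for every round $r$ in which $i$ picks (since $y_{r+1}$ was still available when $i$ chose $x_r$). For $\sdpoplus$-efficiency, I argue by contradiction via an exchange chain. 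A matching lemma (via Hall's theorem applied to the prefix-sum inequalities in $\sdpgeq_i$) gives, whenever $\alloy_i \sdpgeq_i \allo_i$, an injection $\phi_i : (\allo_i \setminus \alloy_i) \cap \desired_i \to (\alloy_i \setminus \allo_i) \cap \desired_i$ with $\mathrm{rank}_{\prefer_i}(\phi_i(g)) < \mathrm{rank}_{\prefer_i}(g)$. Starting from a gained item $g_0$ witnessing the strict $\sdpgt$-improvement for some agent $j^*$, iteratively set $i_k$ to be the $\rrpass$-owner of $g_k$ and $g_{k+1} \coloneqq \phi_{i_k}(g_k)$. Two invariants drive the contradiction: $g_k \in \desired_{i_k}$ always holds (positively-valued items cannot be left for the arbitrary phase, so the matching applies), and the $\rrpass$-pick-time $\sigma(g_k)$ strictly decreases along the chain (otherwise $i_k$ would have picked $g_{k+1}$ instead of $g_k$); the finiteness of pick-times contradicts the unbounded chain.

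\textbf{BIC: symmetry and monotonicity.} Let $q_k(m')$ denote the probability that the item at rank $k$ of agent $i$'s reported preference is in $X_i$ when $i$ reports $(\prefer'_i, m')$. First, I establish symmetry: $q_k(m')$ depends only on $k$ and $m'$, not on $\prefer'_i$. For any relabeling $\tau$ of items, neutrality of each $\mathcal{D}_j$ (combined with independence across agents) makes $\tau \circ \prefer_{-i}$ equidistributed with $\prefer_{-i}$, while $\rrpass$ is equivariant under $\tau$, so replacing $\prefer'_i$ by $\tau \circ \prefer'_i$ transforms the random bundle $X_i$ into $\tau(X_i)$ in distribution. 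Applying this with a $\tau$ that sends one report to another with a different item at position $k$ yields the invariance of $q_k(m')$. Second, I prove monotonicity: $q_k(m')$ is non-increasing in $k$. I would couple the executions of $\rrpass$ under two reports differing by an adjacent swap at positions $k$ and $k+1$: the executions coincide until $i$'s first turn where both swapped items are simultaneously available, at which point the item at the smaller position is picked, yielding the desired monotonicity via symmetry.

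\textbf{BIC: rearrangement and $m'$-misreports.} With symmetry and monotonicity established, the rearrangement inequality shows that, for each fixed $m'$, $\sum_k v_{i,\prefer'_i(k)}\, q_k(m')$ is maximized at $\prefer'_i = \prefer_i$. To compare across reported $m'$, I couple $\rrpass(\prefer_i, m_i)$ with $\rrpass(\prefer_i, m')$: the executions coincide until the first side exhausts its reported desired set, giving $q_k(m') = q_k(m_i)$ for $k \leq \min(m_i, m')$; and $q_k(m') = 0$ for $k > m'$ since $i$ never picks outside the top $m'$ positions of the report. Hence truth-telling weakly dominates every misreport: it matches the utility of any $m' \geq m_i$ and strictly exceeds that of any $m' < m_i$ (which forfeits positions $m'+1, \ldots, m_i$). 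The main obstacle is the monotonicity of $q_k(m')$ in $k$: the adjacent-swap coupling cascades through the picks of agents $i+1, \ldots, n$ and all subsequent rounds, and must be carried through carefully in conjunction with the neutrality of $\mathcal{D}_{-i}$.
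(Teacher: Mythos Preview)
Your proposal is correct and, for the BIC and $\EFone$ parts, essentially coincides with the paper's argument: the paper likewise proves (Lemma~\ref{lemma:itemindependence}) that the interim probability at rank $k$ is independent of the reported order via permutation equivariance plus neutrality, then (Lemma~\ref{lemma:interimmonotonicity}) monotonicity in $k$ via an adjacent swap, and finally (Lemma~\ref{lemma:RR is BIC}) concludes BIC by a rearrangement-style swap argument together with the same coupling you describe for $m'$-misreports. Two points deserve comment.

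For $\mathrm{SD}^+$efficiency the paper takes a shorter route than your Hall-matching exchange chain: let $t^*$ be the first timestep at which the item assigned by $\rrpass$ (to agent $i^*$, say) is not assigned to $i^*$ in the putative dominating allocation $B$. That item lies in $\desired_{i^*}$ (otherwise $\rrpass$ would not have allocated it during the regular phase), and since $A$ and $B$ agree on all of $i^*$'s higher-ranked picks, the prefix-sum requirement for $B_{i^*} \sdpgeq_{i^*} A_{i^*}$ is violated at that rank---an immediate contradiction. Your chain argument is also correct, but the first-disagreement approach sidesteps the matching lemma and the decreasing-pick-time invariant entirely.

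The ``main obstacle'' you flag---cascading of the adjacent-swap through later rounds---is not actually an obstacle, and the paper dispatches it in one line. One proves, for each \emph{fixed} $\prefer_{-i}$, the pointwise inequality $x_{i,a}((\prefer_i,m),(\prefer_{-i},m)) \geq x_{i,a}((\prefer'_i,m),(\prefer_{-i},m))$ where $a=\prefer_i(k)$: if the two executions ever diverge, it is precisely at a turn where $i$ picks $a$ under $\prefer_i$ (both $a,b$ available, nothing of smaller rank left), so the left side equals $1$ and the inequality is trivial; if they never diverge the two sides are equal. No analysis of subsequent rounds, and no appeal to neutrality, is needed for this step---neutrality enters only in the symmetry part.
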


First, we prove that $\rrpass$ is BIC for neutral priors. The high-level idea of our proof is similar to the arguments of~\cite{dasgupta2022ordinal} for showing that the probabilistic serial mechanism is BIC in a setting with complete ordinal preferences. Unlike the ordinal setting, our cardinal setting allows agents to have undesirable, zero-valued items. As hinted by Theorem~\ref{theorem:sd_po_plus_result} and Proposition~\ref{proposition:sdpo-dict}, allowing for zero values makes it harder to establish incentive guarantees. To mitigate this, our mechanism $\rrpass$ (contrary to Round-Robin) never allocates an item to an agent that doesn't value it, unless the item is valued at zero by everyone. This special treatment of zero values is required to prove key structural properties about the \emph{interim allocation} (defined below), which are used throughout our proofs.





For each agent $i \in \agents$ and item $j \in \items$ we define the \textbf{interim allocation} $\qint_{i,j}(\bids_i) \in [0,1]$ of a mechanism with allocation function $\allo(\cdot)$ to be the probability that agent $i$ is allocated item $j$ when she reports $\bids_i$. Formally, $\qint_{i,j}(\bids_i) = \mathbb{P}_{\val_{-i} \sim \mathcal{D}_{-i}} [x_{i,j}(\bids_i, \val_{-i})]$.

We show that, for each agent $i \in \agents$, the interim allocation of the $\rrpass$ mechanism, given neutral priors, is very structured. 
For each agent $i \in \agents$, there is a (fixed) \textbf{positional interim allocation} $\mathbf{q_i^{\texttt{pos}}} = (\qpos_{i,1}, \qpos_{i,2}, \ldots, \qpos_{i,m})$ such that, if agent $i$ reports valuation $\bids_i$ to $\rrpass$, then her interim allocation $\qint_{i,j}(\bids_i)$ is exactly $\qpos_{i,k}$ if $b_{i,j} > 0$ and $\prefer_i(k) = j$; otherwise, $\qint_{i,j}(\bids_i) = 0$ if $b_{i,j}= 0$ (Lemma~\ref{lemma:itemindependence}). Intuitively, the interim allocation of a positively valued item depends on only its position in the preference order. Furthermore, the positional interim allocations are monotone: $\qpos_{i,1} \geq \ldots \geq \qpos_{i,m}$ (Lemma~\ref{lemma:interimmonotonicity}). 

Since $\rrpass$ only requires the tuples $\{(\prefer_i, m_i)\}_{i \in \agents}$ to execute, overloading notations, we will also write the interim allocation as a function of these tuples, i.e., $q_{i,j}(\prefer_i,m_i)$. 
As previously mentioned, neutral priors ensure that each of the $m!$ possible ordering of $\prefer_j$, for any agent $j \neq i$ is equally likely. In addition, the fact that each marginal distribution of $\mathcal{D}_{j}$ does not have any point mass, implies that, with probability $1$, we have $m_j = m$ for all $j \neq i$. Using these observations, we can write the interim allocation of agent $i$ as,
\begin{align}\label{eq:interimasorder}
    q_{i,j}(\prefer_i,m_i) = \frac{1}{\left(m!\right)^{n-1}} \sum_{\prefer_{-i}}x_{i,j}((\prefer_i, m_i), (\prefer_{-i},m)),
\end{align}
for each item $j \in \items$, where $(\prefer_{-i},m) = \{(\prefer_j, m_j)\}_{j \neq i}$.

\begin{restatable}{lemma}{ItemIndependence}\label{lemma:itemindependence}
For each agent $i\in \agents$, there exists a positional interim allocation $\mathbf{q_i^{\texttt{pos}}} = (\qpos_{i,1}, \ldots, \qpos_{i,m})$ such that if agent $i$ reports any valuation $\bids_i$, then her interim allocation $\qint_{i,j}(\bids_i) = \qpos_{i,k}$ if (i) $b_{i,j} > 0$, and (ii) $\prefer_i(k) = j$; otherwise, $\qint_{i,j}(\bids_i) = 0$ if $b_{i,j}= 0$.
\end{restatable}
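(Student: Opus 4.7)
The plan is to split the claim into the zero-valued case and the positively-valued case, handling the latter with two successive symmetry arguments. For the zero-valued case, because each marginal of every $\mathcal{D}_\ell$ with $\ell \neq i$ has no point mass, with probability one every other agent strictly positively values every item. Inspecting $\rrpass$, agent $i$ can only be awarded an item $j \notin \items_i^+$ during the final ``arbitrary leftover'' step, and that step runs only when every remaining item has zero value for every agent; since some $\ell \neq i$ almost surely desires every item, that step is never triggered. Hence $\qint_{i,j}(\bids_i) = 0$ whenever $b_{i,j} = 0$.

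Next, I would establish invariance across preference orders with a fixed $m_i$ via a relabeling argument. Fix $m_i$ and two preference orders $\prefer_i^{(1)}, \prefer_i^{(2)}$, and let $\sigma$ be the permutation of items with $\sigma \circ \prefer_i^{(1)} = \prefer_i^{(2)}$. Because $\rrpass$ consumes only the ordinal tuples $(\prefer_\cdot, m_\cdot)$ and is equivariant under relabeling items, we have $x_{i,\sigma(j)}((\sigma \circ \prefer_i^{(1)}, m_i), (\sigma \circ \prefer_{-i}, m)) = x_{i,j}((\prefer_i^{(1)}, m_i), (\prefer_{-i}, m))$ for every $\prefer_{-i}$. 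Neutrality of the prior makes $\prefer_{-i} \mapsto \sigma \circ \prefer_{-i}$ a bijection on the set of ordering tuples, so applying \eqref{eq:interimasorder} and re-indexing yields $q_{i,\prefer_i^{(1)}(k)}(\prefer_i^{(1)}, m_i) = q_{i,\prefer_i^{(2)}(k)}(\prefer_i^{(2)}, m_i)$, showing that this common value depends only on $(k, m_i)$.

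Then I would establish invariance across $m_i$ with a fixed preference order via coupling. Fix $\prefer_i$ and $m_i^{(1)} < m_i^{(2)}$, and couple the two executions on a common realization of $(\prefer_{-i}, m)$. While some item in $\{\prefer_i(1), \ldots, \prefer_i(m_i^{(1)})\}$ still remains unallocated, agent $i$'s top available desired item is the same in both executions (she picks the $\prefer_i$-topmost available item in this set either way), and the other agents' actions depend only on the current set of remaining items and their own fixed types; hence the two runs are in lockstep. They first diverge precisely when these $m_i^{(1)}$ items have all been allocated, at which point the restriction of the allocation to those items is already fixed and common to both runs. Therefore $q_{i,\prefer_i(k)}$ agrees across the two reports for every $k \leq m_i^{(1)}$. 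Defining $\qpos_{i,k}$ as the interim at position $k$ under any canonical report that positively values all items then completes the construction: the two invariance steps imply it equals $\qint_{i,j}(\bids_i)$ whenever $j = \prefer_i(k)$ and $b_{i,j} > 0$.

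The main obstacle will be the coupling step: I need to argue carefully that the two runs remain synchronized turn by turn until the first moment the top $m_i^{(1)}$ items are exhausted, and that the allocation restricted to these items is frozen at that moment regardless of what happens afterwards. The label-symmetry step is routine once the item-relabeling equivariance of $\rrpass$ is spelled out, and the zero-valued case is immediate from the no-point-mass assumption.
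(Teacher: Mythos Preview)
Your proposal is correct and follows essentially the same three-part decomposition as the paper: the zero-valued case via the no-point-mass assumption, invariance across preference orders (with fixed $m_i$) via item-relabeling equivariance of $\rrpass$ combined with the bijection $\prefer_{-i}\mapsto\sigma\circ\prefer_{-i}$ on ordering tuples, and invariance across $m_i$ (with fixed $\prefer_i$) via a coupling showing the two runs agree until the top $m_i^{(1)}$ items are exhausted. The only cosmetic difference is that the paper establishes the relabeling step for transpositions first and then composes them to reach an arbitrary permutation, whereas you apply a single arbitrary permutation directly; both arguments are equally valid.
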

\begin{proof}
By definition of $\rrpass$, if $b_{i,j} = 0$, then agent $i$ can be allocated item $j$ only if $j$ is valued zero by all other agents. Since the marginal distributions of $\{\mathcal{D}_j\}_{j \neq i}$ do not have a point mass, this is a zero probability event, and hence, we get that $q_{i,j}(\bids_i) = 0$ if $b_{i,j} = 0$. In the subsequent proof, we will focus on the case when $b_{i,j} > 0$. Let $\widehat{\bids}_i$ and $\widetilde{\bids}_i$ be two bids whose corresponding preference orders are $\widehat{\prefer}_i$ and $\widetilde{\prefer}_i$ respectively. Additionally, suppose $\widehat{m} = |\{j : \widehat{b}_{i,j} > 0 \}|$ and $\widetilde{m} = |\{j : \widetilde{b}_{i,j} > 0 \}|$ for some $\widehat{m} \leq \widetilde{m}$. Given this, we will show that if $\widehat{\prefer}_i(k) = r$ and $\widetilde{\prefer}_i(k) = s$ for some $k \leq \widehat{m}$, then we must have $q_{i,r}(\widehat{\prefer}_i, \widehat{m}) = q_{i,s}(\widetilde{\prefer}_i, \widetilde{m})$. This will imply that the interim allocation only depends on the ranking of the item. Finally, defining $\qpos_{i,k} \coloneqq q_{i,r}(\widehat{\prefer}_i, \widehat{m}) = q_{i,s}(\widetilde{\prefer}_i, \widetilde{m})$ will complete the proof.

The proof of this will be based on establishing the following, simpler, pairwise-version of the above proposition. For a given preference order $\prefer_i$, define $\prefer'_i = \sigma(\prefer_i)$ to be the preference order obtained by swapping the ranks of items $a$ and $b$, where $\sigma:[m]\to [m]$ is some permutation of items with $\sigma(a)= b$, $\sigma(b) = a$, and $\sigma(\ell) = \ell$ for all $\ell \in [m] \setminus \{a,b\}$. 
Given this, we will show that the interim allocations corresponding to $\prefer_i$ and $\prefer'_i$ are such that, we have 

\begin{align}\label{eq:pairwise}
q_{i,j}(\prefer_i,\widehat{m}) = q_{i,\sigma(j)}(\prefer'_i,\widehat{m}),
\end{align}

for all items $j \in \items$. This implies that the interim allocations do not change wrt the ranking of items if we swap two items. Note that, by repeatedly swapping pairs of items, we can convert the ordering $\widehat{\prefer}_i$ to $\widetilde{\prefer}_i$. By Equation (\ref{eq:pairwise}), this implies that $q_{i,r}(\widehat{\prefer}_i, \widehat{m}) = q_{i,s}(\widetilde{\prefer}_i, \widehat{m})$. Additionally, note that $q_{i,s}(\widetilde{\prefer}_i, \widehat{m}) = q_{i,s}(\widetilde{\prefer}_i, \widetilde{m})$. This follows the fact that $k \leq \widehat{m} \leq \widetilde{m}$ (where $\widetilde{\prefer}_i(k) = s$) and by definition of $\rrpass$ increasing the second value of the tuple (i.e., $\widehat{m} \leq \widetilde{m}$) cannot change the allocation of an item whose rank is less than or equal to $\widehat{m}$. Hence, Equation (\ref{eq:pairwise}), once established will imply that $q_{i,r}(\widehat{\prefer}_i, \widehat{m}) = q_{i,s}(\widetilde{\prefer}_i, \widetilde{m})$.

%



Therefore, to complete the proof will now establish Equation (\ref{eq:pairwise}). Note that for every possible report of other agents, $\prefer_{-i}$ we have,
\begin{align*}\label{eq:abswap}
    x_{i,j}((\prefer_i,\widehat{m}) (\prefer_{-i},m)) & = x_{i,\sigma(j)}((\sigma(\prefer_i),\widehat{m}),(\sigma(\prefer_{-i}),m))\\
    & = x_{i,\sigma(j)}((\prefer'_i,\widehat{m}),(\sigma(\prefer_{-i}),m)), \numberthis
\end{align*}
 where $(\sigma(\prefer_{-i}),m) = \{(\sigma(\prefer_{j}),m)\}_{j \neq i}$. The first equality above is based on the following observation. Item $j$ is allocated to agent $i$ given preference orders $\prefer_i$ and $\prefer_{-i}$ iff agent $i$ receives the item $\sigma(j)$ given preference orders $\sigma(\prefer_i)$ and $\sigma(\prefer_{-i})$. This is true because, given the preference orders of each agent, the allocation returned by $\rrpass$ depends only on the preference orders and not the indexes of the items. Using this, we have that for all items $j \in \items$,
\begin{align*}
    q_{i,j}(\prefer_i,\widehat{m}) &= \frac{1}{(m!)^{n-1}}\sum_{\prefer_{-i}}x_{i,j}((\prefer_i,\widehat{m}), (\prefer_{-i},m)) \tag{via \eqref{eq:interimasorder}}\\
    & = \frac{1}{(m!)^{n-1}}\sum_{\prefer_{-i}} x_{i,\sigma(j)}((\prefer'_i,\widehat{m}),(\sigma(\prefer_{-i}),m)) \tag{via \eqref{eq:abswap}}\\
    &= \frac{1}{(m!)^{n-1}}\sum_{\prefer_{-i}} x_{i,\sigma(j)}((\prefer'_i,\widehat{m}),(\prefer_{-i},m)) \\
    & = q_{i,\sigma(j)}(\prefer'_i,\widehat{m}) \tag{via \eqref{eq:interimasorder}},
\end{align*}
where the penultimate equality holds because the set of all $\prefer_{-i}$ and the set of all $\sigma(\prefer_{-i})$ is the same. This establishes Equation (\ref{eq:pairwise}) and concludes the proof.
\end{proof}


\begin{restatable}{lemma}{interimMonotonicity}\label{lemma:interimmonotonicity}
Positional interim allocations are monotone, i.e., $\qpos_{i,1} \geq \qpos_{i,2} \geq \ldots \geq \qpos_{i,m}$ for all $i \in \agents$.
\end{restatable}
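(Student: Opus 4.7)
The plan is to reduce the monotonicity claim to a pointwise dominance about $\rrpass$. Fix an index $k$ and write $a=\prefer_i(k)$ and $b=\prefer_i(k+1)$; let $\prefer_i'$ be the preference obtained from $\prefer_i$ by swapping the two items $a$ and $b$, so that $b$ sits at position $k$ and $a$ at position $k{+}1$ in $\prefer_i'$. By Lemma~\ref{lemma:itemindependence} the positional interim allocation depends only on the position of the positively-valued item, and therefore $\qpos_{i,k+1}=q_{i,b}(\prefer_i,m_i)=q_{i,a}(\prefer_i',m_i)$ (under $\prefer_i'$, the item $a$ now occupies position $k{+}1$). Thus it suffices to prove
\[
q_{i,a}(\prefer_i,m_i)\;\geq\;q_{i,a}(\prefer_i',m_i),
\]
i.e.\ that promoting item $a$ from position $k{+}1$ to position $k$ in agent $i$'s reported preference does not decrease her probability of receiving $a$ under $\rrpass$. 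Since both sides integrate against the same prior on $\prefer_{-i}$, it is enough to verify the pointwise inequality
\[
\mathbb{I}[i \text{ gets } a \text{ in } (\prefer_i',\prefer_{-i})]\;\leq\;\mathbb{I}[i \text{ gets } a \text{ in } (\prefer_i,\prefer_{-i})]\quad\text{for every }\prefer_{-i}.
\]

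I would establish this pointwise inequality by simulating the two $\rrpass$ executions $E$ and $E'$ under $(\prefer_i,\prefer_{-i})$ and $(\prefer_i',\prefer_{-i})$ in parallel. The two instances agree in every report except that $\prefer_i$ and $\prefer_i'$ differ only at positions $k$ and $k{+}1$. The key observation is that, so long as the pools of available items in $E$ and $E'$ coincide, every agent $j\neq i$ picks the same item in both (her preference and the pool are identical), and agent $i$ herself picks the same item in every case \emph{except one}: if her top-ranked available item under $\prefer_i$ sits at a position strictly less than $k$, the same item tops $\prefer_i'$ (positions $1,\dots,k-1$ are identical in the two preferences); if neither $a$ nor $b$ is in the pool, her top lies at a position strictly greater than $k{+}1$ and is again the same item in both preferences; and if exactly one of $a,b$ is available, she ends up selecting that single item in both cases (in $\prefer_i'$ the absent one is simply skipped and the remaining one at position $k$ or $k{+}1$ is taken). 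The only exceptional scenario is when \emph{both} $a$ and $b$ are in the pool at her turn: under $\prefer_i$ she picks $a$, while under $\prefer_i'$ she picks $b$.

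A straightforward induction on sub-steps of the round-robin procedure then shows that $E$ and $E'$ stay in lockstep up to the first of agent $i$'s turns (if any) at which the exceptional scenario occurs; at that turn she picks $a$ in $E$, so she obtains $a$ in $E$. Contrapositively, if agent $i$ does not obtain $a$ in $E$, the exceptional scenario never arises, so $E$ and $E'$ remain identical throughout and her allocation in $E'$ coincides with her allocation in $E$; in particular she does not obtain $a$ in $E'$ either. Taking expectation over $\prefer_{-i}$ drawn from the neutral prior then completes the proof. The step that needs the most care is the exhaustive sub-case analysis above, verifying that only the ``both $a,b$ available'' scenario can yield differing picks by agent $i$; once that is in hand, the remaining induction is routine.
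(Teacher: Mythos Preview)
Your proposal is correct and follows essentially the same route as the paper: both reduce $\qpos_{i,k}\ge\qpos_{i,k+1}$ (via Lemma~\ref{lemma:itemindependence}) to the pointwise ``elementary monotonicity'' inequality $x_{i,a}((\prefer_i,m),(\prefer_{-i},m))\ge x_{i,a}((\prefer'_i,m),(\prefer_{-i},m))$ for the adjacent-swap preference $\prefer'_i$, and both prove it by the contrapositive ``if $i$ fails to get $a$ under $\prefer_i$ then nothing differs under $\prefer'_i$'' lockstep argument. Your case analysis is a more detailed version of the paper's one-paragraph justification; the only cosmetic difference is that the paper fixes $m_i=m$ from the start (so both $a$ and $b$ are positively valued), which you may want to state explicitly.
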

\begin{proof}
 We will show that $\qpos_{i,k} \geq \qpos_{i,k+1}$ for all $k < m$. Towards this, consider a bid $\bids_i$ such that $(i)$ all items are positively valued, $|\desired_i| = m$, and $(ii)$ the corresponding preference order $\prefer_i$ is such that $\prefer_i(k) = a$ and $\prefer_i(k+1) = b$ for some items $a,b \in \items$. Additionally, let $\prefer'_i$ be a preference order constructed by swapping $a$ and $b$ in $\prefer_i$, i.e., $\prefer'_i(k) = b$ and $\prefer'_i(k+1) = a$, and $\prefer'_i(\ell) = \prefer_i(\ell)$ for all $\ell \neq \{k,k+1\}$.
 
 As the key part of the proof, we will first show that inequality 
 \begin{align}\label{ineq:e_mono}
 x_{i,a}((\prefer_i,m),(\prefer_{-i},m)) \geq x_{i,a}((\prefer'_i,m), (\prefer_{-i},m)),
 \end{align}
 where $(\prefer_{-i},m) = \{(\prefer_j,m)\}_{j \neq i}$ holds for all preference orders of other agents $\prefer_{-i}$.\footnote{We note that this property has been referred to as elementary monotonicity in \cite{dasgupta2022ordinal}.} Note that, both the quantities are binary, $x_{i,a}((\prefer'_i,m), (\prefer_{-i},m)), x_{i,a}((\prefer_i,m),(\prefer_{-i},m)) \in \{0,1\}$. Hence, the concerned inequality is trivially satisfied if $x_{i,a}((\prefer_i,m),(\prefer_{-i},m)) = 1$. Otherwise if $x_{i,a}((\prefer_i,m),(\prefer_{-i},m)) = 0$, we can show that necessarily $x_{i,a}((\prefer'_i,m), (\prefer_{-i},m)) = 0$ as well. This follows from the following observations. Since agent $i$ does not receive item $a$ when reporting preference $\prefer_i$, $a$ must have already been allocated to some other agent before agent $i$ could have picked it. Now if she reports $\prefer'_i$, wherein item $a$ has an even lower rank ($\prefer'_i(k+1) = a$), then agent $i$ still won't receive item $a$ because similar to the previous case, $a$ would have already been allocated to the same agent given that $\prefer_i$ and $\prefer'_i$ only differ in ranks $k$ and $k+1$, and preferences of other agents $\prefer_{-i}$ remain unchanged. Thus, if $x_{i,a}((\prefer_i,m),(\prefer_{-i},m)) = 0$ then we also have $x_{i,a}((\prefer'_i,m), (\prefer_{-i},m)) = 0$, establishing Inequality (\ref{ineq:e_mono}). 

 Now using this, we get the desired inequality.

 \begin{align*}
     \qpos_{i,k} & = q_{i,a}(\prefer_i,m) \tag{Using Lemma \ref{lemma:itemindependence} and $\prefer_i(k) = a$}\\
     &= \frac{1}{\left(m!\right)^{n-1}}\sum_{\prefer_{-i}} x_{i,a}((\prefer_i,m),(\prefer_{-i},m)) \tag{via (\ref{eq:interimasorder})}\\
     &\geq \frac{1}{(m!)^{n-1}}\sum_{\prefer_{-i}} x_{i,a}((\prefer'_i,m), (\prefer_{-i},m)) \tag{via (\ref{ineq:e_mono})}\\
     & = q_{i,a}(\prefer'_i,m) \tag{via (\ref{eq:interimasorder})}\\
     & = \qpos_{i,k+1} \tag{Using Lemma \ref{lemma:itemindependence} and $\prefer'_i(k+1) = a$}
 \end{align*}

\end{proof}

Using the previous two lemmas we now show that $\rrpass$ is BIC for neutral priors. The high-level intuition is as follows. If agent $i$ reports valuation $\bids_i$ having a corresponding preference order $\prefer_i = (j_1, j_2, \ldots, j_m)$, then Lemma~\ref{lemma:itemindependence} implies that the interim allocations $\qint_{i,j_1}(\bids_i), \qint_{i,j_2}(\bids_i), \ldots, \qint_{i,j_n}(\bids_i)$ will be (a prefix of) the positional interim allocation $\mathbf{q_i^{\texttt{pos}}}$, which, as per Lemma~\ref{lemma:interimmonotonicity} is monotone, $\qpos_{i,1} \geq \ldots \geq \qpos_{i,m}$. That is, the interim allocations are non-increasing in the reported preference order and are always a permutation of $\mathbf{q_i^{\texttt{pos}}}$. Note that, the expected utility of agent $i$ as per her true preference $\val_i$ is $\sum_{k=1}^m \qint_{i,j_k}(\bids_i) \cdot v_{i, j_k}$, which essentially is a vector dot product of the true valuation and the interim allocation. This dot product is maximized when the true valuations and the interim allocations are \emph{aligned} (i.e., are in the same order). Given that the interim allocation is always a permutation of $\mathbf{q_i^{\texttt{pos}}}$, this alignment happens when agent $i$'s reported preference order matches her true preference order, i.e., when agent $i$ reports $\val_i$, establishing that $\rrpass$ is BIC. This intuition is formalized in the following lemma.

\begin{restatable}{lemma}{RRisBIC}\label{lemma:RR is BIC}
The \textsc{Round-Robin}$^{\text{pass}}$ mechanism is Bayesian incentive compatible for neutral priors.
\end{restatable}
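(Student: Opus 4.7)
The plan is to fix an agent $i \in \agents$ and show that reporting truthfully $\bids_i = \val_i$ maximizes $i$'s expected utility taken over $\val_{-i} \sim \mathcal{D}_{-i}$. Let $\prefer_i$ denote the true preference order induced by $\val_i$ and let $m^* = |\desired_i|$ be the number of items $i$ values strictly positively. For any alternative report $\bids_i$ with induced preference order $\prefer'_i$ and $m' = |\{j : b_{i,j} > 0\}|$, I would first rewrite the expected utility via linearity as
\begin{equation*}
    \mathop{\E}_{\val_{-i} \sim \mathcal{D}_{-i}}[u_i(\allo(\bids_i, \val_{-i}))] \;=\; \sum_{j \in \items} v_{i,j} \, \qint_{i,j}(\bids_i),
\end{equation*}
and then apply Lemma~\ref{lemma:itemindependence} to drop the zero contributions from items with $b_{i,j}=0$ and replace each remaining interim probability by the appropriate positional weight. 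This yields the compact form
\begin{equation*}
    F(\prefer'_i, m') \;:=\; \sum_{k=1}^{m'} \qpos_{i,k} \cdot v_{i, \prefer'_i(k)},
\end{equation*}
reducing the question to showing that $F$ is maximized at $(\prefer_i, m^*)$.

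The maximization would be carried out in two stages, leveraging the monotonicity $\qpos_{i,1} \geq \qpos_{i,2} \geq \cdots \geq \qpos_{i,m} \geq 0$ from Lemma~\ref{lemma:interimmonotonicity}. First, fix $m'$ and the subset $S = \{\prefer'_i(1), \ldots, \prefer'_i(m')\}$ placed in the leading positions. Since the weights $(\qpos_{i,1}, \ldots, \qpos_{i,m'})$ are non-increasing and non-negative, the rearrangement inequality (equivalently, a direct exchange argument eliminating inversions) forces the utility-maximizing ordering of $S$ to be by non-increasing true value $v_{i, \cdot}$. Second, varying $S$ over size-$m'$ subsets, the same monotonicity of the weights favors sets whose true values are as large as possible, so the optimal choice is $S = \{\prefer_i(1), \ldots, \prefer_i(m')\}$, the top $m'$ items under $\prefer_i$. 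Combining the two stages gives
\begin{equation*}
    \max_{\prefer'_i} F(\prefer'_i, m') \;=\; F^*(m') \;:=\; \sum_{k=1}^{m'} \qpos_{i,k} \cdot v_{i, \prefer_i(k)}.
\end{equation*}

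Finally, I would maximize $F^*$ over $m' \in \{0, 1, \ldots, m\}$ by examining the discrete increment $F^*(m') - F^*(m'-1) = \qpos_{i,m'} \cdot v_{i, \prefer_i(m')}$. This is always non-negative, and is exactly $0$ whenever $m' > m^*$, since then $v_{i, \prefer_i(m')} = 0$ by definition of $m^*$. Hence $F^*$ stops strictly growing at $m' = m^*$, so $(\prefer_i, m^*)$ is an optimum of $F$; since this is precisely the pair produced by the truthful report $\bids_i = \val_i$, BIC follows. I expect the main subtlety to be the careful bookkeeping around zero-valued items: the special treatment of such items in $\rrpass$, which forces their interim allocation to be exactly zero (as formalized in Lemma~\ref{lemma:itemindependence}), is what allows the two-stage rearrangement argument to go through and rules out any benefit from over-reporting positive values.
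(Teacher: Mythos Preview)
Your proposal is correct and follows essentially the same approach as the paper: both arguments reduce to Lemmas~\ref{lemma:itemindependence} and~\ref{lemma:interimmonotonicity} and then use a rearrangement-type argument to show that aligning the reported preference order with the true one is optimal. The only cosmetic difference is that the paper first normalizes to $m'=m$ and performs an explicit swap (bubble-sort) argument, whereas you invoke the rearrangement inequality directly and optimize over $m'$ separately; the underlying idea is identical.
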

\begin{proof}
Consider any agent $i$ with a valuation $\val_i$ and some report $\bids_i$; let $(\prefer^v_i,m_i^v)$ and $(\prefer^b_i,m_i^b)$ be the tuples corresponding to $\val_i$ and $\bids_i$ respectively. Without loss of generality, we assume that $m_i^b = m$ because reporting $(\prefer^b_i,m)$ instead of $(\prefer^b_i,m_i^b)$ can only only increase the expected utility of agent $i$: since $m \geq m_i^b$~\Cref{lemma:itemindependence} implies that, $\qint_{i,j}(\prefer^b_i,m) \geq \qint_{i,j}(\prefer^b_i,m_i^b)$ for all items $j \in \items$. Thus, to show that $\rrpass$ is BIC, we will show that the expected utility of agent $i$ as per $\val_i$ when reporting $(\prefer^v_i,m_i^v)$ (i.e., $\val_i$) is at least as much as when reporting $(\prefer^b_i,m)$ (i.e., $\bids_i$).

\emph{\bf Case $\mathbf{(1)}$}: $\prefer^v_i = \prefer^b_i$. Using \Cref{lemma:itemindependence}, the expected utility of agent $i$ when reporting $(\prefer^b_i,m)$ is,
\begin{align*}
&\mathop{\E}\limits_{\val_{-i} \sim \mathcal{D}_{-i}} [ u_i((\prefer^b_i,m), \val_{-i}) ] \\
= & \mathop{\E}\limits_{\val_{-i} \sim \mathcal{D}_{-i}} [ u_i((\prefer^v_i,m), \val_{-i}) ] \tag{$\prefer^v_i = \prefer^b_i$} \\
= & \sum_{\ell=1}^m \qpos_{i,\ell} \cdot v_{i,\prefer^v_i(\ell)} \tag{\Cref{lemma:itemindependence}} \\
= & \sum_{\ell=1}^{m_i^v} \qpos_{i,\ell} \cdot v_{i,\prefer^v_i(\ell)} + \sum_{\ell=m_i^v + 1}^m \qpos_{i,\ell} \cdot v_{i,\prefer^v_i(\ell)} \\
= & \sum_{\ell=1}^{m_i^v} \qpos_{i,\ell} \cdot v_{i,\prefer^v_i(\ell)} + \sum_{\ell=m_i^v + 1}^m \qpos_{i,\ell} \cdot 0 \\
= & \mathop{\E}\limits_{\val_{-i} \sim \mathcal{D}_{-i}} [ u_i((\prefer^v_i,m_i^v), \val_{-i}) ] \tag{\Cref{lemma:itemindependence}},
\end{align*}
Hence, in this case, the expected utility of reporting $(\prefer^v_i,m_i^v)$ is at least as much as when reporting $(\prefer^b_i,m)$, which in turn is at least the expected utility when reporting $(\prefer^b_i,m_i^b)$. 

\noindent
\emph{\bf Case $\mathbf{(2)}$}: $\prefer^v_i \neq \prefer^b_i$. Let $k$ be the lowest position where $\prefer^v_i$ and $\prefer^b_i$ differ such that $\prefer^b_i(k) = x$ and $\prefer^v_i(k) = y$. Additionally, let $k'$ be such that $\prefer^b_i(k') = y$; note that $k'> k$ by the definition of $k$.



We will first show that agent $i$ can weakly increase her expected utility by reporting the tuple $(\prefer'_i,m)$, instead of $(\prefer^b_i,m)$, where ordering $\prefer'_i$ is constructed by swapping the position of $x$ and $y$ in $\prefer^b_i$. We will then apply this swapping argument repeatedly to conclude the proof. Towards this, note that $v_{i,y} \geq v_{i,x}$ since the preference order $\prefer_i$ is based on the true valuations, and additionally by~\Cref{lemma:interimmonotonicity}, we have $\qpos_{i,k} \geq \qpos_{i,k'}$. Together, these two inequalities imply that $\qpos_{i,k} \cdot v_{i,y} + \qpos_{i,k'} \cdot v_{i,x} \geq \qpos_{i,k} \cdot v_{i,x} + \qpos_{i,k'} \cdot v_{i,y}$. Given this, the expected utility of reporting $(\prefer'_i,m)$ can be written as:
\begin{align*}
    & \mathop{\E}\limits_{\val_{-i} \sim \mathcal{D}_{-i}} [ u_i((\prefer'_i,m), \val_{-i}) ]\\
    = & \
    \qpos_{i,k} \cdot v_{i,y} + \qpos_{i,k'} \cdot v_{i,x} + \sum_{\ell \notin \{k,k'\}} \qpos_{i,\ell} \cdot v_{i,\prefer^b_i(\ell)} \tag{\Cref{lemma:itemindependence}}\\
    \geq & \ \qpos_{i,k} \cdot v_{i,x} + \qpos_{i,k'} \cdot v_{i,y} + \sum_{\ell \notin \{k,k'\}} \qpos_{i,\ell} \cdot v_{i,\prefer^b_i(\ell)}\\
  =& \sum_{\ell = 1}^m \qpos_{i,\ell} \cdot v_{i,\prefer^b_i(\ell)} \tag{$\prefer^b_i(k) = x$ and $\prefer^b_i(k') = y$}\\
  =& \mathop{\E}\limits_{\val_{-i} \sim \mathcal{D}_{-i}} [ u_i((\prefer^b_i,m), \val_{-i}) ] \tag{\Cref{lemma:itemindependence}}.
\end{align*}

Thus, this ``swapping'' step used to obtain $\prefer'_i$ from $\prefer^b_i$ weakly increases the expected utility. Additionally, by definition, the length of the longest common prefix of $\prefer'_i$ and $\prefer_i^v$ is strictly greater than that of $\prefer^b_i$ and $\prefer_i^v$, i.e., $\prefer'_i$ is closer to $\prefer_i^v$ than $\prefer^b_i$. Therefore, by repeatedly performing these swaps till $\prefer'_i$ becomes $\prefer_i^v$, we get that the expected utility of reporting $(\prefer_i^v, m)$ is at least as when reporting $(\prefer^b_i,m)$. Following the arguments of Case $(1)$, we also know that the expected utility of reporting $(\prefer_i^v, m^v_i)$ is at least as when reporting $(\prefer^v_i,m)$. Hence, in this case, we also get that the expected utility of reporting $(\prefer^v_i,m_i^v)$ is at least as much as when reporting $(\prefer^b_i,m)$. That is, reporting $\val_i$ is always better than reporting any other $\bids_i$. This concludes the proof.
\end{proof}

The following lemma proves that $\rrpass$ also satisfies SD$^+$efficiency and $\EFone$, its proof can be found in~\Cref{appendix:rr missing proofs}.
\begin{restatable}{lemma}{RRsdpoplus}\label{lemma: rrsdpoplus}
    The \textsc{Round-Robin}$^{\text{pass}}$ mechanism outputs SD$^+$efficient and $\EFone$ allocations.
\end{restatable}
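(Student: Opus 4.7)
My plan is to establish the two guarantees separately, both by exploiting the simple structure of $\rrpass$: it is the classical Round-Robin augmented with the rule that an agent passes her turn whenever every remaining item is zero-valued for her. A useful preliminary observation, which I will use throughout, is that passing is \emph{monotone}: once agent $i$ passes in some round, the remaining item pool only shrinks, so every subsequently available item still has zero value to $i$, and $i$ never picks again.

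For $\EFone$, I adapt the standard pairing argument for Round-Robin. Fix two agents $a$ and $b$, and let $\alpha_k$ and $\beta_k$ denote the $k$-th items picked by $a$ and $b$ respectively during the main loop, with $r_a$ and $r_b$ their respective totals. When $a$ picks before $b$ in each round, I pair $\beta_k$ with $\alpha_k$: at $a$'s turn in round $k$, $\alpha_k$ is the $\prefer_a$-maximum of the remaining pool, which is a superset of what $b$ later sees, so $v_a(\alpha_k) \ge v_a(\beta_k)$ for $k \le r_a$. For $k > r_a$, the monotonicity observation forces $v_a(\beta_k) = 0$. Summing yields $v_a(X_a) \ge v_a(X_b)$. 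When $b$ precedes $a$, I instead pair $\beta_{k+1}$ with $\alpha_k$; the same superset argument gives $v_a(\alpha_k) \ge v_a(\beta_{k+1})$ for $k \le r_a$, while $v_a(\beta_k) = 0$ for $k \ge r_a + 2$, so dropping $\beta_1$ from $X_b$ gives $v_a(X_a) \ge v_a(X_b \setminus \{\beta_1\})$.

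For SD$^+$efficiency, I argue by contradiction. Suppose some $\alloy$ satisfies $\alloy \sdpgt \allo$; in particular $\alloy_i \sdpgeq_i \allo_i$ for every $i$. Let $(i_1, j_1), \ldots, (i_T, j_T)$ be the sequence of picks made in the main loop of $\rrpass$. I will show by induction on $t$ that $\alloy$ agrees with $\allo$ on $\{j_1, \ldots, j_t\}$. The base case holds because $j_1$ is $i_1$'s top-ranked item, so the SD$^+$ inequality at position $1$ forces $y_{i_1, j_1} = 1$. For the inductive step, let $\ell$ be the position of $j_t$ in $\prefer_{i_t}$; since $j_t$ is $i_t$'s top remaining item at step $t$, every item at a higher position of $\prefer_{i_t}$ lies in $\{j_1, \ldots, j_{t-1}\}$ and, by the inductive hypothesis, is allocated identically under $\alloy$. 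Comparing the prefix sums of $\allo_{i_t}$ and $\alloy_{i_t}$ at position $\ell$ then forces $y_{i_t, j_t} = 1$. Any item not picked in the main loop has zero value for every agent and thus lies outside $\bigcup_i \desired_i$, so $\alloy$ and $\allo$ agree on every item that can affect any $\sdpgeq_i$ comparison, contradicting $\alloy \sdpgt \allo$.

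The main technical obstacle is the interplay between the passing rule and the fact that an item may be valuable to one agent but not another. Standard Round-Robin proofs assume every agent picks in every round, but under $\rrpass$ different agents may stop at different times, so pairings can have unequal length and the usual ``everyone completes the same rounds'' symmetry breaks. The monotonicity of passing is what reconciles this: each comparison splits cleanly into a ``both-active'' phase where the classical superset reasoning applies and an ``inactive'' phase where the contribution is automatically zero because the relevant items carry zero value for the agent under consideration.
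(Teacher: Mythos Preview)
Your proof is correct and follows essentially the same approach as the paper. For SD$^+$efficiency, both arguments run the same induction over the sequence of picks (the paper phrases it as ``the first timestep at which the two allocations differ''), and for $\EFone$ both use the standard Round-Robin pairing, with your monotonicity-of-passing observation making explicit the handling of skipped turns that the paper's two-stage sketch leaves implicit.
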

Combined, Lemmas~\ref{lemma:RR is BIC} and~\ref{lemma: rrsdpoplus} imply~\Cref{thm: rr main}.

\subsection{Welfare functions and BIC}\label{subsec:pigou dalton negative}

In this section, we prove that a large family of ``welfare maximization'' mechanisms is not Bayesian incentive compatible.

A \emph{welfare function} $f : \mathbb{R}^n_{\geq 0} \to \mathbb{R}$ is a function that, given the utilities of agents $u_1(\allo), \ldots, u_n(\allo)$ in allocation $\allo$, outputs $f(u_1(\allo), \ldots, u_n(\allo))$, that represents the collective welfare of all the agents in allocation $\allo$. An allocation $\allo$ is said to maximize $f$ if for all other feasible allocations $\alloy$ we have $f(u_1(\allo), \ldots, u_n(\allo)) \geq f(u_1(\alloy), \ldots, u_n(\alloy))$.

\begin{definition}[Anonymous]
    A welfare function $f$ is called \emph{anonymous} iff $f(x_1, x_2, \ldots, x_n) = f(x_{\pi_1}, x_{\pi_2}, \ldots, x_{\pi_n})$ for any permutation $\pi : [n] \to [n]$.
\end{definition} 

\begin{definition}[Strictly monotone]
    A welfare function $f$ is \emph{strictly monotone} iff $f(x_1, x_2, \ldots, x_n) > f(y_1, y_2, \ldots, y_n)$ whenever we have $x_i \geq y_i$ for all $i \in [n]$ and $x_j > y_j$ for some $j \in [n]$.
\end{definition}

\begin{definition}[The Pigou-Dalton principle]
    The Pigou-Dalton Principle (PDP) states that a welfare function should weakly prefer a profile that reduces the inequality between two agents, assuming that the utilities of all other agents stay unchanged. Formally, a welfare function $f$ satisfies PDP iff for any two vectors $(x_1, \ldots, x_n)$ and $(y_1, \ldots, y_n)$ such that there are two distinct $i, j \in [n]$ satisfying $(i)$ $x_i + x_j = y_i + y_j$, $(ii)$ $x_i > x_j$,  $(iii)$ $x_i > y_i > x_j$ (equivalently $x_i > y_j > x_j$), and $(iv)$ $x_k = y_k$ for all $k \in [n] \setminus \{i,j\}$, we have $f(x_1, x_2, \ldots, x_n) \leq f(y_1, y_2, \ldots, y_n)$.
\end{definition}

Welfare functions satisfying anonymity, strict monotonicity, and the PDP constitute a broad set of functions that include, e.g., the class of all the $p$-mean welfare functions\footnote{For a given $p \in \mathbb{R}_{\geq 0}$, the $p$-mean welfare function 
is defined as $f_p(x_1, x_, \ldots, x_n) \coloneqq  \big( \frac{1}{n} \sum_{i=1}^n x_i^p \big)^{1/p}$. Notably, $f_p$ captures utilitarian welfare if $p=1$, Nash welfare if $p \to 0$, and egalitarian welfare if $p \to -\infty$.} for $p \leq 1$~\cite{moulin2004fair}.  The class of $p$-mean welfare functions, in turn, includes Nash social welfare, egalitarian social welfare, utilitarian social welfare, and the leximin criterion. In Theorem \ref{theorem:BIC-welfare-max}, we show that, for any deterministic (or randomized) mechanism that outputs allocations that (ex-post) maximize such a welfare function, $f$, is not BIC. We prove this fact for a very simple neutral prior distribution according to which each agent $i$'s values are drawn uniformly from the $m-1$-simplex, i.e., $\mathbf{v}_i = (v_1, v_2, \ldots, v_m) \sim \Delta^{m-1}$. Notably, Theorem \ref{thm: rr main} also holds for this neutral prior supported over normalized valuations.\footnote{It is necessary to consider normalized valuations when studying welfare objectives that aren't scale-free, for e.g. social welfare. In particular, if the valuations aren't normalized, then agents could (mis)report large values for all the items, thereby forcing the mechanism to allocate all the items to them.}



\begin{restatable}{theorem}{BICWelfare}\label{theorem:BIC-welfare-max}
    There is no deterministic (or randomized) mechanism that is Bayesian incentive compatible for the uniform prior and which always outputs allocations that ex-post maximize welfare function $f$, where $f$ is any anonymous, strictly monotone function satisfying the Pigou-Dalton principle.
\end{restatable}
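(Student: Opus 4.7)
}
I would reduce to the smallest nontrivial instance, $n=m=2$, so that the uniform prior puts $u_1\sim U[0,1]$ with $u_2=1-u_1$. Writing $b=(b_1,1-b_1)$ and $u=(u_1,1-u_1)$ for the reports, the four feasible allocations yield reported-utility vectors $A_1=(b_1,1-u_1)$, $A_2=(1-b_1,u_1)$, $A_3=(1,0)$, and $A_4=(0,1)$. The aim is to exhibit a single misreport for agent 1 that strictly beats truthful reporting in expectation, against every admissible $f$ simultaneously.

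The key technical step is to show that, for every anonymous, strictly monotone $f$ satisfying PDP and every $(b_1,u_1)\in(0,1)^2$ with $b_1\ne u_1$, the welfare maximizer is $A_1$ when $b_1>u_1$ and $A_2$ when $b_1<u_1$. Distinguishing $A_1$ from $A_2$ is easy: when $b_1>u_1$ the sorted vector of $A_1$ componentwise dominates the sorted vector of $A_2$, so anonymity plus strict monotonicity give $f(A_1)>f(A_2)$. Eliminating $A_3$ (and, by anonymity, $A_4$ since $f(A_4)=f(1,0)=f(A_3)$) is the main obstacle, because in general $A_1$ and $A_3$ have different total sums and are not Pareto-comparable. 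I would bypass this by invoking Hardy--Littlewood--P\'olya: anonymity plus PDP imply that $f$ is Schur-concave, so since $(1,0)$ majorizes $(b_1,1-b_1)$ we get $f(b_1,1-b_1)\ge f(1,0)=f(A_3)$; then strictly increasing the second coordinate from $1-b_1$ to $1-u_1$ (valid since $b_1>u_1$) yields $f(A_1)>f(b_1,1-b_1)\ge f(A_3)$ by strict monotonicity. This detour through the equal-sum intermediate $(b_1,1-b_1)$ is the crux of the argument.

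With the characterization in hand, the BIC violation becomes a direct calculation. Fix any true $v_1\in(1/2,1)$ for agent 1. Truthful reporting delivers item~$1$ exactly when $u_1<v_1$, an event of probability $v_1$ under the uniform prior, so the expected true utility is $v_1^2+(1-v_1)^2$. In contrast, the misreport $b=(1,0)$ gives $b_1=1>u_1$ almost surely, so agent 1 receives item~$1$ with probability one and obtains expected utility $v_1$. Since $v_1-\bigl(v_1^2+(1-v_1)^2\bigr)=(2v_1-1)(1-v_1)>0$ on $(1/2,1)$, truthful reporting is strictly suboptimal on a positive-measure set of types, contradicting BIC. For randomized mechanisms there is nothing extra to prove: the ex-post welfare maximizer is unique whenever $b_1\ne u_1$, so any randomized mechanism satisfying the ex-post welfare requirement must coincide with the deterministic argmax almost surely, and the above expectations are unchanged.
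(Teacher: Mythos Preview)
Your proposal is correct and follows essentially the same route as the paper's proof. Both arguments reduce to $n=m=2$, characterize the unique welfare maximizer as $\{\{1\},\{2\}\}$ or $\{\{2\},\{1\}\}$ depending on whether agent~1's reported value $b_1$ exceeds agent~2's value $u_1$, and then exhibit a profitable deviation for any true type $v_1\in(1/2,1)$; the paper uses a generic misreport $(v_1+\epsilon,1-v_1-\epsilon)$ and derives the closed form $v_1 b+(1-v_1)(1-b)$ for the expected utility, whereas you jump straight to the extreme $b=(1,0)$, but the calculations coincide. The crux---ruling out the degenerate allocations $(1,0)$ and $(0,1)$ via the equal-sum intermediate $(b_1,1-b_1)$---is identical: the paper applies PDP directly to the single transfer $(1,0)\to(b_1,1-b_1)$ and then strict monotonicity, while you phrase the same step as Schur-concavity via Hardy--Littlewood--P\'olya, which for two coordinates is exactly one Pigou--Dalton transfer.
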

\begin{proof}
    We will show that this negative result holds even for the simple setting of $n=2$ agents and $m=2$ items. Let $\allo$ be any deterministic or randomized BIC mechanism that (ex-post) outputs allocations that maximize $f$. To arrive at a contradiction, we will show that show that if agent $1$ has a true normalized preference $\val_1 = (x,1-x)$ such that $x > 1/2$, then it is better for the agent to report $\bids_1 = (x+ \epsilon, 1 - x - \epsilon)$ for any $\epsilon \in (0, 1-x]$. This would show that $\allo$ is not BIC since truthful reporting is not the best strategy for agent $1$.
    More specifically, we will show that if  agent $1$ having true preference $\val_1 = (x, 1-x)$ reports $\bids_1 = (b, 1-b)$,where $b>x$, then her expected utility $\E_{\val_{2} \sim \Delta_2}[u_1(\allo(\bids_1, \val_2)] = xb + (1-x)(1-b)$ for any mechanism $\allo$. Note that this will imply that $\E_{\val_{2} \sim \Delta_2}[u_1(\allo(\bids_1, \val_2)] > \E_{\val_{2} \sim \Delta_2}[u_1(\allo(\val_1, \val_2)]$ since

    \begin{align*}
    & \mathop{\E}_{\val_{2} \sim \Delta_2}[u_1(\allo(\bids_1, \val_2)] - \mathop{\E}_{\val_{2} \sim \Delta_2}[u_1(\allo(\val_1, \val_2)]\\
    & = \big( x(x+\epsilon) + (1-x)(1-x - \epsilon) \big) - \big( x^2 + (1-x)^2 \big)\\
    & = \epsilon x - \epsilon (1-x) = \epsilon (2x-1) > 0,  \numberthis \label{ineq:bic-welfare}
    \end{align*}
    
     the final inequality follows from the fact that $x>1/2$ and $\epsilon > 0$. This will imply that $\allo$ is not BIC, giving us the desired contradiction. To conclude the proof, we will show that $\E_{\val_{2} \sim \Delta_2}[u_1(\allo(\bids_1, \val_2)] = xb + (1-x)(1-b)$. Consider $\val_2 = (y, 1-y) \sim \Delta_2$. First, will show that if $y > b$, then the unique allocation that maximizes $f$ for the report $(\bids_1, \val_1)$ is $\{\{2\}, \{1\} \}$. Otherwise, if $y < b$, then the unique allocation that maximizes $f$ is $ \{\{1\}, \{2\}\}$.
     Suppose $y > b$. Since there are two items and two agents, there are four possible allocations having the utility profiles $P = \{ (1, 0), (0, 1), (b, 1-y), (1-b, y)\}$. Next we will show that $f(1-b,y) > f(p,q)$, for any $(p,q) \in P \setminus \{(1-b,y)\}$, implying the allocation $\allo = \{\{2\}, \{1\} \}$ is the unique allocation that maximizes $f$. Towards this, note that $f(1-b,y) = f(y, 1-b) < f(b, 1-y)$; the first equality follows from the fact that $f$ is symmetric and second from the fact that $f$ is strictly monotone and $b>y$ (thereby $1-y > 1-b$). Additionally, $f(1,0) = f(0,1) < f(b, 1-b) < f(b, 1-y)$; here the first, second, and the third step follow respectively from the fact that $f$ is symmetric, satisfies PDP, and strictly monotone along with the fact that $1-y > 1-b$. 
     
     Thus, if $y>b$ we have $f(1-b,y) > f(p,q)$, for any $(p,q) \in P \setminus \{(1-b,y)\}$, which implies that the allocation $\allo = \{\{2\}, \{1\} \}$ is the unique allocation that maximizes $f$. Note that using the same argument, we can get that if $y<b$, then the unique allocation that maximizes $f$ is $\allo = \{\{1\}, \{2\} \}$.
     
     Finally, we will use this to prove that $\E_{\val_{2} \sim \Delta_2}[u_1(\allo(\bids_1, \val_2))] = xb + (1-x)(1-b)$. Note that if $\val_2 = (y, 1-y) \sim \Delta_2$ then the distribution of $y$ is uniform, i.e., $y \in \mathcal{U}[0,1]$. Therefore, $y < b$ with probability $b$ and $y > b$ with probability $1-b$. Combining this with the aforementioned observation we have that with probability $b$ the unique $f$ maximizing allocation is $\allo = \{\{1\}, \{2\} \}$ wherein agent $1$'s true utility is $x$, and with probability $1-b$ the unique $f$ maximizing allocation is $\allo = \{\{2\}, \{1\} \}$ wherein agent $1$'s true utility is $1-x$. Since these allocations are unique, any randomized or deterministic mechanism $\allo$ must (ex-post) output them. Therefore, $\E_{\val_{2} \sim \Delta_2}[u_1(\allo(\bids_1, \val_2)] = xb + (1-x)(1-b)$, this concludes the proof.
\end{proof}



So far we have been focusing on deterministic allocations, i.e., $x_{i,j} \in \{0,1\}$ for all $i$, $j$. A \emph{randomized} allocation is a distribution over a set of deterministic allocations, where $x_{i,j} \in [0,1]$ denotes the probability item $j$ is assigned to agent $i$ received under the allocation. We say that an allocation is \emph{ex-post fPO} if no randomized allocation Pareto dominates it. A slightly modified version of the above proof can be used to show the non-existence of deterministic (or randomized) mechanisms that are Bayesian incentive compatible and output allocations that are ex-post $\fPO$ and EF1. In fact, we can rule out BIC mechanisms that output $\fPO$ allocations that satisfy a fairness criterion we call \emph{fulfillment}, that is much weaker than $\EFone$. Specifically, an allocation $\allo$ is fulfilling iff the following condition is satisfied for all agents $i \in \agents$: if $i$ has at least $n$ items that she values positively, then agent $i$ gets a bundle of positive utility. It is easy to see that an $\EFone$ allocation must be fulfilling. Formally, this result is stated as Theorem \ref{thm: fulfilling}.


\begin{restatable}{theorem}{BICfulfilling}\label{thm: fulfilling}
    There is no deterministic (or randomized) Bayesian incentive compatible mechanism that outputs $\fPO$ allocations that are fulfilling.
\end{restatable}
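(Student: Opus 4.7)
The plan is to closely parallel the proof of Theorem~\ref{theorem:BIC-welfare-max}, replacing its ``welfare-maximizing'' uniqueness argument with one driven by $\fPO$ together with fulfillment. I would again work in the simple setting $n=2$, $m=2$, with each agent's valuation drawn uniformly from $\Delta^{1}$, so that $\val_i = (v_i, 1-v_i)$ with $v_i \sim \mathcal{U}[0,1]$ and both items have positive value for both agents with probability $1$.

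The first step is to show that whenever the reports are $\val_1 = (x, 1-x)$ and $\val_2 = (y, 1-y)$ with $x \neq y$, there is a \emph{unique} deterministic allocation that is both fulfilling and $\fPO$: namely $\{\{1\},\{2\}\}$ when $x > y$, and $\{\{2\},\{1\}\}$ when $x < y$. Since every agent has $n=2$ positively valued items (both of them), fulfillment forces each agent to receive a non-empty bundle, and with $m=2$ this restricts us to the two ``matching'' allocations above. A short calculation---checking whether any randomized allocation parametrized by $(\alpha, \beta) \in [0,1]^2$ strictly Pareto improves---then shows that $\{\{1\},\{2\}\}$ is $\fPO$ iff $x \geq y$ and $\{\{2\},\{1\}\}$ is $\fPO$ iff $x \leq y$. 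Consequently, any deterministic or randomized mechanism whose output is ex-post $\fPO$ and fulfilling must place probability $1$ on these unique allocations, so its behavior is fully determined (up to the measure-zero event $x = y$) by the sign of $x - y$.

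Given this structural pinning down, the misreporting argument of Theorem~\ref{theorem:BIC-welfare-max} transfers essentially verbatim. Fix a true valuation $\val_1 = (x, 1-x)$ with $x > 1/2$ and consider the misreport $\bids_1 = (x+\epsilon, 1-x-\epsilon)$ for some $\epsilon \in (0, 1-x]$. By the characterization above, the mechanism outputs $\{\{1\},\{2\}\}$ exactly when $y < x+\epsilon$ and $\{\{2\},\{1\}\}$ when $y > x+\epsilon$. Since $y \sim \mathcal{U}[0,1]$, agent $1$'s expected true utility under the misreport is $(x+\epsilon) \cdot x + (1-x-\epsilon) \cdot (1-x)$, which exceeds her expected utility under truthful reporting by $\epsilon(2x-1) > 0$---identical to inequality~\eqref{ineq:bic-welfare}. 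Hence truthful reporting is not a best response, contradicting BIC.

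The main obstacle is the uniqueness claim in the first step. Unlike Theorem~\ref{theorem:BIC-welfare-max}, where the Pigou-Dalton principle directly selected a single welfare-maximizing allocation, here uniqueness relies on two separate ingredients: fulfillment narrows the menu to two deterministic allocations, and $\fPO$ discriminates between them. One must be careful to rule out \emph{randomized} Pareto improvements rather than merely deterministic ones when invoking $\fPO$, and to observe that this same uniqueness property forces any randomized ex-post $\fPO$ mechanism to behave deterministically on a probability-one subset of the prior.
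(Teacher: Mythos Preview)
Your proposal is correct and follows essentially the same route as the paper's proof: both reduce to the $n=m=2$ uniform-simplex setting, use fulfillment to cut the feasible outputs down to the two matching allocations, invoke $\fPO$ (via domination by a fractional allocation) to show that only one of these survives depending on the sign of $b-y$, and then recycle the expected-utility computation and inequality~\eqref{ineq:bic-welfare} from Theorem~\ref{theorem:BIC-welfare-max}. The only cosmetic difference is that you parametrize the dominating fractional allocation directly by $(\alpha,\beta)\in[0,1]^2$, whereas the paper writes it as an explicit convex combination of two deterministic utility profiles; the content is identical.
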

\begin{proof}
    Structurally, the proof is similar to the proof of Theorem \ref{theorem:BIC-welfare-max}. For the sake of brevity, we will only describe the points where the proof differs. In proof of Theorem \ref{theorem:BIC-welfare-max}, we showed that allocations that maximize $f$ are unique if $y>b$ or $y<b$. Similarly, here we will show that if $y<b$, then $\allo = \{\{1\}, \{2\} \}$ is the unique allocation that is $\fPO$ and fulfilling, and if $y> b$, then $\allo = \{\{2\}, \{1\} \}$ is the unique such allocation. Proving this would imply that for any randomized or deterministic mechanism $\allo$ that outputs $\fPO$ and fulfilling allocations, $\E_{\val_{2} \sim \Delta_2}[u_1(\allo(\bids_1, \val_2)] = xb + (1-x)(1-b)$ where the bid and the true value of agent $1$ is $\bids_{1} = (b, 1-b)$ and $\val_1 = (v, 1-v)$ respectively. Finally, together with inequality (\ref{ineq:bic-welfare}) this will imply that $\allo$ is not BIC and would conclude the proof.
    
    Now consider the case when $y < b$; since $y \sim \mathcal{U}[0,1]$ almost surely $y>0$, thus, $0 < y < b$. Any fulfilling allocation for this instance must give one item to each of the agents since both agents value $n=2$ items positively. These allocations have utility profiles $(b, 1-y)$ and $(1-b,y)$. Among these two profiles, $(1-b, y)$ is not $\fPO$ because we can show that it is Pareto dominated by a linear combination of the profiles $(0,1)$ and $(b,1-y)$. Specifically, consider the following linear combination, $(a,b) = (1-\alpha)(0,1) + \alpha(b,1-y)$ where $\alpha = \frac{1-y}{y}$. On simplifying we get, $(a,b) = (\frac{b(1-y)}{y}, 1- \alpha y) = (\frac{b(1-y)}{y},y)$. The profile $(a,b)$ Pareto dominates $(1-b, y)$ because $b = y$ and $a = \frac{b(1-y)}{y} > 1-b$ where the final inequality follows from the fact that $b>y$. Therefore, if $y<b$ then allocation $\allo = \{\{1\}, \{2\} \}$ having utility profile $(b, 1-y)$ is the only $\fPO$ and fulfilling allocation.

    If $y>b$, then we can show that the profile $(b,1-y)$ is not $\fPO$, thereby making $(1-b,y)$ the only $\fPO$ and fulfilling profile. Towards this consider the following linear combination of profiles: $(a',b') = \beta (1-b,y) + (1-\beta)(1,0)$ for $\beta = \frac{1-y}{y}$. On simplifying we get, $(a',b') = (1 - \beta b, 1-y) = (\frac{y-b+by}{y},1-y)$; this Pareto dominates $(b,1-y)$ because $\frac{y-b+by}{y} > b$ which is implied by $y>b$. Hence, if $y>b$, then the allocation $\allo = \{\{2\}, \{1\} \}$ having utility profile $(1-b,y)$ is the only $\fPO$ and fulfilling allocation. This concludes the proof.
\end{proof}

\section{Allocating Divisible Goods: Cake Cutting}\label{section:cake-cutting}

Recently,~\cite{tao2022existence} proved a strong incompatibility between fairness and incentives in the cake-cutting problem: no deterministic DSIC cake-cutting mechanism can always output proportional allocations (see Theorem \ref{theorem:cake_cutting_neg}), for the case of $n=2$ agents and piecewise constant density functions.

\begin{theorem}[\cite{tao2022existence}]
There is no deterministic, DSIC, and proportional cake-cutting mechanism for piecewise-constant valuations, even for $n=2$ agents.\label{theorem:cake_cutting_neg}
\end{theorem}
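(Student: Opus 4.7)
The plan is to argue by contradiction: suppose $M$ is a deterministic, DSIC, and proportional cake-cutting mechanism for two agents with piecewise-constant valuations. My first observation is that for $n=2$ proportionality is equivalent to envy-freeness, so each agent weakly prefers her own piece to the other's under her own valuation. This stronger per-agent guarantee is what makes deviations easy to diagnose later.

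To keep the analysis tractable, I would restrict attention to a simple parametric family of two-step densities: for $\alpha \in [0,2]$, let $f_\alpha$ have density $\alpha$ on $[0,1/2]$ and density $2-\alpha$ on $[1/2,1]$, so that the total mass is $1$. This family lies inside the piecewise-constant class, includes the uniform density at $\alpha=1$, and is symmetric under $\alpha \mapsto 2-\alpha$. The key technical tool is a taxation-principle style argument: fix agent $2$'s report $f_\beta$ and view $M(\cdot, f_\beta)$ as offering agent $1$ a \emph{menu} of pieces indexed by her possible reports $f_\alpha$. DSIC forces truthful reporting to maximize agent $1$'s utility over this menu; in particular, whenever two reports lead to pieces that differ on a region where agent $1$'s density is not constant, the ranking of those pieces under $f_\alpha$ must agree with the ranking determined by $M$.

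I would then pin down the menu as tightly as possible using proportionality on agent $2$. At the extreme $\beta = 0$ (all of agent $2$'s mass on $[1/2,1]$), proportionality forces agent $2$ to receive measure at least $1/4$ from the right half; combined with DSIC on agent $1$ (who, by misreporting as uniform, can guarantee the uniform-uniform allocation's value), this essentially nails down $M(f_\alpha, f_0)$ for every $\alpha > 1$ and leaves only limited freedom for $\alpha \leq 1$. A symmetric analysis handles $\beta = 2$. I would then let $\alpha$ sweep across $[0,2]$ at a fixed intermediate $\beta$, tracking how the menu must deform. At the symmetric diagonal $\alpha = \beta$ envy-freeness forces a value-equalizing split, while at the endpoints proportionality pushes agent $1$'s piece toward her own high-value half. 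Combining these pinned-down constraints across at least three carefully chosen profiles produces a cyclic comparison of the form: agent $1$ with true type $f_{\alpha_i}$ strictly prefers the piece $M$ outputs on some $f_{\alpha_j}$ to the piece $M$ outputs on $f_{\alpha_i}$, which contradicts DSIC.

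The main obstacle I anticipate is the slack afforded by $M$'s freedom to allocate regions that are indifferent for one or both agents, which sabotages the simplest two-profile swap arguments. The right contradiction will likely require either a three-profile cyclic construction (so that the indifferent regions line up inconsistently around the cycle) or a continuity/monotonicity argument along the one-parameter family $\{f_\alpha\}$, exploiting the fact that piecewise-constant valuations embed continuously into the space of valuations so that the menu boundaries can be located precisely. A secondary subtlety is ensuring that the manipulating report is itself a piecewise-constant density inside the allowed class, which the two-step family $\{f_\alpha\}$ already guarantees by construction.
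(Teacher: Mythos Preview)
This theorem is not proved in the paper; it is quoted from \cite{tao2022existence} and stated without proof, purely as a known impossibility that motivates the BIC mechanism of Section~\ref{section:cake-cutting}. There is therefore no proof in the paper to compare your proposal against.

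As for the proposal on its own terms, it is a strategy outline rather than a proof. You never actually exhibit the profiles that yield the contradiction: phrases such as ``I would pin down the menu as tightly as possible,'' ``combining these pinned-down constraints \ldots\ produces a cyclic comparison,'' and ``the right contradiction will likely require either a three-profile cyclic construction \ldots\ or a continuity/monotonicity argument'' all defer precisely the step that carries the weight. The obstacle you yourself flag---that the mechanism has slack in allocating regions to which one or both agents are indifferent---is exactly what makes this result nontrivial, and the sketch gives no mechanism for closing that slack. In particular, because pieces may be arbitrary finite unions of intervals, the ``menu'' offered to agent~1 at a fixed report of agent~2 is not obviously one-dimensional even when you restrict reports to the family $\{f_\alpha\}$, so the monotonicity/continuity route you gesture at is not automatic. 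Tao's actual argument is considerably more delicate than a direct cyclic manipulation on a single one-parameter family of two-step densities. Without the concrete profiles and the verification that the constraints they impose are jointly inconsistent, what you have written is a plan of attack, not a proof.
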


\paragraph{The Bayesian Setting.} The \emph{private} piecewise-constant valuation function $\ft_i$ of each agent $i$ is drawn from a \emph{neutral} prior distribution $\matD_i$. We say a prior distribution $\matD$ is neutral, iff for all integers $k>1$, and disjoint pieces of cake $X_1, X_2, \ldots, X_k$ satisfying $|X_1| = |X_2| = \ldots = |X_k|$, we have
$\mathop{\Pr}\limits_{\ft \sim \matD}[ j = \arg\max_{i=1}^k \ft(X_i)] = 1/k.$

Intuitively, neutral priors represent a distribution of density functions such that the probability of an agent preferring an interval over another same-length interval is the same, i.e., same-length intervals are \emph{equivalent} in expectation. 


Towards bypassing Theorem \ref{theorem:cake_cutting_neg}, we provide a deterministic mechanism (Mechanism \ref{mec:IA}), which we refer to as \ia\ 
(IA), and we show that 
IA is Bayesian incentive compatible for neutral priors and it always outputs proportional allocations (Theorem \ref{thm: positive for cake cutting}). Moreover, IA can be executed in polynomial time for piecewise constant density functions.

\begin{theorem}\label{thm: positive for cake cutting}
       The \ia\ mechanism is Bayesian incentive compatible for neutral product distributions, and always outputs proportional allocations.
\end{theorem}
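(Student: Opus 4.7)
The plan is to establish the two parts of the claim separately: under truthful reporting the output allocation is proportional, and truthful reporting forms a Bayesian Nash equilibrium. Before either argument, I would verify that the mechanism is well-defined: in round $i$, each of agents $1,\ldots,i-1$ must partition their current bundle $X_j^{(i-1)}$ into $i$ crumbs that are simultaneously \emph{equal in length} and \emph{equal in value} under their reported density. Existence follows from Alon's necklace-splitting theorem~\cite{alon1987splitting} applied to the two measures (Lebesgue and the reported density), and for piecewise-constant or piecewise-linear densities this partition can be computed in polynomial time, yielding the efficiency claim.

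For proportionality, let $v_{i,t}$ denote the true value $\int_{X_i^{(t)}} \ft_i$ of agent $i$'s bundle just after round $t$. For any agent $j<i$, truthful reporting of $\ft_j$ means the round-$i$ re-partition splits $X_j^{(i-1)}$ into $i$ crumbs of equal $\ft_j$-value, of which exactly one is taken by agent $i$, so $v_{j,i}=\tfrac{i-1}{i}\,v_{j,i-1}$. When agent $i$ arrives, they take, from each agent $j<i$'s partition, the crumb of maximum $\ft_i$-value among $i$ pieces whose $\ft_i$-values sum to $\ft_i(X_j^{(i-1)})$, hence worth at least $\ft_i(X_j^{(i-1)})/i$. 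Summing over $j$ and using that $\bigcup_{j<i} X_j^{(i-1)}=[0,1]$ gives $v_{i,i}\geq \tfrac{1}{i}$. Telescoping the per-round shrinkage, $v_{i,n}=v_{i,i}\prod_{k=i+1}^{n}\tfrac{k-1}{k}=v_{i,i}\cdot\tfrac{i}{n}\geq\tfrac{1}{n}$, which is the desired proportionality guarantee.

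For BIC, fix an agent $i$, assume all other agents report truthfully, and observe that $\fr_i$ influences $i$'s expected utility only through (a) the choice of crumbs in round $i$ and (b) the shape of $i$'s own partitions in rounds $k>i$. For (a), the mechanism awards $i$ the crumb of maximum $\fr_i$-value from each existing agent's partition; picking by the true density $\ft_i$ pointwise dominates picking by any other ranking, so truthful reporting maximizes the round-$i$ acquisition in every realization. For (b), in round $k>i$ agent $i$'s bundle (with some true value $V$) is cut into $k$ equal-length crumbs $P_1,\ldots,P_k$, and agent $k$ then takes the $\ft_k$-maximum among them. Since $\ft_k\sim\matD_k$ is neutral and the $P_\ell$'s are of equal length, each $P_\ell$ is the $\ft_k$-maximizer with marginal probability exactly $1/k$, so the expected loss to agent $i$ equals $\sum_\ell \ft_i(P_\ell)/k = V/k$ — a value that depends only on the length-constraint of the partition, not on how $i$'s report distributes $\ft_i$-mass across the crumbs. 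Hence $\fr_i$ does not alter expected multiplicative shrinkage in any future round, while strictly (or weakly) improving the round-$i$ pick only when truthful, making truthful reporting expected-utility-optimal.

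The main obstacle is precisely this ``independence from partition'' step for future rounds. Neutrality must be applied carefully: agent $k$'s single type $\ft_k$ simultaneously governs crumb choices from \emph{every} existing agent's partition, so choices across agents are correlated, but only the marginal probability over agent $i$'s own $k$ crumbs enters $i$'s expected loss, and neutrality of $\matD_k$ applied to those $k$ equal-length crumbs alone yields that marginal as exactly $1/k$. The product-prior assumption ensures $\ft_k$ is independent of $i$'s type and of the bundle history $X_i^{(k-1)}$ (which is determined by the types and reports of agents $1,\ldots,k-1$), so the conditional expected loss given $X_i^{(k-1)}$ is indeed $V/k$, completing the argument.
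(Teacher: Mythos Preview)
Your proposal is correct and follows essentially the same approach as the paper: proportionality via the exact $(k-1)/k$ shrinkage (from the equal-value property of \textsc{SplitEqual}) telescoped against the $\ge 1/i$ acquisition at arrival, and BIC via the observation that neutrality over the equal-\emph{length} crumbs makes the expected per-round loss exactly $V/k$ regardless of how the report shapes the partition. The paper isolates this last fact as a standalone lemma (Lemma~\ref{lem:euqalshare}) and then iterates expectations to obtain $\E[\ft_i(X_i^n)] = \tfrac{i}{n}\ft_i(X_i^i)$, whereas you inline the same computation, but the argument is identical in substance, including your careful handling of the marginal-versus-joint issue for agent~$k$'s correlated choices.
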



At a high level, given reported preferences $\fr_1, \fr_2, \ldots, \fr_n$, IA operates as follows. Initially, the entire cake $[0,1]$ is allocated to agent $1$. Other agents arrive one-by-one following the order $2, 3, \ldots, n$. Before the arrival of agent $i$, the entire cake is always allocated among the agents $\{1, 2, \ldots, i-1\}$; this allocation is denoted by $X_1, X_2, \ldots, X_{i-1}$. When agent $i$ arrives, each agent $j \in \{1, 2, \ldots, i-1\}$ cuts their piece of cake, $X_j$, into $i$ pieces, $C_j^1, C_j^2, \ldots, C_j^i$ using the \se
operation (Subroutine \ref{mec:SE}) described below.

\noindent
{\bf SplitEqual operation.} Given a piece of cake $X$, a function $f$, and a number $k$, the \se\ operation partitions $X$ into pieces $C^1, C^2, \ldots, C^k$ having \emph{equal sizes} and \emph{equal values}, i.e., $|C^1| = |C^2| = \ldots = |C^k|$ and $f(C^1) = f(C^2) = \ldots = f(C^k)$. Note that this operation can always be performed given ``any'' density function $f$ and $k>1$ since the existence of such a split is directly implied by the existence of a \emph{perfect partition}.\footnote{\cite{alon1987splitting} proved that given arbitrary density functions $y_1, y_2, \ldots, y_n$, there always exists a partition $X_1, X_2, \ldots, X_n$ of $[0,1]$ such that $y_i(X_j) = 1/n$ for all $i,j \in [n]$.} Specifically, if we consider $k$ density functions where the first function, $y_1(X) = |X|$ while the other $k-1$ density functions are $f$, then a perfect partition $C^1, C^2, \ldots, C^i$ of $X$ --- which always exists --- will satisfy the two required conditions of equal value (since $f(C^r) = f(C^s)$ for all $r,s \in [i]$) and equal size (since $|C^r| = y_1(C^r) = y_1(C^s) = |C^s|$ for all $r,s \in [k]$). The \se\ operation can be efficiently performed for piecewise constant valuations by simply dividing each subinterval within $X$ having a constant-value w.r.t.\ $f$ into $k$ pieces of equal size; the equal value condition will follow from the fact the subinterval being divided has a constant-value. Furthermore, it is known that a perfect partition can be computed efficiently for the significantly more general class of piecewise linear valuations~\cite{chen2013truth}, allowing for efficient execution of IA for this class of valuations as well.

After performing the \se\ operation, each agent $j \in \{ 1, \dots, i-1 \}$ offers agent $i$ one piece amongst $C_j^1, C_j^2, \ldots, C_j^i$, and agent $i$ picks the one that maximizes her value, as per the reported density function $\fr_i$. The resulting allocation, after the arrival of the $n^{\text{th}}$ agent, is then returned as the final allocation. In Lemma \ref{theorem:cake_bic}, we show that IA is BIC for neutral priors. In Lemma~\ref{theorem:cake_prop}, we show that IA always outputs proportional allocations. The proof of  Lemma~\ref{theorem:cake_prop} resembles a proof of~\cite{fink1964note} wherein proportionality is established for a similar recursive cake-cutting algorithm.

\begin{algorithm}[H]
\textbf{Input:} $(f_1, \dots, f_n)$, the reported valuations of the agents\\
\textbf{set} $X_1 \gets [0,1]$, and $X_i \gets \emptyset$ for all $i \in \{2,\dots, n\}$ \\
\For{ $i \gets 2, 3, \dots, n$}{
    \For{$j \gets 1, 2, \dots,i-1$}{
        \textbf{set} $(C^1_j, C^2_j, \dots, C^{i}_j) \gets $ \se($X_j, f_j, i$)\\
        \textbf{let} $k^* \gets  \argmax_k f_i(C_j^k)$\\
        \textbf{set} $X_i \gets X_i \cup C_j^{k^*}$\\
        \textbf{set} $X_j \gets X_j \setminus C_j^{k^*}$
    }
}
\Return{$(X_1, X_2, \dots, X_n)$}
\caption{\ia}
\label{mec:IA}
\end{algorithm}

\renewcommand{\algorithmcfname}{Subroutine}

\begin{algorithm}[H]
\textbf{Input:} $(X, f, k)$ a set of intervals $X$, valuation function $f$, and number $k > 1$\\
Partition $X$ into $k$ subintervals $C^1, C^2 \dots, C^k$ such that the following two conditions are true:
\begin{enumerate}
    \item [(1)] $|C^i| = |C^j|$ for all $i,j \in [k]$
    \item [(2)] $f(C_i) = f(C_j)$ for all $i,j \in [k]$
\end{enumerate}
\Return{$(C^1, C^2, \dots, C^k)$}
\caption{\se}
\label{mec:SE}
\end{algorithm}
\renewcommand{\algorithmcfname}{MECHANISM}

\begin{restatable}{lemma}{expectedUtilLemma}\label{lem:euqalshare}
Let $X'_j$, $X_j$ be the pieces of cake that agent $j$ is allocated at the beginning and the end of iteration $t$, respectively, for some $1 \leq j < t \leq n$.\footnote{Noting that $X_j$ is a random variable which depends on the valuation function of agent $t$, $\ft_t$, drawn from $\mathcal{D}$.} Then, for any valuation function $\hat{f}$, the expected value of $\hat{f}(X_j)$ is equal to $\frac{t-1}{t} \cdot \hat{f}(X'_j)$, where the expectation is over the randomness of the valuation function of agent $t$, $\ft_t$, drawn from a neutral prior $\mathcal{D}$. 
Formally,
\[\mathop{\E}\limits_{\ft_t \sim \mathcal{D}}[\hat{f}(X_j)] = \frac{t-1}{t} \cdot \hat{f}(X'_j).\]
\end{restatable}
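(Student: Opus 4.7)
The plan is to reduce the claim to a single probabilistic observation enabled by the neutrality of $\mathcal{D}$ together with the defining property of \se. First I would note that, at the start of iteration $t$, the piece $X'_j$ and therefore also the pieces $C_j^1, \ldots, C_j^t$ produced by \se($X'_j, f_j, t$) are completely determined by the history of the execution prior to iteration $t$ and by agent $j$'s reported valuation $f_j$; they do not depend on $\ft_t$. So I would treat $X'_j$ and the $C_j^k$'s as fixed throughout the argument, and view the only randomness as coming from agent $t$'s choice $k^* = \arg\max_k f_t(C_j^k)$, where (in the BIC analysis context in which this lemma is used) we assume agent $t$ reports truthfully so that $f_t = \ft_t \sim \mathcal{D}$.

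Next, since $X_j = X'_j \setminus C_j^{k^*}$ and $\hat{f}$ is additive (being an integral), and since $C_j^1, \ldots, C_j^t$ form a partition of $X'_j$, I can write
\[
    \hat{f}(X_j) \;=\; \hat{f}(X'_j) - \hat{f}(C_j^{k^*}), \qquad \hat{f}(X'_j) \;=\; \sum_{k=1}^t \hat{f}(C_j^k).
\]
Thus it suffices to show that $\E_{\ft_t \sim \mathcal{D}}[\hat{f}(C_j^{k^*})] = \tfrac{1}{t}\hat{f}(X'_j)$.

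To establish that, I would invoke the key structural property of \se: the pieces $C_j^1, \ldots, C_j^t$ all have equal length, i.e.\ $|C_j^1| = \cdots = |C_j^t|$. This is exactly the setting required by the neutrality definition: for disjoint pieces of equal length, a random draw $\ft_t \sim \mathcal{D}$ selects any one of them as its argmax with probability $1/t$. Hence $\Pr_{\ft_t \sim \mathcal{D}}[k^* = k] = 1/t$ for each $k \in [t]$, and therefore
\[
    \mathop{\E}_{\ft_t \sim \mathcal{D}}\bigl[\hat{f}(C_j^{k^*})\bigr] \;=\; \sum_{k=1}^t \frac{1}{t}\,\hat{f}(C_j^k) \;=\; \frac{1}{t}\,\hat{f}(X'_j).
\]
Combining gives $\E[\hat{f}(X_j)] = \hat{f}(X'_j) - \tfrac{1}{t}\hat{f}(X'_j) = \tfrac{t-1}{t}\,\hat{f}(X'_j)$, as desired.

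The main subtlety (which is not really a difficulty but deserves a line of care) is lining up the randomness: one must articulate that the pieces being argmaxed over are genuinely independent of $\ft_t$ and are of equal length, so that the neutrality axiom applies verbatim. A secondary, purely notational point is that the function $\hat{f}$ against which we evaluate is arbitrary and need not be the reported or true valuation of any particular agent --- the identity $\hat{f}(X'_j) = \sum_k \hat{f}(C_j^k)$ only uses additivity, not the equal-value property of \se (the latter is needed for agent $j$'s own utility accounting elsewhere, not here).
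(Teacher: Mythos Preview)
Your proof is correct and follows essentially the same approach as the paper: both arguments fix the equal-length pieces $C_j^1,\ldots,C_j^t$ produced by \se, invoke neutrality to conclude each is selected by agent~$t$ with probability $1/t$, and then use additivity of $\hat f$ to obtain $\E[\hat f(X_j)] = \hat f(X'_j) - \tfrac{1}{t}\hat f(X'_j)$. The only cosmetic difference is that the paper writes the computation via indicator variables $a_k$ rather than directly in terms of $k^*$.
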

\begin{proof}
In iteration $t$, via the split-equal subroutine, the piece $X'_j$ is partitioned into disjoint intervals $C_j^1, C_j^2, \dots, C_j^t$. Then agent $t$, having a valuation function $\ft_t \sim \mathcal{D}_t$ picks her favorite piece amongst $C_j^1, C_j^2, \ldots, C_j^t$. For each $k \in [t]$, define $a_k$ to be the indicator random variable such that $a_k = 1$ if the piece $C_j^k$ is picked by agent $t$, and $0$ otherwise. If agent $t$ picks the piece $C_j^k$ in iteration $t$, then the piece of cake agent $j$ has at the end of iteration satisfies $X_j = X'_j \setminus C_j^k$. This allows us to express the expected value $\E_{\ft_t \sim \mathcal{D}_t}[\hat{f}(X_j)]$ as, 
\begin{align*}
    \mathop{\E}\limits_{\ft_t \sim \mathcal{D}_t}[\hat{f}(X_j)] &= \mathop{\E}\limits_{\ft_t \sim \mathcal{D}_t}\left[ \sum_{k=1}^t a_k \Big( \hat{f}(X'_j) - \hat{f}(C_j^k) \Big) \right]\\
    & = \hat{f}(X'_j) \mathop{\E}\limits_{\ft_t \sim \mathcal{D}_t}\left[ \sum_{k=1}^t a_k \right] - \mathop{\E}\limits_{\ft_t \sim \mathcal{D}_t} \left[\sum_{k=1}^t a_k \hat{f}(C_j^k)\right],
\end{align*}
where the final inequality follows from the fact that $\hat{f}(X'_j)$ and $\hat{f}(C_j^k)$ are independent of $f_t$. To obtain the desired equality we will now use the following two observations. First, the prior $\mathcal{D}_t$ is neutral, i.e., $\E_{f_t \sim \mathcal{D}_t} [X_k] = 1/t$ because the $t$ intervals $C_j^1, C_j^2, \dots, C_j^t$ have equal sizes, $|C_j^1| = |C_j^2| = \dots = |C_j^t|$. Second, $\sum_{k=1}^t X_k = 1$ since agent $t$ always picks a piece from agent $j$. Using these two observations we get,

\begin{align*}
    \mathop{\E}\limits_{f_t \sim \mathcal{D}_t}[\hat{f}(X_j)] &= \hat{f}(X'_j) \mathop{\E}\limits_{\ft_t \sim \mathcal{D}_t}\left[ \sum_{k=1}^t a_k \right] - \mathop{\E}\limits_{\ft_t \sim \mathcal{D}_t} \left[\sum_{k=1}^t a_k \hat{f}(C_j^k)\right]\\
    & = \hat{f}(X'_j) \cdot 1 - \sum_{k=1}^t \frac{1}{t} \hat{f}(C_j^k) \\
    & = \hat{f}(X'_j) - \frac{1}{t} \hat{f}(X'_j) \tag{$\cup_{k=1}^t C_j^k = X'_j$} \\
    & = \frac{t-1}{t} \cdot \hat{f}(X'_j).\qedhere
\end{align*}

\end{proof}

Now using Lemma \ref{lem:euqalshare}, we will show that IA is Bayesian incentive compatible for neutral priors. At a high level, we show that if agent $i$ has piece $X^i_i$ after her arrival, then, eventually, in the final allocation, her utility, as per her true valuation $\ft_i$, will be proportional to $\ft_i(X^i_i)$. Specifically, let $X^n_i$ be the piece that agent $i$ has at the end of the mechanism's execution. Through repeated applications of Lemma~\ref{lem:euqalshare} we prove that $\ft_i(X_i^n) = \frac{i}{n}\cdot \ft_i(X^i_i)$ 
for any reported valuation function $\fr_i$. Therefore, it suffices to prove that truthfully reporting $\ft_i$ maximizes $\ft_i(X^i_i)$. 
\begin{restatable}{lemma}{cakecuttingBICLemma}\label{theorem:cake_bic}
      The \ia\ is Bayesian incentive compatible with respect to product distributions that are neutral.  
\end{restatable}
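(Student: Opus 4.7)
The plan is to decompose agent $i$'s interim utility into two essentially independent contributions: the value $\ft_i(X_i^i)$ that agent $i$ accrues in her own arrival iteration, and a deterministic shrinkage factor describing how this value erodes over the subsequent $n-i$ iterations. I will argue that the shrinkage factor is \emph{invariant} to agent $i$'s report $\fr_i$, while $\ft_i(X_i^i)$ is pointwise maximized (in every realization of the other agents' types) by truthful reporting.

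\textbf{Step 1 (telescoping the shrinkage).} Fix any report $\fr_i$, assume all other agents report truthfully, and let $X_i^t$ denote agent $i$'s bundle at the end of iteration $t$. For each $t > i$, iteration $t$ invokes $\se(X_i^{t-1}, \fr_i, t)$, producing $t$ equal-sized subpieces from which agent $t$ removes one. Since $\mathcal{D}_t$ is neutral and the subpieces have equal length, \Cref{lem:euqalshare} applied with $\hat f = \ft_i$ gives
\[
\mathop{\E}\limits_{\ft_t \sim \mathcal{D}_t}\bigl[\ft_i(X_i^t) \,\big|\, X_i^{t-1}\bigr] \;=\; \frac{t-1}{t}\,\ft_i(X_i^{t-1}).
\]
Iterating this identity via the tower property from $t = i+1$ up to $t = n$ telescopes to
\[
\mathop{\E}\limits_{\ft_{i+1},\ldots,\ft_n}\bigl[\ft_i(X_i^n) \,\big|\, X_i^i\bigr] \;=\; \prod_{t=i+1}^{n}\frac{t-1}{t}\cdot \ft_i(X_i^i) \;=\; \frac{i}{n}\,\ft_i(X_i^i).
\]
Crucially, the factor $i/n$ is independent of $\fr_i$: $\se$ always returns pieces of equal length regardless of the valuation function used to split, so the per-iteration selection probability $1/t$ coming from neutrality is unaffected by agent $i$'s report.

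\textbf{Step 2 (pointwise maximization of $\ft_i(X_i^i)$).} Taking the outer expectation over $\ft_1,\ldots,\ft_{i-1}$ yields the interim utility
\[
\mathop{\E}\limits_{\ft_{-i} \sim \mathcal{D}_{-i}}\!\bigl[\ft_i(X_i^n(\fr_i, \ft_{-i}))\bigr] \;=\; \frac{i}{n}\cdot \mathop{\E}\limits_{\ft_1,\ldots,\ft_{i-1}}\!\bigl[\ft_i(X_i^i)\bigr].
\]
By inspection of IA, when agent $i$ arrives the subpieces $\{C_j^k : j<i,\, k\in[i]\}$ depend only on $(\ft_1,\ldots,\ft_{i-1})$ and are entirely independent of $\fr_i$. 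Agent $i$'s sole influence is in selecting, for each $j<i$, the index $k_j^* = \arg\max_k \fr_i(C_j^k)$; these choices decouple across $j$, so $\ft_i(X_i^i) = \sum_{j<i}\ft_i(C_j^{k_j^*})$ is maximized term-by-term when $\fr_i = \ft_i$, since then $k_j^* = \arg\max_k \ft_i(C_j^k)$. Hence truthful reporting maximizes $\ft_i(X_i^i)$ pointwise over every realization of $\ft_1,\ldots,\ft_{i-1}$, and therefore also in expectation, yielding the BIC inequality.

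The hardest part is really the clean bookkeeping in Step 1: one must check that conditioning on $X_i^{t-1}$ leaves $\ft_t$ still distributed according to $\mathcal{D}_t$ (true by the product structure of the prior) and must articulate precisely why the shrinkage factor is preserved under \emph{any} report $\fr_i$, namely because $\se$ fixes the lengths of the produced subpieces so neutrality applies identically. Once this invariance is stated, the telescoping is routine induction and the argmax argument of Step 2 is essentially immediate.
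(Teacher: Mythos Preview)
Your proof is correct and follows essentially the same approach as the paper: both apply Lemma~\ref{lem:euqalshare} iteratively to obtain the telescoping identity $\E[\ft_i(X_i^n)\mid X_i^i]=\frac{i}{n}\ft_i(X_i^i)$, then observe that $\ft_i(X_i^i)$ is pointwise maximized (over all realizations of $\ft_1,\ldots,\ft_{i-1}$) by truthful reporting since the pieces offered to agent~$i$ are independent of her report. Your explicit remark that the shrinkage factor is invariant to $\fr_i$ because \se\ always produces equal-length pieces is exactly the point the paper relies on implicitly through the generality of Lemma~\ref{lem:euqalshare}.
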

\begin{proof}
Fixing an agent $i < n$, we will show that IA is BIC wrt neutral prior $\times_{j=1}^n \mathcal{D}_j$ for agent $i$; for agent $n$, the mechanism is trivially DSIC. Let the true valuation of agent $i$ be $\ft_i$ and let her reported valuation be $\fr_i$. Denote by $X_i^t$ the piece of cake agent $i$ has at the end of iteration $t$; for $t=1$, define $X_i^1 = [0,1]$ for $i=1$, and $X_i^1 = \emptyset$ otherwise. By definition of our mechanism, we have $X_i^t = \emptyset$ for $t<i$. Additionally, note that the reported function $\fr_i$ has no influence on the execution of the mechanism prior to the arrival of agent $i$: agent $i$, using her report $\fr_i$, cannot change the allocation $X_1^{i-1}, X_2^{i-1}, \ldots, X_n^{i-1}$. 

In the subsequent analysis, we will show that reporting $\ft_i$ is at least as good as reporting any $\fr_i$ for agent $i$ even if we fix the valuations of agents in $[i-1]$, i.e., IA is BIC for agent $i$ conditioned on valuations $\fr_1, \fr_2, \ldots, \fr_{i-1}$. Formally, for any fixed  true valuation functions $\ft_1, \ft_2, \ldots, \ft_{i-1}$,
\begin{align}\label{ineq:bic_main}
    \mathop{\E}\limits_{\ft_{i+1}, \ft_{i+2}, \ldots, \ft_{n}}[u_i(\text{IA}(\ft_i, \mathbf{f}^*_{-i})] \geq \mathop{\E}\limits_{\ft_{i+1}, \ft_{i+2}, \ldots, \ft_{n}}[u_i(\text{IA}(\fr_i, \mathbf{f}^*_{-i})].
\end{align}
Note that the above inequality implies that IA is BIC for agent $i$ because taking an expectation over $\ft_1 \sim \mathcal{D}_1, \ft_2 \sim \mathcal{D}_2, \ldots, \ft_{i-1} \sim \mathcal{D}_{i-1}$ on both sides of the above inequality gives us
\begin{align*}
    \mathop{\E}\limits_{\mathbf{f}^*_{-i} \sim \mathcal{D}_{-i}}[u_i(\text{IA}(\ft_i, \mathbf{f}^*_{-i})] \geq \mathop{\E}\limits_{\mathbf{f}^*_{-i} \sim \mathcal{D}_{-i}}[u_i(\text{IA}(\fr_i, \mathbf{f}^*_{-i})].
\end{align*}
To complete the proof, we will now prove inequality (\ref{ineq:bic_main}). First, we will show that, for any fixed $\ft_1, \ft_2, \ldots, \ft_{i-1}$ and report of agent $i$, $\fr_i$, we will have
\begin{align}\label{ineq:bic_expectation}
    \mathop{\E}\limits_{\ft_{i+1}, \ft_{i+2}, \ldots, \ft_{n}}[\ft_i(X_i^n)] = \frac{i}{n}\cdot \ft_i(X^i_i).
\end{align}
Consider the following equality, $\mathop{\E}\limits_{\ft_n}[\ft_i(X_i^n)] = \frac{n-1}{n} \ft_i(X_i^{n-1})$, implied by Lemma~\ref{lem:euqalshare}, 

To prove equation (\ref{ineq:bic_expectation}), starting from the above equality, we will repeatedly take expectation over $\ft_j \sim \mathcal{D}_j$ over both sides and we will invoke Lemma \ref{lem:euqalshare} for agent $i$ and iteration $j$, beginning from $j = n-1$ down to $j=i+1$. Performing this sequence of operations gives us the following chain of equalities, leading to equation (\ref{ineq:bic_expectation}),
\begin{align*}
    \mathop{\E}\limits_{\ft_n \sim \mathcal{D}_n}& [\ft_i(X_i^n)] = \frac{n-1}{n} \ft_i(X_i^{n-1}) \\
    \implies  \mathop{\E}\limits_{\ft_{n-1}, \ft_n}&[\ft_i(X_i^n)] =  \frac{n-1}{n}\mathop{\E}\limits_{\ft_{n-1}}[\ft_i(X_{i}^{n-1})] = \frac{n-1}{n} \cdot\frac{n-2}{n-1} \ft_i(X_i^{n-2}) \tag{via Lemma \ref{lem:euqalshare}}\\
    & \; \dots\\
    \implies \mathop{\E}\limits_{\ft_{i+1}, \ldots, \ft_{n}}& [\ft_i(X_i^n)] = \frac{n-1}{n} \cdot \frac{n-2}{n-1} \cdots \frac{i-1}{i}\ft_i(X^i_i) = \frac{i}{n}\ft_i(X^i_i).\tag{via Lemma \ref{lem:euqalshare}}
\end{align*}
Finally, using equation (\ref{ineq:bic_expectation}) we will prove inequality (\ref{ineq:bic_main}). For fixed $f_1, f_2, \ldots, f_{i-1}$, if agent $i$ reports $f_i$, then the piece $X^i_i = \cup_{j=1}^{i-1} C_j^{k_j}$ where $C_j^{k_j}$ is the piece that agent $i$ picked from agent $j$, i.e., we have $k_j \in \arg \max_{k=1}^i f_i(C_j^k)$. Thus,the mechanism picks the pieces $X^i_i$ that maximize value wrt the reported value $\fr_i$. However, given the pieces picked by her, equality (\ref{ineq:bic_expectation}) implies that the final expected utility she derives is directly proportional to $\ft_i(X^i_i)$, her true value for $X_i^i$. Therefore, the final expected utility of agent $i$ is maximized when $\fr_i = \ft_i$,i.e., agent $i$ reports truthfully. We therefore have: \[\mathop{\E}\limits_{\ft_{i+1}, \ft_{i+2}, \ldots, \ft_{n}}[u_i(\text{IA}(\ft_i, \mathbf{f}^*_{-i})] \geq \mathop{\E}\limits_{\ft_{i+1}, \ft_{i+2}, \ldots, \ft_{n}}[u_i(\text{IA}(\ft_i, \mathbf{f}^*_{-i})].\qedhere\]\end{proof}

The following lemma proves that IA outputs proportional allocations, its proof appear in Appendix \ref{appendix:cake cutting proofs}.

\begin{restatable}{lemma}{IAprop}\label{theorem:cake_prop}
    The \ia\ mechanism outputs proportional allocations.
\end{restatable}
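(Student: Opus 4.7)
The plan is to track the evolution of each agent's piece across rounds under truthful reporting, and show that two multiplicative effects combine to yield the proportionality bound. For any agent $j$, let $X_j^t$ denote the piece held by agent $j$ at the end of round $t$ of the mechanism (with $X_j^{j-1} = \emptyset$ and $X_j^t = X_j^j$ for $t < j$ being undefined or irrelevant). I want to prove $\ft_j(X_j^n) \geq 1/n$ for every $j \in \agents$, which is exactly proportionality.

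The first key observation is a deterministic analogue of Lemma~\ref{lem:euqalshare}: when all agents report truthfully, invoking $\se(X_j^{t-1}, \ft_j, t)$ in round $t$ produces $t$ pieces $C_j^1,\ldots,C_j^t$ with $\ft_j(C_j^k) = \ft_j(X_j^{t-1})/t$ for each $k$, by the defining property of $\se$. Since agent $t$ removes exactly one such piece from $X_j^{t-1}$, we get the deterministic identity
\[
\ft_j(X_j^t) \;=\; \frac{t-1}{t}\,\ft_j(X_j^{t-1}) \qquad \text{for every } t > j.
\]
Telescoping this identity from $t = j+1$ up to $t = n$ yields
\[
\ft_j(X_j^n) \;=\; \ft_j(X_j^j) \cdot \prod_{t=j+1}^{n} \frac{t-1}{t} \;=\; \frac{j}{n}\,\ft_j(X_j^j).
\]

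The second step is to lower-bound $\ft_j(X_j^j)$, the value of the bundle agent $j$ assembles on arrival. In round $j$, every prior agent $k \in \{1,\ldots,j-1\}$ partitions $X_k^{j-1}$ via $\se(X_k^{j-1}, \ft_k, j)$ into $j$ pieces of equal $\ft_k$-value (whose $\ft_j$-values can differ), and agent $j$ greedily selects the piece of maximum $\ft_j$-value. Hence the piece she takes from agent $k$ has $\ft_j$-value at least the average, $\ft_j(X_k^{j-1})/j$. Summing over $k$ and using that the pieces $\{X_k^{j-1}\}_{k=1}^{j-1}$ partition $[0,1]$ at the beginning of round $j$ (since agent $j$ has not yet arrived and the entire cake is allocated among the first $j-1$ agents), I get
\[
\ft_j(X_j^j) \;\geq\; \sum_{k=1}^{j-1} \frac{1}{j}\, \ft_j(X_k^{j-1}) \;=\; \frac{1}{j}\, \ft_j\!\left(\bigcup_{k=1}^{j-1} X_k^{j-1}\right) \;=\; \frac{1}{j}\,\ft_j([0,1]) \;=\; \frac{1}{j}.
\]
The edge case $j=1$ is immediate since $X_1^1 = [0,1]$ gives $\ft_1(X_1^1) = 1 \geq 1/1$.

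Combining the two bounds, $\ft_j(X_j^n) \geq (j/n)\cdot(1/j) = 1/n$, which establishes proportionality. No step is particularly subtle; the main point to be careful about is that the equal-value property of $\se$ is with respect to the reporting agent's \emph{own} (truthful) valuation, so the $(t-1)/t$ shrinkage in the first step is deterministic and does not require the expectation from Lemma~\ref{lem:euqalshare}, while the lower bound in the second step uses the standard averaging argument from Fink-style recursive cake-cutting procedures.
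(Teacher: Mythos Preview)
Your proof is correct and uses the same two ingredients as the paper's proof: the deterministic $(t-1)/t$ shrinkage coming from the equal-value property of $\se$, and the averaging argument for the arriving agent. The only organizational difference is that the paper packages these into an induction on rounds (maintaining the invariant that after round $i$ every agent $j\le i$ has value at least $1/i$), whereas you fix an agent $j$ and telescope the shrinkage factors directly to obtain the exact identity $\ft_j(X_j^n)=\tfrac{j}{n}\,\ft_j(X_j^j)$ before applying the averaging bound; this mirrors the telescoping the paper itself uses in the BIC proof (Lemma~\ref{theorem:cake_bic}) rather than in its proportionality proof.
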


Combining Lemmas~\ref{theorem:cake_bic} and~\ref{theorem:cake_prop} implies~\Cref{thm: positive for cake cutting}.


\section*{Acknowledgements}
Alexandros Psomas and Paritosh Verma are supported in part by the NSF CAREER award CCF-2144208, a Google Research Scholar Award, and a Google AI for Social Good award. Vasilis Gkatzelis and Xizhi Tan are partially supported by the NSF CAREER award CCF-2047907.

\bibliographystyle{alpha}
\bibliography{biblio.bib}

\appendix


\section{Missing proofs from Section~\ref{subsec:sdpoplus negative result}}\label{appendix:sdpoplus missing proofs}

\sdUtility*
\begin{proof}

    Without loss of generality assume that the items are indexed such that $v_{i,1} \geq v_{i,2} \geq \ldots \geq v_{i,m}$. Let $\ell$ be the largest index of an item for which agent $i$ has a positive value, i.e., $\ell =  \max\limits_{v_{i,j} > 0} j$. Note that $\ell$ is not defined if $v_{i,j} = 0$ for all $j \in [m]$. However, in this case, $\alloa \sdgeq_i \allob$ trivially holds and $\alloa \sdpgt_i \allob$ can never be true. Hence, we will assume that $\ell$ is well-defined. 
    
    \noindent
    \emph{Proving $(1)$.} To establish that $u_i(\alloa) \geq u_i(\allob)$ if $\alloa \sdpgeq_i \allob$, we will inductively show that for each $j \in [\ell]$, the inequality

    \begin{align*}
    \sum_{k=1}^j (y_{i,k} - x_{i,k}) v_{i,k} \geq v_{i,j} \sum_{k=1}^j (y_{i,k} - x_{i,k}), \numberthis \label{equation:prop-sd-utility0}
    \end{align*}

    is satisfied. Note that proving Equation \ref{equation:prop-sd-utility0} for all $j \in [\ell]$ establishes the proposition, since instantiating Equation \ref{equation:prop-sd-utility0} with $j = \ell$ gives us 

    \begin{align*}
    \sum_{k=1}^\ell (y_{i,k} - x_{i,k}) v_{i,k} \geq v_{i,\ell} \sum_{k=1}^\ell (y_{i,k} - x_{i,k}). \numberthis \label{equation:prop-sd-utility1}
    \end{align*}
    The left-hand side of the above inequality is equal to $\sum_{k=1}^\ell (y_{i,k} - x_{i,k}) v_{i,k} = u_i(\alloa) - u_i(\allob)$, and the right-hand side, $v_{i,\ell} \sum_{k=1}^\ell (y_{i,k} - x_{i,k}) \geq 0$ because $v_{i,\ell} > 0$ along with the fact that $\sum_{k=1}^\ell (y_{i,k} - x_{i,k}) = \sum_{k=1}^\ell y_{i,k} - \sum_{k=1}^\ell x_{i,k} \geq 0$; here the final inequality follows from the fact that $\alloa \sdpgeq_i \allob$.

    Now, by induction on $j$, we will show that Equation \ref{equation:prop-sd-utility0} holds for all $j \in [\ell]$. For the base case of the induction, note that Equation \ref{equation:prop-sd-utility0} trivially holds for $j=1$. As part of the induction step, we will show that assuming Equation \ref{equation:prop-sd-utility0} holds for some $j < \ell$ implies that Equation \ref{equation:prop-sd-utility0} holds for $j+1$ as well. To this end, consider the following

    \begin{align*}
        \sum_{k=1}^{j+1} (y_{i,k} - x_{i,k}) v_{i,k} & = \sum_{k=1}^j (y_{i,k} - x_{i,k}) v_{i,k} + (y_{i,j+1} - x_{i,j+1}) v_{i,j+1} \\
        & \geq v_{i,j} \sum_{k=1}^j (y_{i,k} - x_{i,k}) + (y_{i,j+1} - x_{i,j+1}) v_{i,j+1} \tag{Equation (\ref{equation:prop-sd-utility0}) holds for $j$}\\
        & \geq v_{i,j+1} \sum_{k=1}^j (y_{i,k} - x_{i,k}) + (y_{i,j+1} - x_{i,j+1}) v_{i,j+1} \tag{$v_{i,j} \geq v_{i,j+1}$}\\
        & = v_{i,j+1} \sum_{k=1}^{j+1} y_{i,k} - x_{i,k}.
    \end{align*}
    This implies that Inequality (\ref{equation:prop-sd-utility0}) holds for $j+1$ and completes the induction step, thereby establishing that $u_i(\alloa) \geq u_i(\allob)$.

    \emph{Proving $(2)$.} The proof of this part is very similar to the proof of the previous part. Hence, we will only highlight the point where the proofs differ. Since $\alloa \sdpgt_i \allob$, we know that there exists a $j \in [\ell]$ for which the strict inequality $\sum_{k=1}^j (y_{i,k} - x_{i,k}) > 0$ holds. Following the same inductive argument and additionally using the strict inequality $\sum_{k=1}^j (y_{i,k} - x_{i,k}) > 0$ in the inductive step establishes the strict inequality $u_i(\alloa) > u_i(\allob)$.

\end{proof}

\tightnessPropOne*
\begin{proof}
    Consider the following mechanism for $n=2$ agents: starting from the set of all items $\items$, agent $1$ computes her preference $\prefer_1$ based her valuation and the lexicographic tie-breaking. Let $j^* = \prefer_1(m)$, i.e., $j^*$ is agent $1$'s least favorite item in the ranking. The mechanism assigns $\items\setminus \{j^*\}$ to agent 1 and $\{j^*\}$ to agent 2. It is easy to see that this mechanism is truthful for agent $1$: she only wants to get rid of her least favorite item as per his true utility. And $v_{1, j^*} \leq v_{1,k}$ for all $j \in \items$. Since the mechanism is independent of agent $2$'s reporting, it is also truthful for agent $2$.

    Finally, note that the output allocation is $\mathrm{SD}$ efficient, because we cannot improve an agent's bundle without making the other agent's bundle worse. Specifically, in order to improve the bundle of agent $1$ wrt $\sdgt_1$ we must give item $j^*$ to her, which will make the bundle of agent $2$ strictly worse wrt $\sdgt_2$ since her bundle size would decrease. Similarly, to improve agent $2$'s bundle wrt $\sdgt_2$ we must allocate her another item, which also will make the bundle of agent $1$ worse wrt $\sdgt_1$. This implies that the described mechanism, which is not the serial dictatorship, is truthful and outputs $\mathrm{SD}$ efficient allocations.
\end{proof}

\bivaluedPref*
\begin{proof}
    This proposition essentially follows from the following well-known results from the literature. First, it is known that for binary additive preferences (i.e., $v_{i,j} \in \{0,x\}$ for some fixed $x>0$) there exists a deterministic DSIC mechanism that outputs allocations that are Pareto efficient as well as $\EFone$~\cite{halpern2020fair}. 
    
    Additionally, it is also known that for additive preferences where $0$ is not allowed as a valid report, i.e., $v_{i,j}>0$ (this includes as a special case, the bivalued additive preferences, satisfying $v_{i,j} \in \{a,b\}$ for some $a,b > 0$) there exists DSIC mechanisms --- other than serial dictatorship --- which output Pareto efficient allocations~\cite{amanatidis2017truthful}[Corollary 3.20].
\end{proof}

\section{Missing proofs from Section \ref{subsec:rr is bic}}\label{appendix:rr missing proofs}

\RRsdpoplus*
\begin{proof}
We first show that $\rrpass$ is $\mathrm{SD}^+$efficient. We let $A = (A_1,A_2 \ldots,A_n)$ be an allocation where $A_i$ is the bundle agent $i$ receives. In $\rrpass$, one item is allocated in each timestep. We let $A^t = (A^t_1, A^t_2 \ldots, A^t_n)$ denote the partial allocation of all the allocated items up to and including timestep $t$, where $A^t_i$ denotes the bundle of agent $i$ in $A^t$. We have $A^0_i = \emptyset$ 
and $A^m_i = A_i$ for all agent $i$. Assume, for contradiction that $A$ is not $\mathrm{SD}^+$efficient, i.e., there exists another allocation $B = (B_1, B_2, \ldots, B_n)$ such that 
\[B_j \sdpgeq A_j \text{ for all } j \quad \text{and}\quad B_i \sdpgt A_i \text{ for some }i.\]
We let $B_i^t = B_i \cap A_i^t$, and let $t^* = \min\{t : B_i^t \neq A_i^t \text{ for some }i\}$. Further, let $i^*$ be the agent with the turn to pick in $\rrpass$ at timestep $t^*$ and $\prefer^*$ be the preference order of agent $i^*$, and $\prefer^*(a)$ be the item agent $i^*$ picks at timestep $t^*$ for an index $a \geq 1$. Note that $B^t_{i^*} = A^t_{i^*}$ for all $t < t^*$, since $t^*$ is the first timestep where $A$ and $B$ start to differ.

Define $U \coloneqq \items \setminus \cup_{i=1}^n A_i^{t^*-1}$ (which is same as $\items \setminus \cup_{i=1}^n B_i^{t^*-1}$) to be the set of items that are unallocated after $t^*-1$ timesteps. To arrive at a contradiction, we will first note that the item $\prefer^*(a) \in \items_{i^*}^+$. Otherwise, if $\prefer^*(a) \notin \items_{i^*}^+$, then all the items $j \in U$ must also satisfy $j \notin \cup_{i=1}^n \items_i^+$ since $\rrpass$ will allocate a zero-valued item to an agent $i^*$ only if it is zero-valued by every other agent. However, if all the unallocated items after timestep $t^*-1$ are zero-valued by everyone, and $A_i^{t^*-1} = B_i^{t^*-1}$ for agents $i$, then allocation $B$ cannot stochastically and positively dominate $A$ since each agent's utility will be the same in both the allocations. Therefore, it must be the case that $\prefer^*(a) \in \items_{i^*}^+$, i.e., $a \leq m_{i^*} = |\items_{i^*}^+|$.

To complete the proof, we will now show that $\prefer^*(a) \in \items_{i^*}^+$ implies that $B$ cannot stochastically and positively dominate $A$, contradicting our initial assumption. Recall that, $\rrpass$ constructs the preference order $\prefer^*$ lexicographically, for any $k < a$, we have $B_{i^*, \prefer^*(k)} = A_{i^*,\prefer^*(k)}$. Additionally, since $t^*$ is the first timestep where $A$ and $B$ differ, $A_{i^*, \prefer^*(a)} = 1$ and $B_{i^*, \prefer^*(a)} = 0$ as well. We, therefore, have:
\begin{align*}
    \sum_{\ell = 1}^a B_{i^*, \prefer^*(\ell)} & = \sum_{\ell = 1}^{a-1} B_{i^*, \prefer^*(\ell)}= \sum_{\ell = 1}^{a-1} A_{i^*, \prefer^*(\ell)}\\
    &< \sum_{\ell = 1}^{a-1} A_{i^*, \prefer^*(\ell)} + A_{i^*,\prefer^*(a)} \tag{$A_{i^*,\prefer^*(a)} = 1$}\\
    & =  \sum_{\ell = 1}^{a}A_{i^*, \prefer^*(\ell)}.
\end{align*}
However, note that $\sum_{\ell = 1}^a B_{i^*, \prefer^*(\ell)} < \sum_{\ell = 1}^{a}A_{i^*, \prefer^*(\ell)}$ and the fact that $a \leq m_{i^*}$ contradicts with the assumption that $B$ stochastically and positively dominates $A$. Thus, $A$ must have been $\mathrm{SD}^+$efficient.

For completeness, we will now show that $\rrpass$ is EF1; this argument is the same which is used to show that Round-Robin outputs EF1 allocations. Recall that $\rrpass$ considers agents in a fixed order $\pi$. First, note that agent $1$ in this order does not envy anyone since she always picks first. Similarly, for any $i \in [2,n-1]$, agent $i$ does not envy agents $i+1$ to agent $n$. Now consider agent $i-1$. We will break $\rrpass$ into two stages. In the first stage, each of the first $i-1$ agents receives 1 item, and in the second stage, the rest of $\rrpass$ takes place. We denote the bundle that agent $i$ receives in the second stage of the mechanism as $X_i^2$. Note that agent $i$ is considered first in the second stage of the mechanism, and by the same argument, she does not envy any agent. In particular, for any agent $j \in [i-1]$, we have:
\[v_i(X^2_i) \geq v_i(X_j^2).\]
Note that in the first stage of the mechanism, agent $i$ did not receive any item and agent $j$ received exactly one item. Therefore, we have for any $j \in [i-1]$:
\[v_i(X_i) = v_i(X_i^2) > v_1(X_j^2) \geq v_1(X_j^2 \setminus \{g_j\}),\]
where $g_j$ is the item that agent $j$ receives in the first stage of $\rrpass$.
\end{proof}

\section{Separation between $\PO$, $\sdpoplus$, and $\sdpo$}\label{appendix:efficinecy_examples}
In this section, we will show that the implications established in Lemma \ref{lemma:efficiency_comparison} $(1)$ and $(2)$ are tight, i.e., the other direction of the implications is not true. Towards this, first, we will describe an allocation that is $\mathrm{SD}^+$efficient but not Pareto efficient. Consider an instance with $n=2$ agents and $m=4$ items such that the valuations of the agents are $\val_1 = (5,4,3,2)$ and $\val_2 = (6,1,2,3)$ respectively. Let $\allo = (\allo_1, \allo_2)$ be an allocation wherein agent $1$ receives the first two items and the rest of the items are allocated to agent $2$, i.e., $x_{1,1} = x_{1,2} = x_{2,3} = x_{2,4} = 1$. We will show that $\allo$ is $\mathrm{SD}^+$efficient but not Pareto efficient. The allocation $\allo$ is $\mathrm{SD}^+$efficient because, in order to improve the bundle of agent $1$ wrt $\sdpgt_1$ we must give her another item (she already has her top two items in $\allo$), which will make the bundle of agent $2$ worse wrt $\sdpgt_2$ (since agent $2$ will be left with only one item). Similarly, to improve agent $2$'s bundle wrt $\sdpgt_2$, we must either $(i)$ give her at least three items, or $(ii)$ give her items $1$ and $4$ --- top-2 items as per $\val_2$. Both $(i)$ and $(ii)$ will make the bundle of agent $1$ worse. Thus, $\allo$ is $\mathrm{SD}^+$efficient. Now, to show that $\allo$ is not Pareto efficient consider the allocation $\alloy = (\alloy_1, \alloy_2)$ where agent $2$ gets only item $1$, i.e., $y_{2,1} = y_{1,2} = y_{1,3} = y_{1,4} = 1$. Note that $u_1(\allo) = u_1(\alloy) = 9$, while $u_2(\allo) = 5$ and $u_2(\alloy) = 6$ implying $\alloy$ Pareto dominates $\allo$. Thus, $\allo$ is $\mathrm{SD}^+$efficient but not Pareto efficient.

Finally, we describe an allocation that is $\mathrm{SD}$ efficient but not $\mathrm{SD}^+$efficient. Consider an instance with $n=2$ agents and $m=2$ items such that $\val_1 = (1,1)$ and $\val_2 = (1,0)$; note that $2 \notin \items_2^+$. Let $\allo' = (\allo'_1, \allo'_2)$ be an allocation such that $x'_{1,1} = x'_{2,2} = 1$. This allocation is not $\mathrm{SD}^+$efficient: transferring item $2$ from agent $2$ to agent $1$ doesn't affect agent $2$'s bundle wrt $\sdpgeq_2$, however, it improves agent $1$'s bundle wrt $\sdpgt_1$, i.e., the allocation resulting from this transfer stochastically and positively dominates $\allo$. Contrary to this, the allocation $\allo$ is $\mathrm{SD}$ efficient because in order to improve the bundle of agent $1$ wrt $\sdgt_1$ (or agent $2$ wrt $\sdgt_2$) we must necessarily decrease the bundle size of agent $2$ (agent $1$). This will make the bundle of agent $2$ (agent $1$) strictly worse, since as per the definition of $\sdgeq_i$, a decrease in bundle size necessarily makes the bundle strictly worse. Therefore, $\allo'$ is $\mathrm{SD}$ efficient but not $\mathrm{SD}^+$efficient.

\section{Missing proofs from Section \ref{section:cake-cutting}}\label{appendix:cake cutting proofs}

\IAprop*
\begin{proof}
We will show this via an inductive argument. In particular, we will prove that at the partial allocation $(X_1, X_2, \dots, X_i)$ at the end of iteration $i$ of the outer loop, is proportional for agents $1,2, \dots, i$ for all $i \in \{2, 3, \ldots, n\}$.

\textbf{Base case:} In the iteration $i=2$, agent $1$ splits the interval $[0,1]$ into 2 pieces $C_1^1$ and $C_1^2$ such that $\ft_1(C_1^1) = \ft_1(C^1_2) = 1/2$, where the final equality follows from the fact that $C_1^1 \cup C_1^2 = [0,1]$. Agent $2$ then picks the piece $C^{k^*}_1$ having a higher value wrt $\ft_2$, leaving the remaining piece for agent $1$. The resultant allocation is proportional since agent $1$'s value for both pieces is $1/2$, and agent $2$ gets to pick her favorite piece, implying that  $\ft_2(C^{k^*}_1) \geq \ft_2([0,1])/2 = 1/2$.

\textbf{Induction step:} For $i>2$, let $\alloi' = (X'_i, X'_2 \dots, X'_{i-1})$ denote the allocation before the arrival of agent $i$ (i.e., at the beginning of iteration $i$). Assuming that $\alloi'$ is proportional for agents $[i-1]$, we will now show that the allocation $\alloi = (X_1, X_2, \ldots, X_i)$ at the end of iteration $i$ will be proportional for agents $[i]$. First, we will show that $\alloi$ is proportional for agents $j \in [i-1]$. Towards this, note that $\alloi'$ is proportional for agents $j \in [i-1]$, i.e., for any agent $j \in [i-1]$, we have, $
\ft_j(X'_j) \geq \frac{1}{i-1}$. Additionally, from the definition of the Split-Equal subroutine, we know that  $\ft_j(C_j^k) = \frac{1}{i}\ft_j(X'_j)$  for all $k \in [i]$ and $j \in [i-1]$, where pieces $C_j^1, C_j^2, \ldots, C_j^i$ are obtained from the performing Split-Equal operation on inputs $(\ft_j, X'_j, i)$. On combining these two observations, we get that for all $j \in [i-1]$,

\[\ft_j(X_j) =  \ft_j(X'_j \setminus C_j^{k^*}) = \left(1-\frac{1}{i}\right)\ft_j(X_j) \geq \frac{i-1}{i}\cdot \frac{1}{i-1} = \frac{1}{i},\]


where $C_j^{k^*}$ is the piece picked by agent $i$ from agent $j$ in iteration $i$. This implies that $\alloi$ is proportional for agents $j \in [i-1]$. Finally, we consider agent $i$. Note that by definition of the mechanism, agent $i$ picks her favorite interval from each agent $j \in [i-1]$, this implies that,
\begin{align*}
    \ft_i(X_i) = \sum_{j = 1}^{i-1} \max_k \ft_i(C_j^k) & \geq \sum_{j = 1}^{i-1} \frac{1}{i}\sum_{k = 1}^i\ft_i(C_j^k) \tag{maximum is at least the average}\\
    & = \sum_{j = 1}^{i-1} \frac{1}{i} \ft_i(X'_j) \tag{$X'_j = \cup_{k=1}^{i} C_j^k$}\\
    & = \frac{1}{i} \ft_i([0,1]) = \frac{1}{i},
\end{align*}
where the penultimate inequality follows from the fact that $\cup_{j=1}^{i-1} X'_j = [0,1]$, i.e., $\alloi'$ is a complete allocation among agents $[i-1]$. This implies that $\alloi$ is proportional for agent $i$ as well. This completes the induction step and concludes the proof.
\end{proof}

\end{document}